\documentclass[a4paper,11pt]{article}
\usepackage{jheppub}
\usepackage{amsmath}
\usepackage{graphicx}
\usepackage[utf8]{inputenc}
\usepackage{xcolor}
\usepackage{amsthm}
\usepackage{float}
\usepackage{comment}
\usepackage{soul}
\usepackage{hyperref} 
\usepackage{subfigure}

\newtheorem{theorem}{Theorem}[section]
\newtheorem{lemma}[theorem]{Lemma}

\newtheorem{definition}[theorem]{Definition}

\newcommand{\jp}[1]{\textcolor{red}{#1}} 
\newcommand{\IK}[1]{\textcolor{blue}{#1}} 

	\title{The ghost in the radiation: Robust encodings of the black hole interior}

	\author[a,c]{Isaac Kim,}
    \author[b]{Eugene Tang,}
    \author[b]{and John Preskill}
	\affiliation[a]{Stanford Institute for Theoretical Physics, Stanford University, Stanford CA 94305, USA}
    \affiliation[b]{Institute for Quantum Information and Matter and Walter Burke Institute for Theoretical Physics, California Institute of Technology, Pasadena CA 91125, USA}
    \affiliation[c]{School of Physics, The University of Sydney, Sydney, Australia}
\emailAdd{isaac.kim@sydney.edu.au}
\emailAdd{eugene.tang@caltech.edu}
\emailAdd{preskill@caltech.edu}

\abstract{
We reconsider the black hole firewall puzzle, emphasizing that quantum error-correction, computational complexity, and pseudorandomness are crucial concepts for understanding the black hole interior. We assume that the Hawking radiation emitted by an old black hole is pseudorandom, meaning that it cannot be distinguished from a perfectly thermal state by any efficient quantum computation acting on the radiation alone. We then infer the existence of a subspace of the radiation system which we interpret as an encoding of the black hole interior. This encoded interior is entangled with the late outgoing Hawking quanta emitted by the old black hole, and is inaccessible to computationally bounded observers who are outside the black hole. Specifically, efficient operations acting on the radiation, those with quantum computational complexity polynomial in the entropy of the remaining black hole, commute with a complete set of logical operators acting on the encoded interior, up to corrections which are exponentially small in the entropy. Thus, under our pseudorandomness assumption, the black hole interior is well protected from exterior observers as long as the remaining black hole is macroscopic. On the other hand, if the radiation is not pseudorandom, an exterior observer may be able to create a firewall by applying a polynomial-time quantum computation to the radiation.}

\begin{document}
\maketitle
\flushbottom

\section{Introduction}
\label{sec:intro}
The discovery that black holes emit Hawking radiation raised deep puzzles about the quantum physics of black holes \cite{Hawking1975}. What happens to quantum information that falls into a black hole, if that black hole subsequently evaporates completely and disappears? Is the information lost forever, or does it escape in the radiation emitted by the black hole, albeit in a highly scrambled form that is difficult to decode? And if the information does escape, how? The struggle to definitively answer these questions has been a major theme of quantum gravity research during the 45 years since Hawking's pivotal discovery.

The AdS/CFT holographic correspondence provides powerful evidence indicating that quantum information really does escape from an evaporating black hole \cite{Maldacena1997}. This correspondence, for which there is now substantial evidence, asserts that the process in which a black hole forms and then completely evaporates in an asymptotically anti-de Sitter bulk spacetime admits a dual description in terms of a conformally-invariant quantum field theory living on the boundary of the spacetime. In this dual description, the system evolves unitarily and therefore the process is microscopically reversible --- on the boundary there is no gravity, no black hole, no place for information to hide. Since this observation applies to evaporating black holes that are small compared to the AdS curvature scale, it seems plausible that a similar conclusion should apply to more general spacetimes which are not asymptotically AdS, even though we currently lack a firm grasp of how quantum gravity works in that more general setting. 

However, so far the holographic correspondence has not provided a satisfying picture of the \textit{mechanism} that allows the information to escape from behind the black hole's event horizon. It is not even clear how the boundary theory encodes the experience of observers who cross the event horizon and visit the black hole interior. 

That describing the inside of a black hole raises subtle issues was emphasized in 2012 by the authors known as AMPS \cite{Almheiri2013}. Following AMPS, consider a black hole $H$ that is maximally entangled with another system $E$ which is outside the black hole, and suppose that $B$ is a thermally occupied Hawking radiation mode which is close to the horizon and moving radially outward. Since the black hole is maximally entangled with $E$, the highly mixed state of $B$ must be purified by a subsystem of $E$. But on the other hand, we expect that a freely falling observer who enters the black hole will not encounter any unexpected excitations at the moment of crossing the horizon; since field modes are highly entangled in the vacuum state, this means that $B$ should be purified by a mode $\tilde{B}$ located inside the black hole. Now we have a problem, because it is not possible for the mixed state of $B$ to be purified by both $E$ and $\tilde{B}$. Something has to give! Were we to break the entanglement between $B$ and $\tilde{B}$ for the sake of preserving the entanglement between $B$ and $E$, the infalling observer would encounter a seething firewall at the horizon. This conclusion is hard to swallow, since for a macroscopic black hole we would expect semiclassical theory to be trustworthy at the event horizon, and the black hole solution to the classical Einstein equation has a smooth horizon, not a firewall.

To find a way out of this quandary, it is helpful to contemplate the thermofield double (TFD) state of two boundary conformal field theories, which we'll refer to as the left and right boundary theories. The TFD is an entangled pure state of the left and right boundaries, with the property that the marginal state of the right boundary (with the left boundary traced out) is a thermal state with temperature $T$, and likewise the marginal state of the left boundary (with the right boundary traced out) is thermal with the same temperature. The corresponding bulk geometry is a two-sided black hole. Both the left black hole and the right black hole are in equilibrium with a radiation bath at temperature $T$, and both have smooth event horizons. Furthermore, the two black holes have a shared interior --- they are connected in the bulk by a non-traversable wormhole behind the horizon~\cite{Maldacena2003}. Here, the right black hole (let's call it $H$) is purified by another system (the left black hole $E$), and emits Hawking radiation, yet it has a smooth horizon. How can we reconcile this finding with the AMPS argument?

For the case of the two-sided black hole, there is an instructive answer \cite{ER=EPR}. The Hawking mode $B$ outside the right black hole can be purified by both $\tilde{B}$ behind the horizon and by a subsystem of $E$, because $E$ itself lies behind the horizon and $\tilde{B}$ is a subsystem of $E$! It is very tempting to suggest that a similar resolution of the AMPS puzzle applies to the case of a one-sided black hole $H$, which is entangled with a system $E$ outside its horizon. That is, we may regard the black hole interior and the exterior system entangled with the black hole as two complementary descriptions of one and the same system. Indeed, we might imagine allowing $E$ to undergo gravitational collapse, thereby obtaining a pair of entangled black holes, which, if we accept a conjecture formulated in \cite{ER=EPR}, would be connected through the bulk by a non-traversable wormhole. The boundary dual of this bulk state, up to a one-sided transformation acting on one of the two boundaries, is a TFD, to which our previous discussion of the entanglement structure of the two-sided black hole ought to apply.

The idea that, for the case of a black hole $H$ purified by the exterior system $E$, we may regard the black hole interior as related to $E$ by a complicated encoding map, has been advocated, discussed, and criticized in much previous work \cite{ER=EPR,Susskind2014,Papadodimas2013,Harlow2014,Bousso2014,Papadodimas2016}. We will revisit this issue in this paper, arguing that a proper resolution of the AMPS puzzle should invoke concepts that have received relatively short shrift in earlier discussions of the firewall problem, namely \textit{quantum error correction, computational complexity, and pseudorandomness}. 

The scenario described above, in which the black hole $H$ has become maximally entangled with the exterior system $E$, might arise because the black hole actually formed long ago, and since then has radiated away more than half of its initial entropy. In that case $E$ would be the Hawking radiation so far emitted during the black hole's lifetime, most of which is by now far away from the black hole. One could object that our proposal, that the black hole interior is related to $E$ by a complicated encoding map, is too wildly non-local to be credible \cite{Almheiri2013a,Bousso2014,Harlow2014}. Why can't an exterior agent who interacts with the Hawking radiation send instantaneous signals to the black hole interior in flagrant violation of causality? And why can't such an agent access the encoded system $\tilde{B}$, breaking the entanglement between $\tilde{B}$ and $B$ and hence creating excitations which can be detected by an observer who falls through the horizon? 

Our answer is that such non-local operations are in principle possible, but are not accessible to observers whose computational abilities are bounded (a notion we make precise in Section~\ref{sec:pr_decoupling}); the operations required to disturb the interior mode are far too complex to be realizable in practice for any realistic observer. Thus, in spite of the extreme non-locality of the encoding map, violations of the semiclassical causal structure of the black hole spacetime are beyond the reach of any realistic exterior observer. This statement is most conveniently expressed using the language of quantum error correction and computational complexity. We will use $|S|$ to denote the size of a physical system $S$; by size we mean the number of qubits, so that $2^{|S|}$ is the dimension of the Hilbert space of $S$. We regard $H$ as the Hilbert space of black hole microstates, and $E$ as the Hilbert space of the previously emitted radiation. For an old black hole $H$ which is nearly maximally entangled with $E$, we show that a quantum error-correcting code can be constructed, in a subspace of $EH$, which describes the black hole interior. The logical operators of this code, which preserve the code subspace, are operators acting on the interior. We will argue that a code exists with the following property: Any operation on the radiation $E$ that can be performed as a quantum computation whose size is polynomial in $|H|$ will commute with a set of logical operators of the code, up to corrections which are exponentially small in $|H|$. For this encoding, then, an observer outside a black hole can signal the interior only by performing an operation of super-polynomial complexity. Because the encoded interior is for all practical purposes invisible to the agent who roams the radiation system $E$, we call the code's logical operators \textit{ghost operators}.\footnote{The word ``ghost'' is sometimes used to describe unphysical degrees of freedom. That is not what we mean here. The ghost operators act on a system (the interior of a black hole) which is physical but \emph{inaccessible} to observers outside the black hole who have reasonable computational power.}

To reach this conclusion, we make a nontrivial but reasonable assumption --- that the radiation system $E$ is \textit{pseudorandom}. Note that if the state of $EH$ is pure, and $|H| \ll |E|$, then the density operator $\rho_E$ of $E$ is not full rank, so that $\rho_E$ is obviously distinguishable from the maximally mixed state $\sigma_E$. When we say that $\rho_E$ is pseudorandom, we mean that $\rho_E$ and $\sigma_E$ are not \textit{computationally} distinguishable. That is, suppose we receive a copy of $\rho_E$ (or even polynomially many copies) and we are asked to determine whether the state is maximally mixed or not using a quantum circuit whose size is polynomial in $|H|$. If $\rho_E$ is pseudorandom, then our probability of answering correctly exceeds $1/2$ by an amount which is exponentially small in $|H|$. Such pseuodrandom quantum states exist, 
and furthermore it has recently been shown \cite{Ji2018} that they can be prepared by efficient quantum circuits, if one accepts a standard (and widely believed) assumption of post-quantum cryptography: That there exist one-way functions which are hard to invert using a quantum computer. Since black holes are notoriously powerful scramblers of quantum information \cite{Sekino2008}, we think the assumption that $\rho_E$ is pseudorandom is plausible, though undeniably speculative. Our main technical result shows that if $\rho_E$ is pseudorandom, then a code with ghost logical operators must exist. 

That the existence of quantum-secure one-way functions implies the hardness of \textit{decoding} Hawking radiation had been pointed out earlier in \cite{Harlow2013} and \cite{Aaronson2016}. But our statement goes further --- it indicates that causality is well respected from the viewpoint of computationally bounded observers (as long as $|H|$ is large). The semi-classical causal structure of the evaporating black hole spacetime can be disrupted by an observer with sufficient computational power, but not by an observer whose actions can be faithfully modeled by a quantum circuit with size polynomial in $|H|$. On the other hand, interior observers, who in principle have access to $H$ as well $E$, could plausibly perform nontrivial operations on the interior which are beyond the reach of the computationally bounded observer who acts on $E$ alone. 

Our main result can be regarded as a contribution to the theory of quantum error correction in a nonstandard setting. In the context of fault-tolerant quantum computation, where the goal is to protect a quantum computer from noise due to uncontrolled interactions with the computer’s environment, we usually consider noise which is weak and only weakly correlated. For example, we might model the noise using a Hamiltonian describing the interactions of the computer and environment, where each term in the Hamiltonian is small and acts on only a few of the computer’s qubits. In our setting the ``computer’’ is the system $EH$, and the ``noise’’ results from the interactions of the computationally bounded observer with system $E$, while $H$ is regarded as noiseless. In contrast to conventional quantum error correction, we allow the  noise to be strong, highly correlated, and adversarially chosen, yet the logical system $\tilde{B}$ encoded in $EH$ is well protected against this noise. To obtain this result, though, it is essential that the noise acts only on $E$ and not on $H$, a departure from the usual model of fault tolerance in which all qubits are assumed to be noisy. 

We note that the encoding of the black hole interior in $EH$ is state dependent; that is, the way the system $\tilde{B}$ is embedded in $EH$ depends on the initial state that underwent gravitational collapse to form a black hole. This state dependence of the encoding has sparked much discussion and consternation \cite{Harlow2014,Marolf2016}. What seems troubling is that operators which depend on the state to which they are applied are not linear operators acting on Hilbert space, and therefore can not be regarded as observables as described in the conventional quantum theory of measurement. Our view is that the tension arising from the state dependence of the encoded operator algebra signals that we do not yet have a fully satisfactory way to describe measurements performed inside black holes. We will not rectify this shortcoming in this paper.

Our argument about the robustness of the ghost logical operators makes no direct use of AdS/CFT technology. This may be viewed as either a strength or a weakness. The strength is that our results may be applicable to black holes in spacetimes which are asymptotically flat or de Sitter, and stand independently of any assumptions of holography. The weakness is that we have not presented evidence based on holographic duality which supports our conjecture. 

There has been great recent progress toward resolving the discrepancy between Hawking's semiclassical analysis \cite{Hawking1975} and the Page curve~\cite{Hayden2019, Penington2019,Almheiri2019b} of an evaporating black hole, including formulas for the entropy of the radiation supported by explicit computations \cite{Almheiri2019a,Penington2019a,Almheiri2019}. 
These results strengthen the evidence that black hole evaporation is a unitary process, and also point toward a resolution of the firewall problem in which the interior of a partially evaporated black hole is encoded in the Hawking radiation. This beautiful prior work, however, does not directly address how the profoundly nonlocal encoding of the interior in the radiation is compatible with the semiclassical causal structure of the black hole geometry. It is for that purpose that we hope our observations concerning the pseudorandomness of the radiation and the construction of ghost logical operators acting on the interior will prove to be relevant. Our main conclusion is that the encoded interior can be inaccessible to observers outside the black hole who have reasonable computational power.  Establishing closer contact between our work and these recent computations is an important open problem.



The rest of this paper is structured as follows. In Section~\ref{sec:overview} we provide a non-technical summary of the paper. In Section~\ref{sec:classical-pseudo} and~\ref{sec:quantum-pseudo}, we review the notion of pseudorandomness in both the classical and quantum setting; 
in Section~\ref{sec:hawking-pseudo}, we argue that the Hawking radiation is a pseudorandom quantum state, and we explain in detail our computational model of the black hole.

In the remaining sections, we derive consequences of the pseudorandomness assumption, and explore their potential relevance to the black hole firewall problem. 
In Section~\ref{sec:pr_decoupling}, we show that it is computationally hard for an observer interacting with the early radiation $E$ to distill the interior mode $\tilde B$ and carry it into the black hole. In Section~\ref{sec:bh_qecc}, we show that the encoded system $\tilde B$ is protected against errors inflicted on $E$ by any agent who performs a quantum operation with $\mathrm{poly}(|H|)$ computational complexity and sufficiently small Kraus rank. In Section~\ref{sec:qec}, we describe the construction of ghost logical operators acting on the black hole interior; these operators commute with all low-complexity operations applied to $E$ by an agent $O$, provided that $O$'s quantum memory is not too large.
If the observable properties of the black hole interior are described by such ghost operators, we infer that the interior cannot be affected or detected by computationally bounded agents who interact with the Hawking radiation. The theory of ghost operators, which can be constructed for any approximate quantum error-correcting code, may also be of independent interest. In Section~\ref{sec:efficient-manipulation}, we show that, if the state of the partially evaporated black hole has been efficiently generated, then an agent with access to both $E$ and $H$ can manipulate the encoded interior efficiently, and efficiently distill the encoded system $\tilde B$ to a small quantum memory. Section~\ref{sec:conclusion} contains our conclusions. Some technicalities are treated in the Appendices, and in Appendix \ref{sec:no_pseudorandomness_consequences} we discuss via an example how the construction of ghost logical operators may fail if the Hawking radiation is not pseudorandom.


\section{Probing the radiation}\label{sec:overview}

In this section we'll provide a somewhat more explicit explanation of our main result, still skipping over technical details which will be laid out in later sections. The situation we consider is depicted in Figure~\ref{fig:setup}. There, the unitary transformation $U_{\text{bh}}$ describes the formation and subsequent partial evaporation of a black hole formed from infalling matter in a pure state $|\phi_{\mathrm{matter}}\rangle$, where $E$ denotes the ``early’’ Hawking radiation which has been emitted so far, $H$ denotes the remaining black hole which has not yet evaporated, and $B$ denotes Hawking quanta of the ``late’’ radiation which has just been emitted from the black hole. We may assume for convenience that $B$ is a single qubit --- our conclusions would be the same if we considered $B$ to be any system of constant dimension, independent of the size of $E$ and $H$. The system $P$ denotes an ancillary system called the ``probe", which might represent, for example, ambient dust around the black hole. We will discuss the role of the probe in greater detail shortly, but for simplicity we may ignore its presence right now.

\begin{figure}[ht]
    \centering
    \includegraphics[width=0.4\columnwidth]{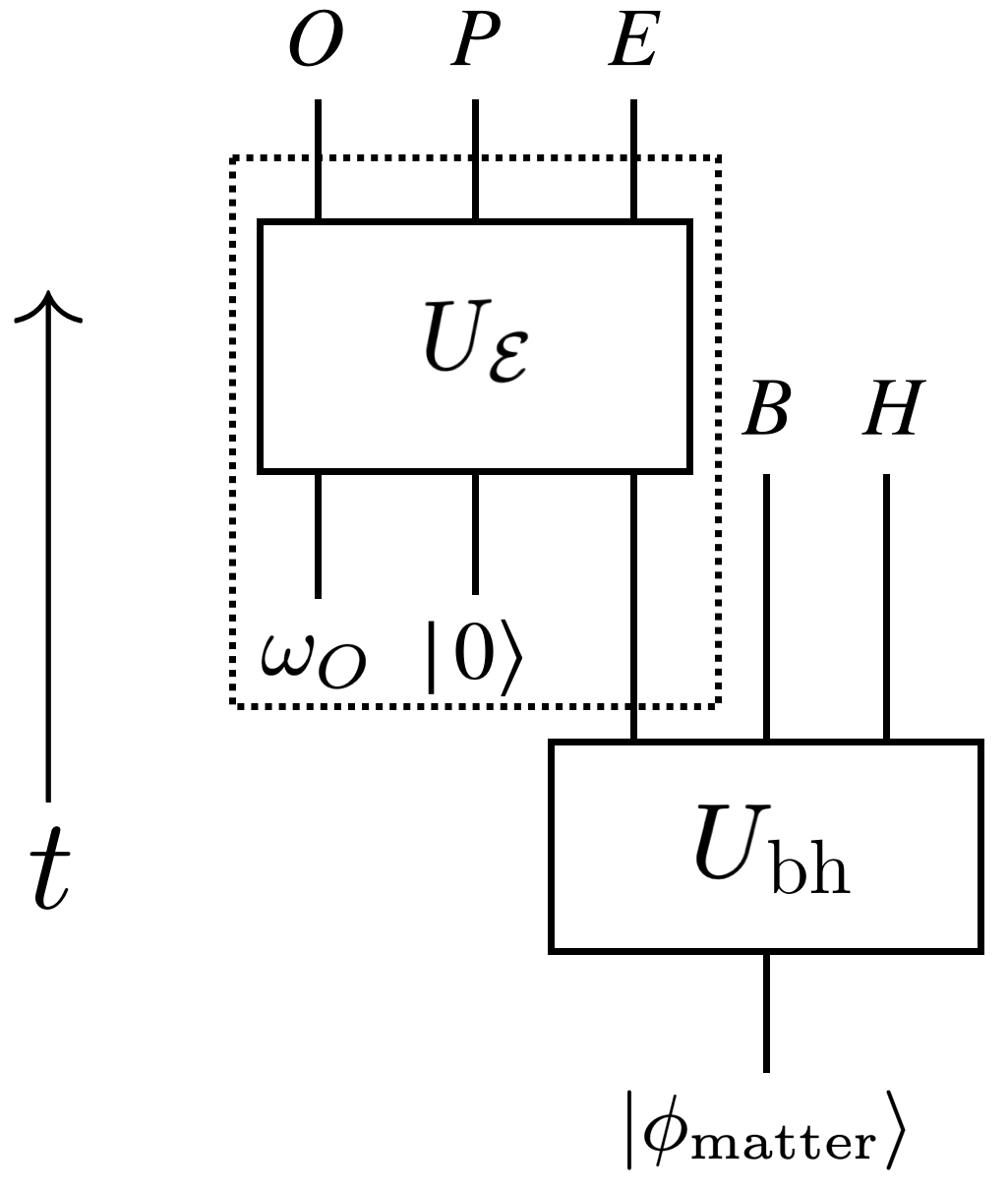}
    \caption{
    A black hole forms due to the gravitational collapse of an infalling state of matter. The black hole then evaporates for a while, emitting the ``early'' radiation $E$ and the ``late'' radiation $B$; the formation and partial evaporation of the black hole are described by the unitary transformation $U_{\text{bh}}$. An observer $O$ interacts with the early radiation and a probe system $P$, where the unitary transformation $U_{\mathcal{E}}$ (enclosed by the dotted line) has quantum complexity which scales polynomially with the size $|H|$ of the remaining black hole (essentially its entropy $S_{\text{bh}}$). If the radiation is pseudorandom, then $O$ is unable to distinguish $E$ from a perfectly thermal state.
    }
    \label{fig:setup}
\end{figure}

In the case of an ``old’’ black hole $H$, which has already radiated away over half of its initial entropy and has become nearly maximally entangled with $E$, we have $|H| < |E|$. Because the lifetime of an evaporating black hole scales like the $3/2$ power of its initial system size, we may regard the unitary transformation $U_{\text{bh}}$ to be ``efficient,'' meaning that it can be accurately described by a quantum circuit whose size increases only polynomially with $|EHB|$. Our key assumption is that the efficient unitary $U_{\text{bh}}$ creates a pseudorandom state of $EB$ (see Section~\ref{sec:hawking-pseudo}). The notion of a pseudorandom quantum state will be further discussed in Section~\ref{sec:quantum-pseudo}.

In the context of the AMPS puzzle, the recently emitted system $B$ should be purified by a system $\tilde{B}$ behind the horizon. We will explore the idea that this system $\tilde{B}$ is actually encoded in $EH$, the union of the black hole system $H$ and the early radiation system $E$. 
Let us denote the state prepared by the unitary map $U_{\text{bh}}$ as $|\Psi\rangle_{EHB}$, and consider its expansion
\begin{align}
    |\Psi\rangle_{EHB} = \sum_{ijk} \Psi_{ijk} ~|i\rangle_E\otimes |j\rangle_H \otimes|k\rangle_B.
\end{align}
There is a corresponding map $V_\Psi: \tilde{B}\rightarrow EH$ defined by
\begin{align}
    V_\Psi = \sqrt{d_B} \sum_{ijk} \Psi_{ijk}|i\rangle_E \otimes|j\rangle_H\otimes\langle k|_{\tilde{B}},\label{eq:state_dependent_isometry}
\end{align}
where $d_B$ is the dimension of $B$. If $B$ is maximally mixed in the state $|\Psi\rangle$, then $V_\Psi$ is an isometric map embedding $\tilde{B}$ in $EH$. We interpret $V_\Psi$ as the encoding map of a quantum error-correcting code, which maps the interior system $\tilde{B}$ to the subspace of $EH$ with which $B$ is maximally entangled. 

If $\tilde T_{\tilde{B}}$ is any operator acting on $\tilde{B}$, there is a corresponding ``logical'' operator ${T}_{EH}$ acting on $EH$ defined by
\begin{align}\label{eq:tilde-X-define}
    V_\Psi \tilde T_{\tilde{B}} = {T}_{EH} V_\Psi.
\end{align}
This logical operator is not uniquely defined, because equation~(\ref{eq:tilde-X-define}) only specifies its action on the code space, the image of $V_\Psi$. We may say that ${T}_{EH}$ is the ``mirror operator'' of $\tilde T_{\tilde{B}}$ determined by $|\Psi\rangle$, whose defining property is that ${T}_{EH}$ and $\tilde T_{\tilde{B}}$ produce the same output when acting on the state $|\Psi\rangle$.


We wish to investigate whether an agent who interacts with only the radiation system $E$ can manipulate the encoded system $\tilde{B}$. For that purpose we introduce an additional system $O$ to represent an observer outside the black hole who interacts with $E$. This interaction is modeled by a unitary transformation $U_{\mathcal{E}}$ acting on $OE$, possibly followed by a simple measurement performed on $O$; for example one might measure all the qubits of $O$ in a standard basis). The unitary transformation, but not the following measurement, is shown in Figure~\ref{fig:setup}.
After the interaction, but before $O$ is measured, the joint state of $OEBH$ has evolved to 
\begin{align}
    |\Psi'\rangle_{OEBH} = \left(\left(U_{\mathcal{E}}\right)_{OE}\otimes I_{BH}\right) \left(|\omega\rangle_O \otimes |\Psi\rangle_{EBH}\right),
\end{align}
where $|\omega\rangle_O$ is the initial state of $O$ before $O$ and $E$ interact. 

Our notation $U_{\mathcal{E}}$ for the unitary transformation is motivated by a widely used convention in the theory of quantum channels, in which $\mathcal{E}$ denotes a quantum noisy channel (a trace-preserving completely positive map), with the letter $\mathcal{E}$ indicating an ``error'' acting on the input to the channel. A quantum channel always admits a dilation (also called a purification), a unitary transformation which acts on the input system and an ``environment,'' after which the environment is discarded. In our context, the noisy channel $\mathcal{E}$ acting on $E$ arises from the action of the observer, and we may regard the observer's system $O$ as the environment in the dilation of $U_{\mathcal{E}}$. In the following discussion, we will often omit the subscript $OE$ on $\left(U_\mathcal{E}\right)_{OE}$, leaving it implicit that $U_{\mathcal{E}}$ acts on the radiation system $E$ and observer $O$.

Now we can appeal to a standard result in the theory of quantum error correction. In $|\Psi\rangle_{EHB}$ we regard $B$ as a ``reference system'' which purifies the maximally mixed state of the encoded system $\tilde{B}$.
Is there a recovery operator which can be applied to $EH$ to correct the error induced by this noisy channel? In fact a recovery operator that corrects the error \textit{exactly} exists if and only if the marginal state $\rho'_{OB}$ of $OB$ factorizes,
\begin{align}
    \rho'_{OB} = \rho'_O\otimes \rho'_B,
\end{align}
in which case we say the reference system $B$ ``decouples'' from the environment $O$. Heuristically, the error can be corrected if and only if no information about the state of $\tilde B$ leaks to the environment $O$. There is also an approximate version of this statement~\cite{BenyOreshkov10,Flammia2017limitsstorageof}. Roughly speaking (we will be more precise in Section~\ref{sec:pr_decoupling}), recovery with fidelity close to one is possible if and only if $O$ and $B$ are nearly uncorrelated after $O$ and $E$ interact.

Now consider the implications of our assumption that the Hawking radiation is pseuodorandom. As stated in Section~\ref{sec:intro}, the marginal state $\rho_{EB}$ is pseudorandom if $\rho_{EB}$ cannot be distinguished from a maximally mixed state by any circuit with size polynomial in $|H|$, apart from an error exponentially small in $|H|$. We will show in Section~\ref{sec:pr_decoupling} that, assuming $|O|\ll |H|$, if $\rho_{EB}$ is pseudorandom and $U_{\mathcal{E}}$ is any polynomial-size unitary transformation, then $O$ and $B$ approximately decouple up to an error exponentially small in $|H|$. Therefore, apart from an exponentially small error, a computationally bounded observer $O$ is unable to inflict an uncorrectable error on the encoded system $\tilde{B}$.

We can make a stronger assertion: It is possible to choose the logical operators acting on the encoded system to be robust \textit{ghost operators}, which (acting on the code space) nearly commute with any operation applied by the computationally bounded observer $O$. Returning now for simplicity to the setting of exact correctability, we claim that if the error induced by $U_{\mathcal{E}}$ is correctable, then for any operator $\tilde T_{\tilde{B}}$ acting on system $\tilde{B}$, it is possible to choose the corresponding logical operator ${T}_{EH}$ satisfying equation~(\ref{eq:tilde-X-define}) such that
\begin{align}
    {T}_{EH} U_{\mathcal{E}}\left(I_O\otimes  V_{\Psi}\right) =  U_{\mathcal{E}} {T}_{EH}\left(I_O\otimes V_{\Psi}\right).
\end{align}
In this sense, the correctable errors have no effect on the ghost logical algebra. This claim is a special case of a more general statement about operator algebra quantum error correction (OAQEC)~\cite{oaqec1,oaqec2}. Since we do not expect a black hole to provide an exact error-correcting code, we will need to analyze the case of approximate quantum error correction. Unfortunately, it does not seem straightforward to generalize the results of~\cite{oaqec1,oaqec2} to the approximate setting. Instead, we present a self-contained construction of exact ghost logical operators in Section~\ref{sec:exactghost}, without making direct use of known results from the theory of OAQEC, and then generalize the construction to the approximate setting in Section~\ref{sec:apxghost}.


We will apply the approximate version of this result to the situation where $U_{\mathcal{E}}$ induces an approximately correctable error, thus inferring that the logical operators acting on the encoded system $\tilde{B}$ may be chosen so that they nearly commute with the actions of the computationally bounded observer $O$. We propose that these robust ghost operators are the logical operators acting on the black hole interior, and conclude that the interior is very well protected against the actions of any realistic observer who resides outside the black hole.

The statements about the indistinguishability of $\rho_{EB}$ from a maximally mixed state, the  decoupling of $O$ from $B$, the correctability of $U_{\mathcal{E}}$, and the commuting action of $\tilde{T}_{EH}$ and $U_{\mathcal{E}}$ on the code space, are all approximate relations with exponentially small corrections. Therefore, we need to be mindful of these corrections in constructing our arguments. Fortunately, many relevant features of approximate quantum error-correction have been previously studied, and we make use of results from \cite{BenyOreshkov10,Flammia2017limitsstorageof} in particular. 

For the general argument sketched above we have assumed that the observer system satisfies $|O| \ll |H|$. But it is also instructive to consider a different scenario, in which the observer has access to an auxiliary probe system $P$. We now imagine that the probe $P$, which might have a size comparable to or larger than $E$, is prepared in a simple initial state and then interacts efficiently with $E$. After this interaction between $E$ and $P$, the observer (still satisfying $|O| \ll |H|$), interacts with $EP$, performing an efficient quantum computation that may be chosen adversarially. In this case, too, we can show under the same pseudorandomness assumption as before that the reference system $B$ decouples from $O$, and that robust ghost logical operators can be constructed. For example, the probe might cause all of the qubits of $E$ to dephase in a preferred basis, but the entanglement between $\tilde{B}$ and $B$ would still be protected. The modification from the previously considered case is that now $\tilde{B}$ will be encoded in $EHP$ rather than $EH$, and we conclude that the encoded black hole interior remains inaccessible to any computationally bounded observer $O$ who examines the radiation and the probe, as long as the size of the observer's memory satisfies $|O|\ll |H|$.

Our conclusion that $\tilde{B}$ is difficult to decode or manipulate follows from the pseudorandomness of the Hawking radiation if the observer is computationally bounded and has access only to the radiation system $E$ outside the black hole. But we might imagine that an observer who jumps into the black hole has access to the black hole degrees of freedom $H$ as well as $E$. We show in Section~\ref{sec:efficient-manipulation} that an observer who has access to $EH$ can efficiently manipulate and decode $\tilde B$, assuming only that the state $|\Psi\rangle_{EBH}$ was created by an efficient unitary process. In this sense, an interior observer can interact with the interior degrees of freedom, as one might expect. A similar remark applies to the fully evaporated black hole. If the final state after complete evaporation is a highly scrambled pure state of $EB$, where $|B|\ll |E|$, then the maximally mixed state of $B$ is purified by a code subspace of $E$. If $B$ has constant size, then the code state can be efficiently distilled and deposited in a small quantum memory, assuming only that the map from the infalling matter to the outgoing Hawking radiation is an efficient unitary process.

If an efficient measurement of $EB$ \emph{can} detect the correlation between $E$ and $B$, then we may expect that an observer acting on $E$ is able to interact efficiently with the black hole interior. In Appendix~\ref{sec:no_pseudorandomness_consequences}, we show that, if a product observable $M_E\otimes N_B$ has an expectation value in the state $|\Psi\rangle_{EHB}$ that differs significantly from its expectation value in a maximally mixed state of $EB$, then there cannot be a complete set of ghost logical operators on $EH$ commuting with $M_E$. It follows that, if $M_E$ can be realized efficiently, low-complexity operations acting on the Hawking radiation \emph{can} send a signal to the interior.

\section{Classical pseudorandomness}
\label{sec:classical-pseudo}
Our argument that the black hole interior is inaccessible to computationally bounded exterior observers hinges on the hypothesis that the Hawking radiation emitted by an old black hole is pseudorandom. In this section we'll provide background about the concept of pseudorandomness, which some readers might find helpful.

As discussed in Section~\ref{sec:intro}, we are interested in a black hole that is still macroscopic but has already been evaporating for longer than its Page time~\cite{Page1993}. The state of the previously emitted radiation system $EB$ is purified by the black hole system $H$, and by this time 
$EB$ is much larger than $H$; therefore the microscopic state $\rho_{EB}$ of $EB$ has far lower rank than a thermal state. It must then be possible, at least in principle, to distinguish $\rho_{EB}$ from a thermal state. But how, operationally, would an observer outside the black hole who interacts with the radiation be able to tell the difference?

To start with, it will be instructive to consider a simple classical model that captures some of the features of this setup --- after we understand how the classical model works we'll be better prepared to analyze an analogous quantum model.  Let's suppose that the emitted Hawking radiation is a classical bit string $x$ of length $n$, which our observer is permitted to read. But this bit string is not chosen deterministically; rather, when the observer reads the radiation he actually samples from a probability distribution governing $n$-bit strings. We'll say that the state of the black hole is ``thermal'' if this distribution is the uniformly random distribution $p_I(x)$, where
\begin{equation}
    p_I(x) = \frac{1}{2^n}, \qquad \forall x \in \{0,1 \}^n.
\end{equation}
But suppose the state of the black hole is described by a distribution that is in principle almost perfectly distinguishable from the uniform distribution. Can this state ``fool'' the observer, leading him to believe the distribution is uniform even though that is far from the case? See Figure~\ref{fig:imitation}.



To be more concrete, let's suppose the observer is assured that he is sampling from a distribution which is either the uniform distribution $p_I(x)$, or a different distribution $p_S(x)$ which is uniform on the subset of $n$-bit strings $S$:
\begin{equation}
    p_S(x) =
    \begin{cases}
    2^{-\alpha n}, \quad & x \in S \\
         0, \quad & x \not\in S
         \end{cases}\ ,
\end{equation}
where $|S| = 2^{\alpha n}$ (with $0<\alpha < 1$) is the number of strings contained in $S$. The observer samples once from the distribution, receiving $x$, and then executes a classical circuit $C$ with $x$ as an input, finally producing either the output $1$ if he guesses that the distribution is $p_S$, or the output $0$ if he guesses that the distribution is $p_I$.\footnote{The conclusion we reach below would not change much if he were permitted to sample from the distribution a number of times polynomial in $n$.}

\begin{figure}
    \centering
   \includegraphics[width=0.7\columnwidth]{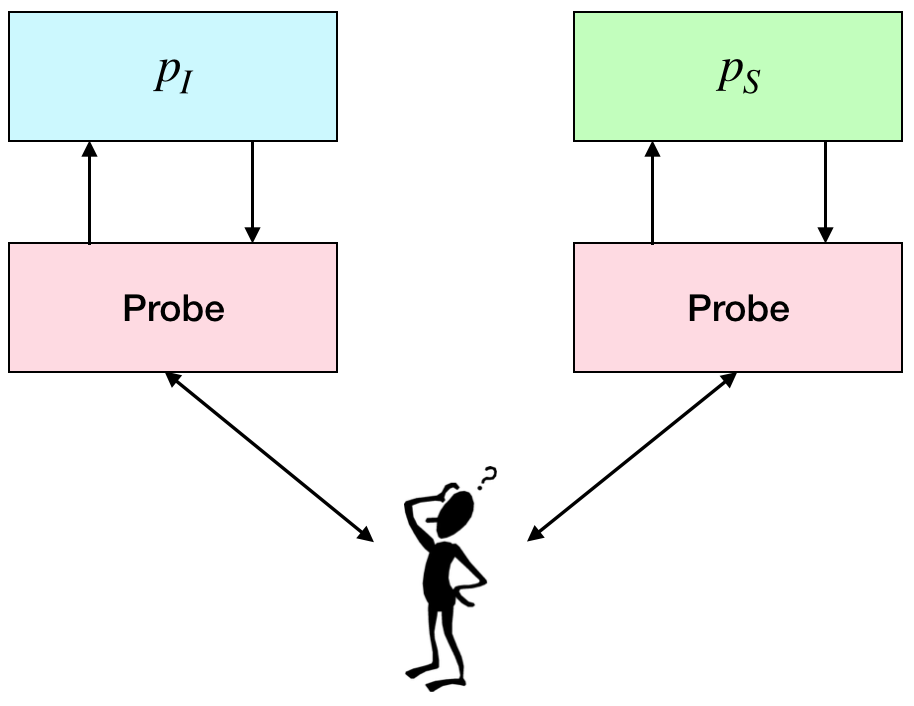}
    \caption{An observer samples from a distribution and attempts to decide whether the distribution is uniformly random or not.}
   \label{fig:imitation}
\end{figure}

For $n$ large, it is clear that $C$ can be chosen so that the observer guesses correctly with a high success probability. Suppose, for example, that he outputs $1$ if $x\in S$ and he outputs $0$ if $x\not\in S$. It the distribution is actually $p_S$, this guess is correct with probability $1$. If the distribution is actually $p_I$, then the guess is correct unless $x$ happens to lie in $S$ ``by accident,'' which occurs with probability $2^{-(1-\alpha) n}$. Therefore, for fixed $\alpha$ and large $n$, the probability of an incorrect guess is exponentially small in $n$.


However, depending on the structure of the set $S$, the circuit $C$ that distinguishes $p_S$ and $p_I$  might need to be quite complex, making this strategy impractical if the observer has limited computational power. Suppose, for example, that the observer is unable to perform a computation with more than $\Lambda(n)$ gates, where $\Lambda(n)$ grows subexponentially with $n$ --- that is, $\Lambda(n) \leq \exp(f(n))$ where $f(n)$ scales sublinearly with $n$. Then we can show that the set $S$ can be chosen such that this computationally bounded observer has only an exponentially small chance of distinguishing $p_S$ and $p_I$; that is, his probability of guessing the distribution correctly is no better than $1/2 + 2^{-cn}$, where $c$ is a positive constant. In that case, we say that the distribution $p_S$ is \textit{pseudorandom}. 



To show that such a pseudorandom distribution $p_S$ exists, we argue in two steps. In the first step, we consider some fixed circuit $C$, and denote by $L_C$ the set of $n$-bit input strings for which $C$ outputs 1 (we say that $C$ ``accepts'' the strings in $L_C$). If the input $x$ is chosen by sampling from $p_I(x)$, then $C$ accepts $x$ with probability
\begin{equation}
    P_C(I) = \frac{|L_C|}{2^n}, 
\end{equation}
while if $x$ is chosen by sampling from $p_S(x)$, then $C$ accepts $x$ with probability
\begin{equation*}
    P_C(S) = \frac{|S\cap L_C|}{2^{\alpha n}}.
\end{equation*}
Now suppose that $S$ is chosen randomly from among all subsets of $n$-bit strings with cardinality $|S|=2^{\alpha n}$. We can envision the possible strings as $2^n$ balls, of which the balls accepted by $C$ are colored white, and the balls rejected by $C$ are colored black, while $S$ is a random sample containing $|S|$ of these balls. Suppose that the white balls constitute a fraction $f$ of all the balls. Then, for $n$ large, we expect that $S$ also contains a fraction of white balls which is close to $f$. This intuition can be made precise using Hoeffding's inequality, from which we derive
\begin{equation}
    \Pr\left(|P_C(S)-P_C(I)|\geq \epsilon\right) \leq e^{-2|S|\epsilon^2},\label{eq:hoeffding}
\end{equation}
where the probability is evaluated for the uniform distribution over all subsets with $|S|$ elements. Now we can choose $\epsilon$ to have the exponentially small value $\epsilon =|S|^{-1/4}$ (for example) to see that, if $S$ is sampled uniformly with $|S|$ fixed, the probability that $C$ accepts a sample from $p_S$ is exponentially close to the probability that $C$ accepts a sample from $p_I$. We conclude that, not only is it possible to choose the subset $S$ such that the fixed circuit $C$ can barely distinguish $p_S$ from $p_I$, but furthermore most choices for $S$ with $|S|=2^{\alpha n}$ have this property.

We have now completed the first step in our two-step argument. But so far we have only shown that $S$ can be chosen such that $p_S$ and $p_I$ are hard to distinguish for one fixed circuit $C$. We wish to make a much stronger claim, that there is a choice for $S$ such that $p_S$ and $p_I$ are nearly indistinguishable by \textit{any} circuit with a number of gates subexponential in $n$.

To prove this stronger claim we proceed with the second step in the argument. For a collection of circuits $\{C_1,\ldots, C_N\}$, what is the probability that $|P_{C_i}(S)- P_{C_i}(I)|\geq \epsilon$, for at least one $i$? An upper bound on this probability follows from the union bound, which asserts that 
\begin{equation}
    P(A_1\cup \cdots \cup A_N) \leq \sum_{i=1}^N  P(A_i),
\end{equation}
where $\{A_1, A_2, \dots, A_N\}$ is any set of events. Using equation \eqref{eq:hoeffding}, we conclude that the probability that at least one of the $N$ circuits distinguishes $p_S$ from $p_I$ with probability at least $\epsilon$ is no larger than $Ne^{-2|S| \epsilon^2}$.

How many possible circuits are there which act on the $n$-bit input $x$ and contain $m$ computation steps? In each step of the computation, we either input one of the bits of $x$ or we execute a gate which is chosen from a set of $G$ possible gates, where $G$ is a constant. Our claim will hold if each gate in $G$ has a constant number of input and output bits, so for simplicity let's assume that each gate has at most two input bits and generates a single output bit (like a NAND gate for example). Each two-bit gate acts on a pair of bits which are outputs from previous gates; this pair can be chosen in fewer than $m^2$ ways. Therefore, the total number $N(m)$ of size-$m$ circuits can be bounded as
\begin{align}
    N(m) \le \left((n+G)m^2\right)^m, 
\end{align}
which implies
\begin{align}
\log N(m) \le m\left(2 \log m + \log(n+G)  \right).
\end{align}

Even if we choose an exponentially large circuit size $m = 2^{\gamma n}$ and an exponentially small error $\epsilon = 2^{-\delta n}$, we find that $N(m) e^{-2|S|\epsilon^2}$ is doubly exponentially small in $n$ for $|S| = 2^{\alpha n}$ and $\alpha > \gamma + 2 \delta$. Hence, if $S$ is randomly chosen, it's extremely likely that the distributions $p_S$ and $p_I$ are indistinguishable by circuits of size $2^{\gamma n}$, up to an exponentially small error.


To summarize, we've shown that the set $S$ can be chosen so that the probability distribution $p_S$ has these properties: 
\begin{enumerate}
    \item 
    Its entropy per bit $\alpha$ is a positive constant less than 1.
    \item It is statistically distinguishable from $p_I$ with an exponentially small failure probability.
    \item If $\gamma < \alpha$, any circuit using at most $2^{\gamma n}$ gates almost always fails to distinguish $p_S$ and $p_I$. 
\end{enumerate}
If $n$ is macroscopic, the task of distinguishing $p_S$ from $p_I$ can be absurdly difficult, even if the entropy density $\alpha$ is quite small. Suppose, for example, that $n=10^{23}$ is comparable to Avogadro's number, and $\alpha = 10^{-12}$. Choosing $\gamma = \delta = 10^{-13}$ we conclude that a circuit with $m= 2^{10^{10}}$ gates can distinguish $p_S$ from $p_I$ with a success probability no larger than $\epsilon = 2^{-10^{10}}$. Even if we could perform one gate per unit of Planck time and Planck volume, an unimaginably large spacetime region would be required to execute so large a circuit. 

The existence of pseudorandom distributions was first suggested by Yao~\cite{Yao1982}, and the construction we have described was discussed by Goldreich and Krawcyzk~\cite{Goldreich1989}. 
In our analysis we assumed that the observer executes a deterministic circuit, but it turns out that giving the observer access to a random number generator does not make his task any easier~\cite{Goldreich1989}.




Up until now, we assumed that the observer performs a computation whose output is a single bit. But what if he obtains a $k$-bit output instead? Can we choose the set $S$ so that for all circuits of bounded size the probability distribution governing the $k$ output bits is very similar for input strings drawn from $p_S$ and $p_I$?
Our previous reasoning does not have to be modified much to handle this case. Now for the fixed circuit $C$, we denote by $L_C[y]$ the set of $n$-bit input strings for which $C$ outputs the $k$-bit string $y$, and we denote by $P_C(I)[y]$, $P_C(S)[y]$ the probability that $C$ outputs $y$ when receiving as input a sample from $p_I$, $p_S$ respectively. Now we envision the $n$-bit input strings as balls which can be colored in $2^k$ possible ways, corresponding to the $2^k$ possible values of the output $y$. Applying the previous argument to each color, we find that when $S$ is chosen at random from among all subsets with cardinality $|S|$,
\begin{equation}
    \text{Prob}\left[|P_C(S)[y]-P_C(I)[y]|\geq \epsilon\right] \leq e^{-2|S|\epsilon^2},\label{eq:hoeffding-y}
\end{equation}
for each output $y$. From the union bound, the probability that $|P_C(S)[y] - P_C(I)[y]|$ exceeds $\epsilon$ for at least one value of $y$ is bounded above by $2^k e^{- 2|S|\epsilon^2}$, which also provides an upper bound on the probability that the total variation distance between $P_C(S)$ and $P_C(I)$ exceeds $\epsilon' = 2^k \epsilon$. Therefore the total variation distance will be no larger than $\epsilon'$ with high probability as long as $2^{\alpha n} 2^{-2k}\epsilon'^2$ is large, which means $\epsilon'$ can be exponentially small in $n$ provided $2k < \alpha n$. 
We conclude that the pseudorandom input distribution and the uniformly random input distribution will yield exponentially close output distributions as long as the observer's output register is small compared to the entropy of $S$.

The preceding argument shows that, indeed, there are probability distributions which are computationally indistinguishable from the uniformly random distribution. But can we make such a distribution \emph{efficiently}? It turns out that our argument for the computational hardness of distinguishing a pseudorandom distribution from a uniformly random distribution can be used to show that such distributions are typically hard to produce with polynomial-sized circuits.\footnote{We thank Adam Bouland for emphasizing this point.} 
However, there are distributions that can be created using polynomial-sized circuits which, under reasonable complexity-theoretic assumptions, are difficult to differentiate from the uniformly random distribution,  for any polynomial-sized circuit. Such distributions can be generated by pseudorandom generators~\cite{Arora2009}. We will give a more complete description of such constructions for the quantum case in Section~\ref{sec:quantum-pseudo}.

Efficient sampling from a (classical) pseudorandom distribution is analogous to the formation and partial evaporation of a black hole. Pseudorandom number generators consult a random ``key'' which is hidden from the adversary, and then compute a function which depends on the key. This function is chosen so that an output drawn from the resulting family of outputs indexed by the key is computationally indistinguishable from the output of a truly random function. In the case of the partially evaporated black hole, the key becomes a black hole microstate, and the key-dependent function evaluation becomes the chaotic unitary evolution of the evaporating black hole. An adversary samples the Hawking radiation, and attempts to determine whether the sample is drawn from a thermal distribution or not. 

To properly discuss the evaporating black hole, we will need to consider the quantum version of pseudorandomness, to which we turn in the next two sections. But our simplified classical model of ``Hawking radiation'' is instructive. It teaches us that the (classical) adversary can interact with the (classical) radiation for a subexponential time (or even for the exponential time $2^{\gamma n}$ if $\gamma$ is sufficiently small), without ever suspecting that the radiation is far from uniformly random. On the other hand, that conclusion may no longer apply if the adversary collects $k$ bits of information where $k$ is sufficiently large ($k > \alpha n/ 2)$. Both of these features will pertain to the quantum version of our story.

\section{Quantum pseudorandomness\label{sec:quantum-pseudo}}

Now consider the quantum version of the task described in the Section~\ref{sec:classical-pseudo}. Our observer receives a quantum state $\rho$, and is challenged to guess whether $\rho$ is maximally mixed or not. For that purpose, he performs a quantum computation with $\rho$ as input, and he outputs a single bit: 0 if he guesses $\rho$ is maximally mixed and 1 otherwise. 

Following our analysis of the classical case, let's first suppose that the observer executes a particular fixed quantum circuit. That means the observer measures a particular Hermitian observable $A$ with unit operator norm. Suppose we try to fool the observer by providing as input a pure state $\rho = |\psi\rangle\langle\psi|$. How well can the observable $A$ distinguish this pure state from the maximally mixed state? 

Suppose that $|\psi\rangle$ is chosen uniformly at random from among all $n$-qubit pure states. Then Levy's lemma~\cite{Hayden2004a} says that 
\begin{equation}
    \Pr\left(\left|\langle\psi|A|\psi\rangle - \frac{\text{Tr}(A)}{2^n}\right|\geq \epsilon\right) \leq e^{-c\,2^n\epsilon^2 }\label{eq:levy}
\end{equation}
for some constant $c$, where the probability is evaluated with respect to the invariant Haar measure on the $n$-qubit Hilbert space. This means that, for $n$ large, the pure state $|\psi\rangle$ can be chosen so that $|\psi\rangle$ and the maximally mixed state are exponentially difficult to distinguish using the observable $A$. Furthermore, most pure states have this property. Even a pure quantum state can pretend to be maximally mixed, and the observer will not know the difference!\footnote{If two identical copies of $|\psi\rangle$ are available, then it is easy to distinguish the pure state $|\psi\rangle$ from the maximally mixed state by conducting a swap test. Here we assumed that only a single copy is available.}

As in the classical case we can strengthen this claim: The state $|\psi\rangle$ can be chosen so that $|\psi\rangle$ is hard to distinguish from the maximally mixed state not just for one fixed quantum circuit, but for \textit{any} quantum circuit of reasonable size. To carry out this step of the argument, we'll need an upper bound on the number of quantum circuits of specified size; here we confront the subtlety that quantum circuits, unlike classical ones, form a continuum, but this wrinkle poses no serious obstacle to completing the argument. If we settle for specifying the unitary transformation realized by a circuit with $m$ gates to constant accuracy, it suffices to specify each gate to $O(\log m )$ bits of precision. Therefore, as in the classical case, the complete circuit can be specified by $O(m\log m)$ bits. It follows that, if $m$ is subexponential in $n$, then the number $N(m) $ of circuits with size $m$ is the exponential of a function which is subexponential in $n$. In contrast, the right-hand side of equation~\eqref{eq:levy} is the exponential of an exponential function of $n$. Using the union bound, we conclude that if the pure state $|\psi\rangle$ is chosen uniformly at random, it will, with high probability, be hard to distinguish $|\psi\rangle$ from the maximally mixed state using \emph{any} circuit of size subexponential in $n$. 

On the other hand, if we were not concerned about the complexity of the observer's task then it would be easy to distinguish $|\psi\rangle$ from the maximally mixed state. The observer could perform a projective measurement with the two outcomes $\{E_0 = I - |\psi\rangle\langle \psi |,\  E_1=|\psi\rangle\langle \psi|\}$, guessing that the input state is $|\psi\rangle$ if he obtains the outcome $E_1$, and guessing that the input state is maximally mixed if he obtains the outcome $E_0$. This strategy always succeeds if the input is $|\psi\rangle$, and fails with the exponentially small probability $2^{-n}$ if the input is maximally mixed. The trouble is that, for a typical pure state $|\psi\rangle$, this measurement is far too complex to carry out in practice. 

A typical pure quantum state is somewhat analogous to the distribution $p_S$ we described in Section~\ref{sec:classical-pseudo}. In both cases, it is hard for an observer who is limited to performing polynomial-size computations to tell that the state is not uniformly random, even though an observer with unlimited computational power can tell the difference. Furthermore, both examples are subject to the same criticism --- it is computationally hard to sample uniformly from Haar measure (that is, to prepare a ``typical'' pure state), just as it is computationally hard in the classical setting to sample from the the distribution $p_S$. In the quantum setting, as for the classical setting, we may ask a more nuanced question: Can quantum states be prepared \textit{efficiently} which are hard to distinguish from maximally mixed states? This more nuanced question is the relevant one as we contemplate the properties of the radiation emitted by a partially evaporated black hole, because the formation and subsequent complete evaporation of a black hole can occur in a time that scales like $S_{\textrm{bh}}^{3/2}$, where $S_{\textrm{bh}}$ is the initial black hole entropy. Hence, the preparation of the Hawking radiation can be simulated accurately by an efficient quantum circuit.


The answer is yes (under a reasonable assumption), as was shown recent by Ji, Liu, and Song~\cite{Ji2018}; pseudorandom quantum states \textit{can} be prepared efficiently. The assumption we need is the the existence of a family of \emph{quantum-secure pseudorandom functions} $\{\mathrm{PRF}_k\}_{k\in K}$. This means that each $\mathrm{PRF}_k$ can be efficiently computed, but it is difficult to distinguish a randomly sampled member of $\{\mathrm{PRF}_k\}$ from a truly random function with any efficient quantum algorithm. The set $K$ is called the \emph{key space} of the function family. The existence of such pseudorandom functions follows from the existence of  quantum-secure one-way functions, an assumption which is standard in cryptography. 

The key idea is that we can construct a pseudorandom quantum state as a superposition of computational basis states, where all basis states appear with equal weight except for a phase, and the phases appear to be random to a computationally bounded observer. Specifically, we consider a family of states $\{|\phi_k\rangle \}_{k \in K}$
\begin{equation}
    |\phi_k\rangle = \frac{1}{\sqrt{N}} \sum_{x \in X } \omega_N^{\text{PRF}_k(x)}|x\rangle,
\end{equation}
where $N=2^n$, $\omega_N = e^{2\pi i / N}$, $X = \{0,1,2,\dots, N{-1}\}$, and $\{\mathrm{PRF}_k: X \to X\}_{k\in K}$ is a family of quantum-secure pseudorandom functions. We can show that a uniform mixture of the states $\{|\phi_k\rangle \}$ is computationally indistinguishable from the maximally mixed state.\footnote{In fact, we can simplify the construction. It was shown in~\cite{brakerski2019} that the same family of states is still pseudorandom if we replace the root of unity $\omega_N$ by $-1$.} 

We may argue as follows. First we consider the family of \emph{all} functions $f_{k'}:X\to X\}_{k'\in K'}$ indexed by key space $K'$, and the corresponding family of pure states
\begin{equation}
    |f_{k'}\rangle = \frac{1}{\sqrt{N}} \sum_{x \in X} \omega_N^{f_{k'}(x)} |x\rangle. \label{eq:functionstate}
\end{equation}
The first thing to note is that $\{|f_{k'}\rangle \}$ is information-theoretically indistinguishable from Haar-random; we state this fact for the reader's convenience in Lemma \ref{lem:ji}.
\begin{lemma}[\cite{Ji2018}, Lemma 1]\label{lem:ji}
Let $\{|f_{k'}\rangle\}$ be the family of states defined in equation \eqref{eq:functionstate}. Then, for $m$ polynomial in $n$, the state ensemble $\{|f_{k'}\rangle^{\otimes m}\}$ is statistically indistinguishable from the ensemble $\{|\psi\rangle^{\otimes m}\}$ where $|\psi\rangle$ is Haar-random, up to a negligible error.
\end{lemma}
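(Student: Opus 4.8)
The plan is to reduce the phrase ``statistically indistinguishable up to negligible error'' to a trace-distance bound between the two $m$-copy averaged density operators, and then to evaluate that distance essentially exactly through a collision argument. The optimal success probability of any distinguisher (even a computationally unbounded one) that is handed $m$ copies is controlled by $\tfrac12\|\rho_m^f-\rho_m^{\mathrm{Haar}}\|_1$, where
\begin{equation}
\rho_m^f = \mathbf{E}_{k'}\big[(|f_{k'}\rangle\langle f_{k'}|)^{\otimes m}\big], \qquad \rho_m^{\mathrm{Haar}} = \mathbf{E}_{|\psi\rangle}\big[(|\psi\rangle\langle\psi|)^{\otimes m}\big],
\end{equation}
so it suffices to show this distance is smaller than any inverse polynomial in $n$ whenever $m=\mathrm{poly}(n)$ and $N=2^n$.

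First I would compute $\rho_m^f$ explicitly. Expanding $(|f\rangle\langle f|)^{\otimes m}$ in the computational basis gives matrix elements $\tfrac{1}{N^m}\omega_N^{\sum_i(f(x_i)-f(y_i))}$ indexed by strings $\vec x,\vec y\in X^m$. Since $k'$ ranges over \emph{all} functions $X\to X$, the values $f(z)$ are independent and uniform, so $\mathbf{E}_{k'}[\omega_N^{c\,f(z)}] = \tfrac1N\sum_{v}\omega_N^{cv} = \mathbf{1}[c\equiv 0 \!\!\pmod N]$. Collecting the net exponent at each point $z$, the average over $k'$ vanishes unless, for every $z$, the count of $x_i$ equal to $z$ minus the count of $y_j$ equal to $z$ is $\equiv 0 \pmod N$; because $m<N$ this can only happen when it is genuinely zero, i.e.\ when $\vec x$ and $\vec y$ are equal as multisets. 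Hence $\langle\vec x|\rho_m^f|\vec y\rangle = \tfrac{1}{N^m}\mathbf{1}[\,\{\vec x\}=\{\vec y\}\,]$. Grouping basis strings by their type $T$ (the underlying multiset of values), $\rho_m^f$ becomes block-diagonal, and within each block it is proportional to the rank-one projector onto the normalized symmetrized state $|\widehat{\mathrm{sym}}_T\rangle$, with eigenvalue $c(T)/N^m$, where $c(T)$ counts the distinct orderings of $T$.

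Next I would invoke the Schur--Weyl identity $\rho_m^{\mathrm{Haar}} = \Pi_{\mathrm{sym}}/\binom{N+m-1}{m}$, expressing the Haar average as the normalized projector onto the symmetric subspace. The crucial structural observation is that $\{|\widehat{\mathrm{sym}}_T\rangle\}_T$ is precisely an orthonormal basis of that symmetric subspace, so $\rho_m^{\mathrm{Haar}}$ is diagonal in the \emph{same} basis that diagonalizes $\rho_m^f$, with every eigenvalue equal to $1/\binom{N+m-1}{m}$. Consequently the trace distance collapses to the classical $\ell^1$ distance $\tfrac12\sum_T\big|\,c(T)/N^m - 1/\binom{N+m-1}{m}\,\big|$ between two probability distributions over types.

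Bounding this $\ell^1$ distance is the only real work, and I expect it to come out as $O(m^2/N)$. I would split the sum into collision-free types (distinct values, $c(T)=m!$) and collision types, and control three contributions: the weight $\rho_m^f$ places on collision types equals the collision probability of a uniform string in $X^m$, at most $\binom{m}{2}/N$; the weight $\rho_m^{\mathrm{Haar}}$ places on collision types is $1-\binom{N}{m}/\binom{N+m-1}{m} = 1-\prod_{j=1}^{m-1}\tfrac{N-j}{N+j}\le m(m-1)/N$; and on collision-free types the two eigenvalues differ only through the normalization mismatch between $N^m$ and $\prod_{j=0}^{m-1}(N+j)$, whose ratio is $1-O(m^2/N)$. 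Summing gives $\tfrac12\|\rho_m^f-\rho_m^{\mathrm{Haar}}\|_1 = O(m^2/N) = 2^{-n}\,\mathrm{poly}(n)$, which is negligible. The delicate points, which are elementary but must be handled carefully to land the clean scaling, are the collision/normalization bookkeeping and the use of $m<N$ to upgrade the modular condition ``$\equiv 0 \pmod N$'' to exact multiset equality.
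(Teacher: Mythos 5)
Your proof is correct. Note, however, that the paper does not prove this lemma at all: it is imported verbatim from the cited reference (Lemma 1 of Ji--Liu--Song), so there is no in-paper argument to compare against. Your reconstruction is essentially the standard proof given in that reference: reduce statistical indistinguishability to the trace distance between the $m$-copy ensemble averages, observe that averaging over truly random phases kills every matrix element $\langle \vec x|\cdot|\vec y\rangle$ unless $\vec x$ and $\vec y$ agree as multisets (using $m<N$ to upgrade the condition modulo $N$ to exact equality), identify the resulting eigenbasis of type-symmetrized states with the standard orthonormal basis of the symmetric subspace so that both $\rho_m^f$ and the Schur--Weyl form $\Pi_{\mathrm{sym}}/\binom{N+m-1}{m}$ of the Haar average are simultaneously diagonal, and then bound the resulting classical $\ell^1$ distance by a birthday-type collision estimate, yielding $O(m^2/N)=\mathrm{poly}(n)\,2^{-n}$. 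All the steps you flag as delicate (the multiset-equality upgrade, the collision and normalization bookkeeping) are handled correctly, and the final bound is indeed negligible for $m=\mathrm{poly}(n)$, which is exactly what the lemma asserts.
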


Furthermore, the ensemble $\{|\phi_k \rangle \}$ cannot be efficiently distinguished from the ensemble $\{|f_{k'}\rangle\}$. If it could be, then we could leverage this fact to efficiently distinguish $\{ \text{PRF}_k \}$ from a family of random functions~\cite{Ji2018}, contradicting our assumption that $\{ \text{PRF}_k \}$ is a quantum-secure pseudorandom function family. It now follows that the ensemble $\{|\phi_k \rangle \}$ cannot be efficiently distinguished from a maximally mixed state.

So far we have shown that a uniform mixture of the states $\{|\phi_k\rangle \}$ is pseudorandom; it remains to show that this mixture can be prepared efficiently. 
We start with a product of qubits, each in the state $|0\rangle$, and apply a Hadamard gate to each qubit to obtain the state
\begin{equation}
    \frac{1}{\sqrt{N|K|}} \sum_{x \in X} \sum_{k \in K} |x\rangle |k\rangle .
\end{equation}
Next, we apply the quantum Fourier transform (which has complexity polynomial in $n$) to another $n$-qubit register that is initialized in the state $|00\dots 01\rangle$, obtaining
\begin{equation}
    \frac{1}{N\sqrt{|K|}} \sum_{x, y \in X} \sum_{k \in K} |x\rangle|k\rangle \omega_N^y |y\rangle.
\end{equation}
Now, we compute $\text{PRF}_k(x)$ and subtract modulo $N$ from the $y$ register. This computation can be done efficiently because by assumption the $\text{PRF}_k$ is an efficiently computable function; the resulting state is
\begin{equation}
    \frac{1}{N\sqrt{|K|}} \sum_{x, y \in X} \sum_{k \in K} |x\rangle|k\rangle \omega_N^y |y-\text{PRF}_k(x) \rangle.
\end{equation}    
After shifting the summation index $y$, we have, up to a global phase,
\begin{equation}
\begin{aligned}
    \frac{1}{N\sqrt{|K|}} \sum_{x, y \in X} \sum_{k \in K} \omega_N^{\text{PRF}_k(x)} |x\rangle|k\rangle \omega_N^y |y\rangle
    =\frac{1}{\sqrt{|K|}}\sum_k |\phi_k\rangle|k\rangle|\text{QFT}\rangle,
\end{aligned}
\end{equation}
where $|\text{QFT}\rangle = N^{-1/2}\sum_{y \in X}\omega_N^y |y\rangle$. After the key $|k\rangle$ is discarded,  the marginal state over the first register is the uniform mixture of $\{|\phi_k\rangle \}$. Thus, we have prepared this mixture efficiently.

    

The definition of a pseudorandom quantum state in reference~\cite{Ji2018} is really overkill for our purposes. Those authors are concerned with cryptographic applications, and therefore consider a definition (as stated in Lemma \ref{lem:ji}) where, for each value $k$ of the key, $m$ identical copies of $|\phi_k\rangle$ are available where $m$ is polynomial in $n$. We will not encounter such scenarios in this paper. Therefore, we may instead adopt a simplified definition of pseudorandomness which is more suitable for the application to black hole physics. In Definition~\ref{def:pseudorandom} below, the size $|H|$ of the remaining black hole parametrizes how difficult it is to distinguish radiation emitted from the partially evaporated black hole from the maximally mixed state. In this sense, the remaining black hole $H$ serves as the key space of the pseudorandom radiation state. Even if $|H|$ is less than half of the initial black hole entropy, this task remains difficult so long as the remaining black hole $H$ is macroscopic. Our hypothesis that the Hawking radiation is pseudorandom provides a way to  formalize the idea that the Hawking radiation is effectively thermal even when the state of $E$ has relatively low rank because $|H| \ll |E|$.

\section{Is Hawking radiation pseudorandom?\label{sec:hawking-pseudo}}

We have now seen, in both the classical and quantum settings, that pseudorandom states exist. Though in principle these states are almost perfectly distinguishable from maximally mixed states, in practice no observer with reasonable computational power can tell the difference. Moreover, under standard cryptographic assumptions, there exist constructions of such states which can be efficiently prepared. But up to this point we have not addressed whether pseudorandom quantum states can be efficiently prepared in plausible physical processes like the evaporation of a black hole. 

In the case of a black hole which forms from gravitational collapse and then \emph{completely} evaporates, the resulting state of the emitted Hawking radiation, though highly scrambled, would \emph{not} be pseudorandom. We take it for granted that the time evolution of the quantum state can be accurately approximated by a quantum circuit, which has size polynomial in the initial black hole entropy $S_{\text{bh}}$ because the evaporation process takes a time $O\left(S_{\text{bh}}^{3/2}\right)$. We may consider a toy model of this process, in which the initial state $|\phi_{\text{matter}}\rangle$ of the collapsing matter is a product state of $n$ qubits $|\phi_{\text{matter}}\rangle=|0\rangle^{\otimes n}$, and the final state after complete evaporation is $|\Psi_{\text{fin}}\rangle = U|\phi_{\text{matter}}\rangle$, where $U$ is a unitary transformation constructed as a polynomial-size circuit. In this case, an observer could just execute this circuit in reverse, hence applying $U^\dagger$ to $|\Psi_{\text{fin}}\rangle$, and then measure the qubits in the standard basis, thus easily distinguishing $|\Psi\rangle$ from the maximally mixed state. We see that, if $\rho$ is a state that can be prepared by a polynomial-size quantum circuit, yet is hard to distinguish from maximally mixed by polynomial-size circuits, then $\rho$ cannot be pure.

Instead, we consider a \emph{partially} evaporated black hole as in Figure~\ref{fig:psi}.  We imagine that the $n$-qubit state $\rho_{EB}$ is prepared by applying a polynomial-size unitary circuit $U_{\text{bh}}$ to the initial state $|0\rangle^{\otimes n}|0\rangle^{\otimes k}$ of $EBH$, where $H$ is a $k$-qubit system, and then discarding $H$. In our toy model, $EB$ is the Hawking radiation that has been emitted so far, and $H$ is the remaining black hole. (Recall that $B$ is a small portion of the emitted Hawking radiation whose properties we will investigate later; for the purpose of the present discussion we are only interested in the state of $EB$, the full radiation system.) If our observer had access to $H$ as well as $EB$, he could easily tell that the state is not maximally mixed, but what if $H$ is inaccessible?

\begin{figure}[ht]
\centering
\includegraphics[width=0.5\columnwidth]{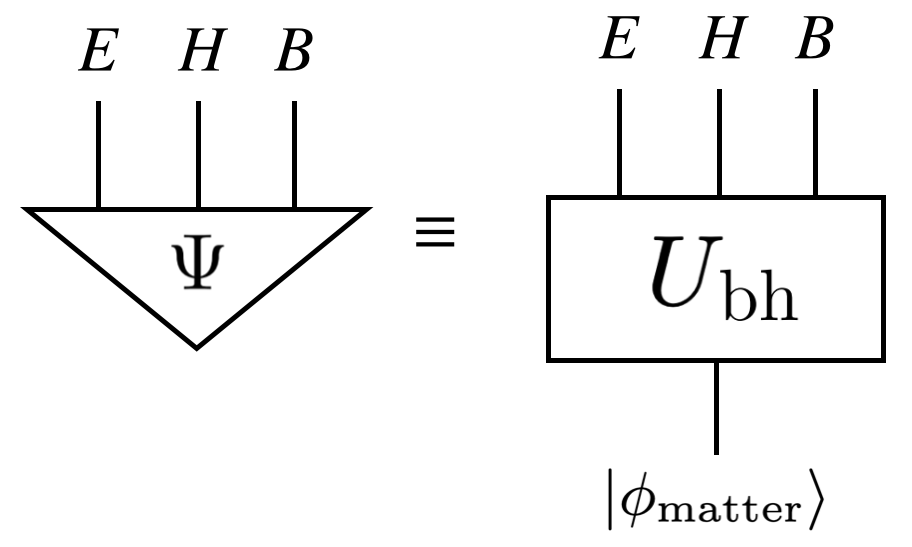}
\caption{Our toy model of a partially evaporated black hole, where $EB$ is the Hawking radiation emitted so far, and $H$ is the remaining black hole. The initial state $|\phi_{\text{matter}}\rangle$ of the gravitationally collapsing matter is modeled as a product state. We conjecture that the unitary black hole dynamics prepares a pseudorandom state of $EB$.}
\label{fig:psi}
\end{figure}

We are particularly  interested in the case where $1 \ll k = |H| < n = |EB|$, so that $\rho_{EB}$ fails to have full rank, and must therefore be information-theoretically distinguishable from the maximally mixed state; this situation resembles the classical model discussed in Section~\ref{sec:classical-pseudo}, where the entropy of the distribution $p_S$ is substantial but not maximal. Could the state $\rho_{EB}$ of the Hawking radiation, which is prepared by unitary evolution of $EBH$ for a time which is polynomial in $|EB|$, be pseudorandom? 

This is a question about quantum gravity, and we don't know the answer for sure, but we can make a reasonable guess. We have already seen in Section \ref{sec:quantum-pseudo} that quantum circuits exist that efficiently prepare pseudorandom quantum states. Since black holes are believed to be particularly potent scramblers of quantum information, it is natural to conjecture that the internal dynamics of a black hole can produce pseudorandom states as well. Indeed, we may expect similar behavior for the radiation emitted by other strongly chaotic quantum systems aside from black holes. Only for the case of black holes, though, where we face the daunting firewall puzzle, will our constructions of robust logical operators acting on $EH$ seem to have a natural interpretation.  

To better understand why the state of $EB$ might be hard to distinguish from a maximally mixed state, we may suppose, for example, that $\rho_H$ is maximally mixed so that the pure state of $EBH$ has the form
\begin{equation}
    |\Psi\rangle_{EBH} = \frac{1}{2^{|H|/2}} \sum_i |\psi_i\rangle_{EB}\otimes |i\rangle_H ,
\end{equation}
where the states $\{|\psi_i\rangle_{EB}\}$ are orthonormal.
The marginal state of $EB$ is then
\begin{equation}
    \rho_{EB} = \frac{1}{2^{|H|}} \sum_i |\psi_i\rangle \langle \psi_i|.
\end{equation}
Suppose the observer receives a state which is either $\rho_{EB}$ or the maximally mixed state $\sigma_{EB} = I_{EB}/2^{|EB|}$. A natural test is as follows: The observer augments $EB$ with the maximally mixed state of $H$ (which is easy to prepare), and then measures the projection onto $|\Psi\rangle$. This can be done efficiently by applying $U_{\text{bh}}^{-1}$ and then measuring in the standard basis. If the input state is $\sigma_{EB}$, the projection onto $|\Psi\rangle$ succeeds with probability $2^{-(|EB| + |H|)}$, while if the input state is $\rho_{EB}$ the success probability is 
\begin{equation}\label{eq:test-mixed}
    \langle \Psi|  \frac{\sum_i |\psi_i\rangle\langle\psi_i|}{2^{|H|}} \otimes \frac{I_H}{2^{|H|}}  |\Psi \rangle= \frac{1}{2^{2|H|}}.
\end{equation}
Thus this test distinguishes $\rho_{EB}$ and $\sigma_{EB}$, but only with a probability that is exponentially small in $H$. 

To conduct a better test we would somehow need to exploit the structure of the ensemble $\{|\psi_i\rangle_{EB}\}$. But if as we expect black holes are especially effective information scramblers, it is reasonable to suppose that the ensemble lacks any special properties that can be exploited by an observer who is limited to performing a polynomial-time quantum computation. If so, the Hawking radiation is pseudorandom, and the test we have described may be nearly optimal. 

In the example above we have assumed that the radiation has infinite temperature. The actual behavior of a black hole evaporating in asymptotically flat spactime is more complicated --- the temperature is actually finite, and in fact becomes hotter and hotter as the evaporation proceeds. Conceptually, though, the situation is similar to the idealized case of an black hole evaporating at infinite temperature. At early times, when $|H| \gg |EB|$, we expect the radiation emitted at a specified time to be information-theoretically indistinguishable from precisely thermal radiation at the same temperature. At late times, when $|H| \ll |EB|$, the global state of the radiation is distinguishable in principle from a thermal state (with temperature varying according to the time of emission), but we assume that telling the difference is computationally hard because the radiation is highly scrambled.

We also note that the constructions in \cite{Ji2018} reinforce earlier observations concerning the computational hardness of \textit{decoding} the Hawking radiation \cite{Harlow2013,Aaronson2016}. These authors considered the quantum state $|\Psi\rangle_{EBH}$ of an old black hole, and analyzed the task of extracting from the early radiation $E$ the subsystem which is entangled with the recently emitted Hawking mode $B$. This task would be easy for an observer who has access to both $E$ and $H$, but one can argue that there are efficiently preparable states of $EBH$ for which this decoding task cannot be achieved by an observer who performs a polynomial-size quantum computation on $E$ alone. Here, too, the hardness of decoding cannot be proven from first principles, but it follows from plausible complexity assumptions which are standard in ``post-quantum'' cryptography \cite{Harlow2013,Aaronson2016}. Again, the existence of states that are hard to decode does not guarantee that a black hole creates such states, but we take it on faith that if efficient preparation of such states is possible, then a black hole will be up to the job. 


To summarize, on the basis of these (admittedly speculative) considerations, we propose that for the quantum state $|\Psi\rangle_{EBH}$ of an old black hole, the state $\rho_{EB}$ of the Hawking radiation is pseudorandom. If $|H| < |EB|$, then the rank of $\rho_{EB}$ is not maximal, so that $\rho_{EB}$ is distinguishable from a thermal state. In fact, an observer with access to $H$ as well as $EB$ could efficiently check that $\rho_{EB}$ is not thermal. Furthermore, an observer without access to $H$ could check that $\rho_{EB}$ is not thermal by performing a quantum computation of exponential size on $EB$ alone. But an observer outside the black hole, who performs a polynomial-size quantum computation on $EB$ without access to $H$, will be able to distinguish $\rho_{EB}$ from a thermal state with a success probability that is at best exponentially small in $|H|$. Our analysis of the robustness of the encoded black hole interior in the remainder of this paper will rest on this assumption.

This discussion highlights the importance of distinguishing the von Neumann entropy of the Hawking radiation from its thermodynamic entropy. After the Page time, the Von Neumann entropy of $EB$ becomes far smaller than the von Neumann entropy of a perfectly thermal state, so one could in principle verify that the Hawking radiation is not perfectly thermal by measuring its von Neumann entropy. The existence of pseudorandom quantum states then implies that measuring the von Neumann entropy with a small error requires an operation of superpolynomial complexity \cite{gheorghiu2020estimating}. One could imagine trying to measure the entropy of the radiation by, for example, withdrawing its thermal energy to operate a heat engine. If the radiation is pseudorandom, though, the radiation would be indistinguishable from thermal radiation in any efficient process, despite its low von Neumann entropy.

Recalling the construction of the pseudorandom state recounted in Section~\ref{sec:quantum-pseudo}, we note \cite{cleve2000fast} that the quantum Fourier transform can be executed with circuit depth $O(\log n)$, and that under plausible cryptographic assumptions the function $\textrm{PRF}_k$ can be computed in depth polylog $n$ \cite{Ji2018}. Thus a pseudorandom state can be prepared in polylog $n$ time. Plausibly, the state preparation can be achieved in a time comparable to the $O(\log n)$ scrambling time of a black hole, as one might naively expect.





\section{Pseudorandomness and decoupling\label{sec:pr_decoupling}}



In this section, we formalize our hypothesis that Hawking radiation is pseudorandom, and explore its implications regarding the firewall paradox~\cite{Almheiri2013}. Our analysis can be viewed as a refinement of the Harlow-Hayden argument~\cite{Harlow2013,Aaronson2016}. 

As formulated in \cite{Almheiri2013} and summarized in Section \ref{sec:intro}, the firewall paradox highlights a conflict between the unitarity of black hole evaporation and the monogamy of entanglement. A possible resolution is that the interior mode $\tilde B$ that purifies a recently emitted Hawking mode $B$ may actually be encoded in the radiation. On the face of it, this resolution flagrantly violates locality, and one wonders whether this violation of locality can be detected by an agent who first interacts with the radiation and then falls through the event horizon to visit the interior. We will argue that, provided the Hawking radiation is pseudorandom and the size of the observer is small compared to the black hole, the nonlocality is undetectable in practice because it would take an exponentially long time for the observer to distill the encoded interior mode before falling into the black hole.



We will first present a sketch of the argument in a simplified setting where the radiation interacts with a single observer who is significantly smaller than the remaining black hole. Later on we will extend the argument to the case where the observer has access to a large probe outside the horizon, whose size may be comparable to or even larger than the remaining black hole.

Recall our conventions: Let $O$ denote the observer, $H$ the remaining black hole, $B$ the late outgoing mode, $E$ the early radiation, and $P$ the external probe. We will also refer to the joint system $EB$ as the exterior radiation. All subsystems can be decomposed in terms of qubits, and our statements about computational complexity concern the number of steps in a computation executed by a universal quantum computer.  Below and throughout the remainder of the paper, given an operator $A$, we will use $\|A\|_1 = \mathrm{Tr}(\sqrt{A^\dagger A})$ to denote the trace norm, $\|A\|_F = \sqrt{\mathrm{Tr}(A^\dagger A)}$ to denote the Frobenius norm, and $\|A\|$ to denote the operator norm.

First, we define 
what it means for the external radiation of the black hole to be pseudorandom. 
\begin{definition}\label{def:pseudorandom}
Let $|\Psi \rangle_{EBH}$ be the state of the black hole and the radiation. Let $\sigma_{EB}=I_{EB}/d_{EB}$ be the maximally mixed state of $EB$, and let $\rho_{EB} = \mathrm{Tr}_H\left(|\Psi\rangle\langle\Psi|\right)$. We say that the state $|\Psi\rangle_{EBH}$ is \textit{pseudorandom} on the radiation $EB$, if there exists some $\alpha > 0$ such that
\begin{equation}
    \left|\Pr\left(\mathcal{M}(\rho_{EB})=1\right) - \Pr\left(\mathcal{M}(\sigma_{EB})=1\right)\right| \leq 2^{-\alpha |H|}, \label{eq:pr}
\end{equation}
for any two-outcome measurement $\mathcal{M}$ with quantum complexity polynomial in $|H|$, the size of the remaining black hole. 
\end{definition}
\noindent This definition captures the notion that no feasible measurement can tell the difference between $\rho_{EB}$ and the maximally mixed state. A few remarks will help to clarify the definition. (1) When we say a measurement of $\rho_{EB}$ has polynomial quantum complexity, we mean it can be performed by executing a quantum circuit of polynomial size acting on $EB$, followed by a qubit measurement in the standard computational basis. Use of ancilla systems is also permitted in the measurement process, provided the ancilla is initialized in a product state.  (2) Of particular interest is the value of the constant $\alpha$ that makes the bound in equation (\ref{eq:pr}) tight for asymptotically large black holes. But because black holes are such effective information scramblers, we would expect a comparable value of $\alpha$ to apply also for black holes of moderate size. There is no obvious small parameter in the problem that would lead us to expect $\alpha$ to be small compared to 1. (3) This definition is appropriate for the case where the Hawking radiation has infinite temperature. As we remarked in Section \ref{sec:hawking-pseudo}, we expect the realistic case of finite-temperature radiation to be conceptually similar, and for similar conclusions to apply in that case. But we will stick with the infinite-temperature case for the rest of the paper to simplify our analysis.\footnote{In the finite temperature case, the entanglement between $B$ and the rest of the system is no longer maximal. This causes an extra complication when we use the quantum error-correction technology in Section~\ref{sec:bh_qecc}, because the encoding map $V$ from $B$ to $EH$ defined by the state $\Psi_{EBH}$ need not be exponentially close to an isometry. Instead we may replace $V$ by the approximate isometry $V\rho_B^{-1/2}$,  which slightly modifies the error bounds derived in Section~\ref{sec:bh_qecc} and \ref{sec:qec}. Similar techniques have been used in \cite{Papadodimas2013,Papadodimas2016,Harlow2014}.} 

Let us now deduce a consequence of Definition~\ref{def:pseudorandom}. We introduce an observer subsystem $O$ initialized in a state $\omega_O$, and  an ancilla subsystem $P$ initialized in the product state $|0\rangle_P$. The main result of this section is the following: Suppose that $|\Psi\rangle_{EBH}$ is pseudorandom, and let $\rho_{OPE}$ be any state of $OPE$ obtained by applying a unitary of polynomial complexity to $\omega_{O}\otimes |0\rangle_P\otimes |\Psi\rangle_{EBH}$. Then the correlation between the observer and the early radiation is exponentially small in $|H|$ for any such state; \emph{i.e.},
\begin{equation}
    \|\rho_{OB} - \rho_{O} \otimes \rho_B \|_1 \leq 6\cdot 2^{-(\alpha|H|-|O|)}, \label{eq:decoupling_main}
\end{equation}
where we have now assumed that $B$ is a single qubit.
We will call~\eqref{eq:decoupling_main} the \emph{decoupling bound}, because it states that the observer $O$ nearly decouples from the exterior radiation mode $B$, and therefore gains negligible information about the interior mode $\tilde B$ which is entangled with  $B$.
In Section~\ref{sec:qec} we leverage \eqref{eq:decoupling_main} to show that the interior mode $\tilde B$ can be regarded as an encoded subsystem of $EH$ which is protected against all ``low-complexity'' errors, where ``low-complexity'' is shorthand for polynomial complexity. 

Prior work~\cite{Harlow2013,Aaronson2016} has suggested that the decoupling bound holds when the size of the remaining black hole is an $O(1)$ fraction of the initial black hole entropy $S_{\text{bh}}$. However, our conclusion goes further. Even if the majority of the initial black hole has evaporated, so that $|H| \ll |EB|$, the observer $O$ and the late radiation $B$ remain decoupled as long as the remaining black hole $H$ is macroscopic and the observer's system $O$ obeys $|O|\ll |H|$.

To derive the decoupling bound, we apply the pseudorandomness assumption to the setup described in Figure~\ref{fig:setup}. 
The unitary $U_{\mathcal{E}}$ is applied to the radiation, probe, and the observer. Because the evaporation time of the black hole is polynomial in its size, and $U_{\mathcal{E}}$ is applied before the evaporation is complete, we may assume that $U_{\mathcal{E}}$ is applied in a polynomial time and therefore has polynomial complexity. We  also assume that the initial state of the observer $\omega_O$ is of low complexity, although this assumption is not crucial; we may take the state $\omega_O$ to be arbitrary, at the cost of a slightly weaker decoupling bound. 

In order to bound the correlation between $B$ and $O$, we consider a complete set of operators acting on $OB$. A convenient choice is the set of Pauli operators $P_i$ acting on $OB$. By a Pauli operator acting on $n$ qubits we mean a tensor product of $n$ $2\times 2$ Pauli matrices; there are $4^n$ such operators $\{P_i, i = 0, 1, 2, \dots , 4^n-1\}$ (where $P_0 = I$) whose phases can be chosen so that each $P_i$ for $i\ne 0$ has eigenvalues $\pm 1$, and the $\{P_i\}$ are orthogonal in the Froebenius norm: $\text{Tr}\left(P_i P_j \right)= 2^n \delta_{ij}$. Here $n$ is the number of qubits in $OB$. Because measurement of $P_i$ is a low-complexity two-outcome measurement, it follows from the assumption that $\Psi_{EBH}$ is pseudorandom that 
\begin{equation}\label{eq:implied-by-pseudo}
    \left|\text{Tr}((\rho_{OB} - \sigma_{OB})P_i)\right| \leq 2^{-\alpha |H|}
\end{equation}
for any Pauli operator $P_i$, where $\sigma_{OB}$ is the state that results when the state $\rho_{EB}$ measured by the observer is replaced by the maximally mixed state; see Figure~\ref{fig:complete_picture}. 

To understand why equation~(\ref{eq:implied-by-pseudo}) follows from pseudorandomness, note that we are modeling a measurement of $EB$ by the observer $O$ as a low-complexity unitary interaction between $EB$ and $O$, followed by a simple measurement of the $O$ register. Strictly speaking, then, we should allow the Pauli operator $P_i$ to act only on $O$, not on $OB$. In effect, we are assuming that the observer's quantum memory contains $|OB|$ qubits rather than $|O|$ qubits, so that measuring a Pauli operator acting on $OB$ is permitted. In our formulation of the pseudorandomness assumption, there is no restriction on the size of the observer's memory, only on the complexity of his operation. Therefore, assuming that the state of $EB$ is pseudorandom, the observer's measurement will not distinguish $\rho_{OB}$ from $\sigma_{OB}$ even if the observer is permitted to measure $B$ as well as $O$.

Using the completeness and orthogonality of the Pauli operators, we can bound the Frobenius distance between the two states as
\begin{align}
    \|\rho_{OB} - \sigma_{OB}\|_F^2&=\mathrm{Tr}\left((\rho_{OB} - \sigma_{OB})^2\right)\\ 
    &= 2^{-(|OB|)}\sum_{i} \left|\text{Tr}((\rho_{OB} - \sigma_{OB})P_i)\right|^2.
\end{align}
Because there are $4^{|OB|}$ Pauli operators, the right hand side is bounded by $2^{-2\alpha |H|} 2^{|OB|}$. The trace distance is bounded by the Frobenius norm as
\begin{align}
    \|\rho \|_1 \le \sqrt{\mathrm{rank}(\rho)}\,\|\rho\|_F,
\end{align}
for any operator $\rho$. Therefore we have
\begin{equation}
    \|\rho_{OB} - \sigma_{OB} \|_1 \leq  2^{|OB|}2^{-\alpha|H|},  \label{eq:decoupling_elementary}
\end{equation}
because the rank of $\rho_{OB}$ can be no larger than $2^{|OB|}$.
From~\eqref{eq:decoupling_elementary},
one finds that 
\begin{equation}\label{eq:decoupling-derive}
    \begin{aligned}
    \|\rho_{OB} - \rho_O \otimes \rho_B \|_1 &\leq \|\rho_{OB} - \sigma_{OB} \|_1 + \|\sigma_{OB} - \rho_{O}\otimes \rho_B \|_1 \\
    &\leq 2^{|OB|-\alpha |H|} + \|\sigma_{OB} - \rho_O \otimes \rho_B \|_1 \\
    &=2^{|OB|-\alpha |H|} + \|\sigma_{O} \otimes \sigma_B - \rho_O \otimes \rho_B \|_1 \\
    &\leq 2^{|OB|-\alpha |H|} + \|\sigma_O -\rho_O \|_1 + \|\sigma_B - \rho_B \|_1 \\
    &\leq 3 \times 2^{(|OB|-\alpha |H|)}.
    \end{aligned}
\end{equation}
The first line is the triangle inequality. From the first line to the second line, we used \eqref{eq:decoupling_elementary}. From the second line to the third line we used the fact that $\sigma_{OB}$ is a product state over $O$ and $B$. From the third line to the fourth line, we used the fact that 
\begin{equation}
\begin{aligned}\label{eq:product-fact}
    \|\sigma_O \otimes \sigma_B - \rho_O \otimes \rho_B \|_1 &\le \|(\sigma_O-\rho_O)\otimes \sigma_B \|_1 + \|\rho_O \otimes (\sigma_B - \rho_B)\|_1 \\
    &\leq \|\sigma_O-\rho_O \|_1 + \|\sigma_B - \rho_B \|_1,
\end{aligned}
\end{equation}
where the first line of \eqref{eq:product-fact} follows from the triangle inequality, and the second from the property that tracing out a subsystem cannot increase the trace distance. 
To reach the last line of \eqref{eq:decoupling-derive}, we again used the property that tracing out a subsystem cannot increase the trace distance.
Finally, in the case where $B$ is a single qubit, so that $|OB| = |O| + 1$, \eqref{eq:decoupling-derive} becomes the decoupling bound~\eqref{eq:decoupling_main}. More generally, decoupling is satisfied whenever $|OB| \ll \alpha |H|$.


\begin{figure}[ht]
    \centering
    \includegraphics[width=0.65\columnwidth]{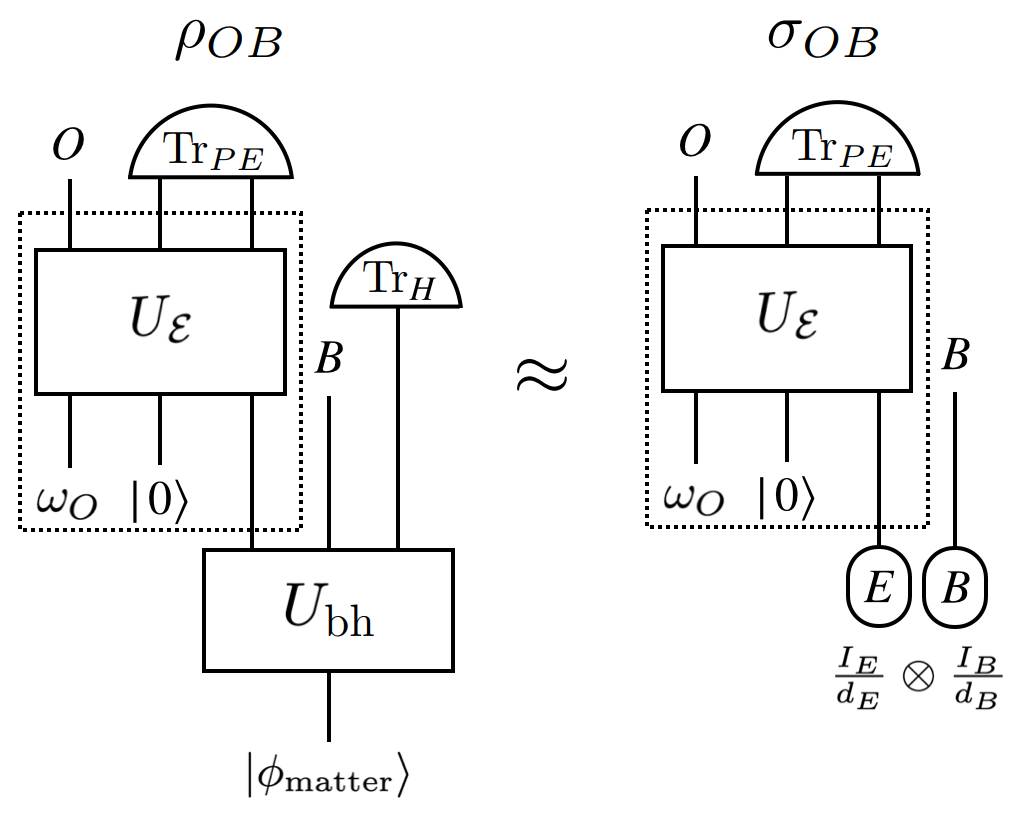}
    \caption{Graphical depiction of the decoupling bound, which follows from the pseudorandomness of the Hawking radiation emitted by an old black hole. On the left, a black hole forms from collapse and partially evaporates; the emitted radiation is $EB$ and the remaining black hole is $H$. Then an observer $O$ and probe $P$ interact with the radiation subsystem $E$ for a time that scales polynomially with the initial black hole entropy $S_{\text{bh}}$. On the right, the unitary transformation describing the interaction of $OPE$ is the same as on the left, but the state of the Hawking radiation is replaced by a maximally mixed state of $EB$. The decoupling bound asserts that the final state of $OB$ is the same in both cases, up to an error that is exponentially small in $|H|$, the size of the remaining black hole, provided that $|O| \ll |H|$.
    }
    \label{fig:complete_picture}
\end{figure}

Because two states close in trace distance cannot be distinguished well by any measurement, the decoupling bound implies that the state $\rho_{OB}$ cannot be distinguished from the state $\sigma_{OB}$ assuming that $|O| \ll |H|$. 
We thus conclude that any subsystem small compared to the remaining black hole $H$, even after interacting with the early radiation $E$, cannot be correlated with $B$; see Figure~\ref{fig:complete_picture}. In particular, an observer outside the black hole who interacts with $E$ for a polynomially bounded time remains decoupled from $B$, assuming that the Hawking radiation is pseudorandom. 

This conclusion about the hardness of decoding follows from the pseudorandomness assumption for any computationally bounded observer who can access only system $E$. However, the decoding becomes easy if the observer has access to both $E$ and $H$, as long as the state $|\Psi\rangle_{EBH}$ has polynomial complexity. For this case, we will describe an explicit decoding protocol in Section \ref{sec:efficient-manipulation}.

\section{Black hole as a quantum error-correcting code \label{sec:bh_qecc}}
In this section, we recast the findings in Section~\ref{sec:pr_decoupling} in the language of quantum error correction. The quantum error correction point of view will prove to be useful in understanding more subtle thought experiments studied in Section~\ref{sec:qec}. We will see that an old black hole, together with its previously emitted Hawking radiation, is a quantum error-correcting code with exotic properties that have not been noted in previous discussions of holographic quantum error-correcting codes~\cite{Almheiri2015,Pastawski2015}. These properties hold if the Hawking radiation is pseudorandom. That a black hole can be viewed as a quantum error-correcting code is not new~\cite{Pastawski2015,Hayden2019,Verlinde2013,Yoshida2019}. What's new is that a black hole can protect quantum information against seemingly pernicious errors; we refer to these as ``low-complexity errors,'' meaning errors inflicted by a malicious agent who performs a quantum computation on the Hawking radiation with complexity scaling polynomially in the size of the remaining black hole. 

To explain this claim, it is useful to view the state of the black hole and the radiation as an encoding map from the interior mode $\tilde{B}$ into $EH$. That is,  $|\Psi\rangle_{EHB}$ defines an isometric embedding of $\tilde B$ into $EH$. Recall that $E$ denotes the early radiation, $H$ denotes the remaining black hole, and $B$ denotes a late outgoing mode. For simplicity, we assume that $B$ is a single qubit, but the following results remain essentially unchanged for $B$ of any constant size (small compared to $H$). The encoded system $\tilde B$ describes the mode in the black hole interior that is entangled with $B$.

We can define an (approximate) isometric embedding $V_{\Psi}:\mathcal{H}_{\tilde{B}}\rightarrow \mathcal{H}_{EH}$ of a single qubit $\tilde{B}$ into the subspace $EH$ by
\begin{equation}
    V_{\Psi}|i\rangle_{\tilde{B}} = 2(I_{EH}\otimes\langle\omega|_{B\tilde{B}})(|\Psi\rangle_{EHB}\otimes |i\rangle_{\tilde{B}}),\label{eq:embedding}
\end{equation}
where $|\omega\rangle_{B\tilde{B}}=2^{-1/2}(|00\rangle_{B\tilde{B}} + |11\rangle_{B\tilde{B}})$ denotes an EPR pair on $B\tilde{B}$; see Figure~\ref{fig:encoding}. While $V_\Psi$ itself is not precisely an isometric embedding, it is exponentially close to one under the assumption that $|\Psi\rangle_{EBH}$ is pseudorandom on the exterior system $EB$, as specified in Definition \ref{def:pseudorandom}. In Appendix \ref{sec:embedding}, we show that there exists an isometric embedding $V$ such that
\begin{align}\label{eq:embedding-error}
    \|V - V_{\Psi}\| \le 2\cdot 2^{-\alpha|H|},
\end{align}
where $\|\cdot\|$ denotes the operator norm.
The isometry $V$ then defines a code subspace that encodes $\tilde{B}$. For macroscopic observers (\emph{i.e.}, $|O|\gg 1$), the error in~\eqref{eq:embedding-error} is negligible compared to the error in the decoupling bound~\eqref{eq:decoupling_main}. 
Although the norm in equation~(\ref{eq:embedding-error}) is the operator norm rather than the trace norm, that distinction need not concern us if $|B|$ is sufficiently small compared to $|H|$.
Therefore we can ignore any differences between $V$ and $V_{\Psi}$ and use them interchangeably.

\begin{figure}[ht]
\centering
\includegraphics[width=0.5\columnwidth]{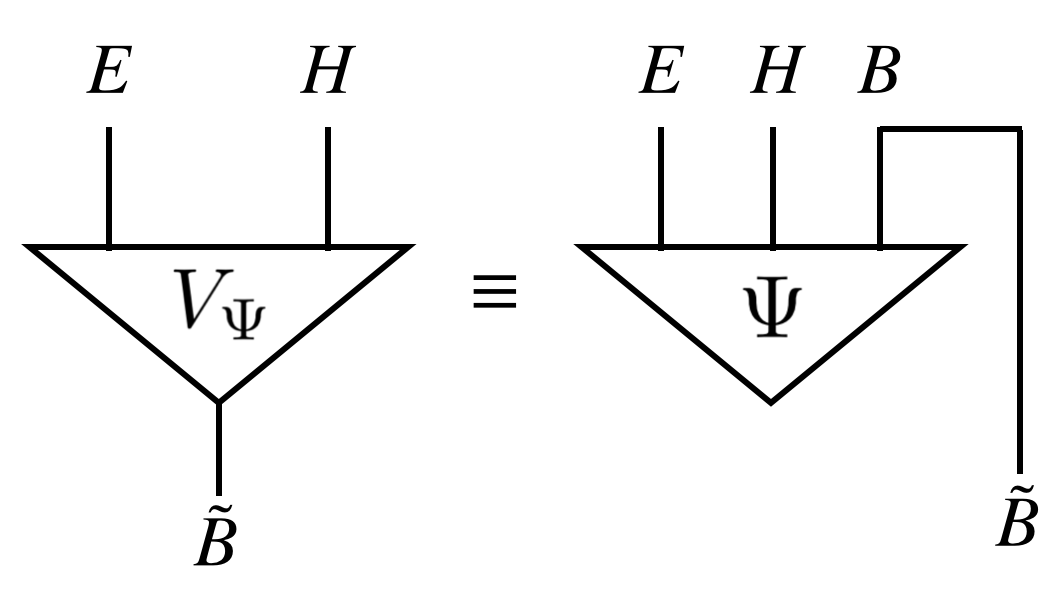}
\caption{The definition of the encoding of $\tilde{B}$ into $EH$, with $\Psi$ defined as in  Figure~\ref{fig:psi}.}
\label{fig:encoding}
\end{figure}

We will now show that the isometry $V_{\Psi}:\mathcal{H}_{\tilde{B}} \rightarrow \mathcal{H}_{EH}$ defined above embeds $\tilde{B}$ into $EH$ as a code subspace for which any low-complexity noise model acting on $E$ is (approximately) correctable. By low-complexity error, we mean that the unitary process $U_{\mathcal{E}}$ in Figure~\ref{fig:bh_code} has complexity at most polynomial in $|H|$. Here the external observer $O$ plays the role of the ``environment" for the noise process acting on $E$ and the probe $P$.
\begin{figure}[ht]
    \centering
    \includegraphics[width=0.3\columnwidth]{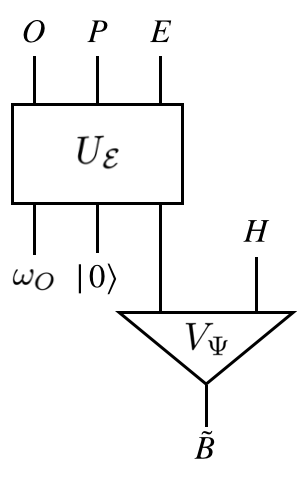}
    \caption{A black hole can be viewed as a quantum error-correcting code. By tracing out the observer $O$, we obtain a ``noise model" $\mathcal{E}$ on the early radiation and the probe.}
    \label{fig:bh_code}
\end{figure}


The error model depicted in Figure~\ref{fig:bh_code} is rather exotic compared to error models that are typically considered in discussions of quantum gravity and fault-tolerant quantum computing. For example, one widely studied error model is the ``erasure model,'' wherein each qubit may be removed with some probability, and we know which qubits are removed. The performance of quantum codes against erasure errors arises, in particular, in studies of the holographic AdS/CFT dictionary ~\cite{Almheiri2015,Pastawski2015}; if a logical bulk operator can be ``reconstucted'' on a portion of the boundary, that means that erasure of the complementary portion of the boundary is correctable for that logical operator. By the no-cloning theorem, no code can tolerate erasure of more than 50\% of the qubits in the code block. In contrast, in our setup, erasure is correctable even if most of the qubits are removed. The catch is that the erased qubits must lie in $E$; removal of qubits in $H$ is not allowed. 

In studies of fault-tolerant quantum computing, the noise afflicting the physical qubits is usually assumed to be weak and weakly correlated. In a Hamiltonian formulation of the noise model, this means that each qubit in the computer is weakly coupled to a shared environment \cite{preskill2013sufficient}. In contrast, for the noise model described by $U_{\mathcal{E}}$, the noise may act strongly on all the qubits in $E$; the only restriction is that the noise has quantum complexity scaling polynomially with $|H|$. Furthermore, how the noise acts depends on the initial state $\omega_O$ of the observer $O$, which may be chosen adversarially. Again, what makes successful error correction possible is that the subsystem $H$ is assumed to be noiseless, an assumption that would be unrealistic for typical quantum computing hardware. 

Codes that can protect against this malicious typle of noise are central to our proposed resolution of the firewall paradox. An old black hole provides such a code if its previously emitted radiation is pseudorandom. The code corrects errors successfully if the noise acting on $E$ has low complexity and the remaining black hole $H$ is noiseless, provided that the observer $O$ is small compared to $H$. 


\subsection{Correcting low-complexity errors}
For simplicity, we will first consider a scenario without the probe $P$ shown in Figure~\ref{fig:bh_code}. We will see that the error applied to the radiation system $E$ is (approximately) correctable. In Section~\ref{sec:probe} we will explain how our conclusion changes when the probe $P$ is included.

A central result in the theory of quantum error correction is the \emph{information-disturbance relation}, which states that a code can protect quantum information from noise if and only if the ``environment'' of the noise channel $\mathcal{E}$ learns nothing about the logical information. More precisely, there is a physical process $\mathcal{R}$, the recovery process, which reverses the error:
\begin{equation}
    \mathcal{R} \circ \mathcal{E} \approx \mathcal{I}, \label{eq:recovery}
\end{equation}
where $\mathcal{I}$ is the identity operation, if and only if the ``reference system'' that purifies the  quantum error-correcting code decouples from the environment . In Figure~\ref{fig:bh_code} (neglecting the probe $P$), the environment of the noise channel $\mathcal{E}$ acting on $E$ is the observer $O$, and $B$ is the reference system that purifies the encoded interior mode $\tilde B$. Therefore the necessary and sufficient condition for (approximate)  correctability is the (approximate) decoupling of $B$ and $O$,
\begin{equation}
    \rho_{OB} \approx \rho_{O} \otimes \rho_{B} \label{eq:decoupled_OB}
\end{equation}
where $\rho_{OB}$ is the reduced density operator for $OB$. Here the approximation errors of equation~\eqref{eq:recovery} and equation~\eqref{eq:decoupled_OB} are related to each other by a constant factor. Therefore, using the decoupling bound~\eqref{eq:decoupling_main}, we can conclude that there exists a recovery process $\mathcal{R}$ that reverses $\mathcal{E}$ up to an error exponentially small in $|H|$, as long as $|O| \ll |H|$ and assuming that the Hawking radiation is pseudorandom. 

Various formal statements that imply the existence of $\mathcal{R}$ in equation~\eqref{eq:recovery} are known; for the reader's convenience, we reproduce some of these results below. First, let us properly define what it means for a channel to be approximately correctable with respect to some code subspace --- a more comprehensive discussion can be found in \cite{BenyOreshkov10}. Let $S(\mathcal{H})$ denote the set of states on a Hilbert space $\mathcal{H}$. Suppose that we are given channels $\mathcal{E}, \mathcal{N}: S(\mathcal{H})\rightarrow S(\mathcal{H})$. Fixing a state $\rho \in S(\mathcal{H})$, we define the \emph{entanglement fidelity} between $\mathcal{E}$ and $\mathcal{N}$ with respect to $\rho$ to be
\begin{align}
    F_\rho(\mathcal{E},\mathcal{N}) = f\left[(\mathcal{E}\otimes \mathcal{I})(|\psi\rangle\langle\psi|),(\mathcal{N}\otimes \mathcal{I})(|\psi\rangle\langle\psi|)\right],
\end{align}
where $|\psi\rangle$ is a purification of $\rho$, and where
\begin{align}
    f(\rho,\tau) = \mathrm{Tr}\left(\sqrt{\sqrt{\tau}\rho\sqrt{\tau}}\right)
\end{align}
is the usual fidelity between states $\rho$ and $\tau$. To quantify the  closeness of two channels, we use the worst-case entanglement fidelity to define  the \emph{Bures distance}, given by
\begin{align}\label{eq:bures-define}
    \mathfrak{B}(\mathcal{E},\mathcal{N}) = \max_{\rho}\sqrt{1-F_\rho(\mathcal{E},\mathcal{N})};
\end{align}
we sometimes define a more restricted notion of the Bures distance, where we maximize over states in some specified subspace. In discussions of error correction, we say that a noise channel $\mathcal{E}$ is $\epsilon$-correctable with respect to a code subspace $\mathcal{C}\subseteq \mathcal{H}$ if there exists a recovery channel $\mathcal{R}$ such that
\begin{align}\label{eq:bures-epsilon}
    \mathfrak{B}(\mathcal{R}\circ\mathcal{E},\mathcal{I}) \le \epsilon,
\end{align}
where the maximization in the Bures metric is over all code states $\rho$ with support on $\mathcal{C}$. 

The Bures metric is bounded above and below by the trace norm as
\begin{align}
    2\mathfrak{B}^2(\mathcal{E},\mathcal{N}) &\le \max_{\rho}\left\|(\mathcal{E}\otimes \mathcal{I})(|\psi\rangle\langle\psi|) - (\mathcal{N}\otimes \mathcal{I})(|\psi\rangle\langle\psi|)\right\|_1\le 2\sqrt{2}\mathcal{B}(\mathcal{E},\mathcal{N}).\label{eq:bures_inequality}
\end{align}
The norm in the middle is essentially the diamond-norm distance between the channels $\mathcal{E}$ and $\mathcal{N}$~\cite{MikeIke}, except that for the purpose of characterizing error correction the maximization is over code states only. Applying this inequality and tracing out the purifying system, the $\epsilon$-correctability of a channel $\mathcal{E}$ implies that we have
\begin{equation}
    \max_{\rho} \|(\mathcal{R} \circ \mathcal{E})(\rho) - \rho \|_1 \leq 2\sqrt{2}\epsilon,\label{eq:correctable}
\end{equation}
where again the maximization is over code states.

As mentioned previously, an important result characterizing approximate correctability is the information-disturbance trade-off, which we now state quantitatively. Let $\mathcal{E}:S(\mathcal{H}_A)\rightarrow S(\mathcal{H}_A)$ be a noise channel acting on a system $A$, and let $V:\mathcal{H}_A\rightarrow \mathcal{H}_F\otimes \mathcal{H}_A$ be an isometry which purifies $\mathcal{E}$; i.e.,
\begin{align}
    \mathcal{E}(\rho) = \mathrm{Tr}_{F}(V\rho V^\dagger).
\end{align}
Hence $F$ is the environment of the channel; we have resisted the temptation to denote the environment by $E$ to avoid confusion with our convention that $E$ denotes a subsystem of the Hawking radiation.
Then the \emph{complementary channel} $\widehat{\mathcal{E}}:S(\mathcal{H}_A)\rightarrow S(\mathcal{H}_F)$ is defined by
\begin{align}
    \widehat{\mathcal{E}}(\rho) = \mathrm{Tr}_A(V\rho V^\dagger).
\end{align}
A special case of interest is the identity channel $\mathcal{I}$. Taking the environment to be $1$-dimensional, the complementary channel to the identity channel is simply the (partial) trace
\begin{align}
    \widehat{\mathcal{I}}(\rho) = \mathrm{Tr}_A(\rho).
\end{align}
Then the information-disturbance trade-off states the following:

\begin{theorem}[\cite{BenyOreshkov10}, Theorem 1]\label{thm:decoupling-tradeoff}
Let $\mathcal{C} \subseteq \mathcal{H}_A$ be a code subspace. Let $\mathcal{E}:S(\mathcal{H}_A)\rightarrow S(\mathcal{H}_A)$ be an error channel. Then
\begin{align}
    \inf_{\mathcal{R}}\mathfrak{B}\left(\mathcal{R}\circ \mathcal{E}, \mathcal{I}\right) 
    = \inf_{\mathcal{R}'} \mathfrak{B}\left(\widehat{\mathcal{E}}, \mathcal{R}'\circ \mathrm{Tr}\right),\label{eq:info_dist_tradeoff}
\end{align}
where the infimums are taken over all channels $\mathcal{R}:S(\mathcal{H}_A)\rightarrow S(\mathcal{H}_A)$, and $\mathcal{R}':\mathbb{R}\rightarrow S(\mathcal{H}_F)$.
\end{theorem}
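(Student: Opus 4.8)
The plan is to prove the equality by reducing both optimization problems to statements about a single tripartite pure state and then invoking Uhlmann's theorem, which converts the complementarity between recovery and decoupling into an exact identity. First I would fix a code state $\rho$ supported on $\mathcal{C}$ together with a purification $|\psi\rangle_{AR}$, where $R$ is a reference system, and apply the Stinespring isometry $V$ to obtain the pure state $|\Phi\rangle_{FAR} = (V\otimes I_R)|\psi\rangle_{AR}$. The two channels in the theorem act on complementary marginals of this one state: $(\mathcal{E}\otimes\mathcal{I})(|\psi\rangle\langle\psi|) = \mathrm{Tr}_F|\Phi\rangle\langle\Phi| = \rho_{AR}$ is what a recovery map on $A$ must restore, while $(\widehat{\mathcal{E}}\otimes\mathcal{I})(|\psi\rangle\langle\psi|) = \mathrm{Tr}_A|\Phi\rangle\langle\Phi| = \rho_{FR}$ is the environment--reference state we wish to decouple. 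Since $|\psi\rangle$ is pure, the relevant entanglement fidelities collapse to overlaps: the left-hand fidelity is $\sqrt{\langle\psi|(\mathcal{R}\otimes\mathcal{I})(\rho_{AR})|\psi\rangle}$, and the right-hand fidelity is $f(\rho_{FR}, \tau_F\otimes\rho_R)$, where $\tau_F = \mathcal{R}'(1)$ is the fixed output of the replacer channel $\mathcal{R}'\circ\mathrm{Tr}$ and $\rho_R = \mathrm{Tr}_A|\psi\rangle\langle\psi|$.

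The core step is a \emph{state-wise identity}: for fixed $\rho$, $\max_{\mathcal{R}}\sqrt{\langle\psi|(\mathcal{R}\otimes\mathcal{I})(\rho_{AR})|\psi\rangle} = \max_{\tau_F} f(\rho_{FR}, \tau_F\otimes\rho_R)$. To establish it, I would dilate an arbitrary recovery $\mathcal{R}:A\to A$ to an isometry $U_{\mathcal{R}}:\mathcal{H}_A\to\mathcal{H}_A\otimes\mathcal{H}_G$ and observe that applying $U_{\mathcal{R}}$ to $|\Phi\rangle$ leaves $\rho_{FR}$ untouched, since it acts only on $A$; hence the recovered global state is itself a purification of $\rho_{FR}$ on the enlarged system $AG$. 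The recovery fidelity is then the overlap of this purification with the target $|\psi\rangle_{AR}\otimes|\eta\rangle_{FG}$, where $|\eta\rangle_{FG}$ purifies some $\tau_F$. Maximizing over $U_{\mathcal{R}}$ and over the junk state $|\eta\rangle_{FG}$ is precisely the Uhlmann maximization of the overlap between purifications of $\rho_{FR}$ (namely $|\Phi\rangle_{FAR}$) and purifications of $\tau_F\otimes\rho_R$ (namely $|\eta\rangle_{FG}\otimes|\psi\rangle_{AR}$), and Uhlmann's theorem identifies this maximum with $f(\rho_{FR}, \tau_F\otimes\rho_R)$. This delivers the identity exactly, with the optimal recovery corresponding to the Uhlmann isometry and the optimal $\tau_F$ being $\rho_F=\mathrm{Tr}_R\rho_{FR}$.

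Finally I would lift the state-wise identity to the worst-case Bures distances in the theorem. Writing $\mathfrak{B}$ through $1-F_\rho$, the left-hand side is $\inf_{\mathcal{R}}\max_\rho\sqrt{1-F_\rho(\mathcal{R}\circ\mathcal{E},\mathcal{I})}$ and the right-hand side is $\inf_{\mathcal{R}'}\max_\rho\sqrt{1-F_\rho(\widehat{\mathcal{E}},\mathcal{R}'\circ\mathrm{Tr})}$, both maximized over code states $\rho$ with support on $\mathcal{C}$. The state-wise identity matches the inner quantities once the order of optimization is aligned, so what remains is to exchange the infimum over maps with the maximum over code states on each side. I expect this \emph{minimax exchange} to be the main obstacle: one must check that the fidelity functionals are suitably concave/convex and that the optimization domains (channels $\mathcal{R}$, output states $\tau_F$, and density operators $\rho$) are convex and compact, so that Sion's minimax theorem applies and a single optimal recovery, respectively a single optimal $\tau_F$, handles the worst-case code state. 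Granting the exchange, both sides reduce to the same quantity and the equality follows, with the complementarity structure of $|\Phi\rangle_{FAR}$ guaranteeing that the optimal recovery on the left and the optimal decoupling witness on the right are in exact correspondence.
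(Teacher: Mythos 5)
The paper itself contains no proof of Theorem~\ref{thm:decoupling-tradeoff}: it is imported verbatim from \cite{BenyOreshkov10} (their Theorem 1) as a known tool, so your attempt can only be compared against the original B\'eny--Oreshkov argument, which it essentially reconstructs: a pointwise (fixed code state) duality via Uhlmann's theorem, followed by a minimax exchange to pass to the worst-case Bures distance. Your pointwise step is correct and well executed. With $|\psi\rangle_{AR}$ pure, $F_\rho(\mathcal{R}\circ\mathcal{E},\mathcal{I})$ is an overlap; dilating $\mathcal{R}$ to an isometry $U_{\mathcal{R}}:A\to AG$ and converting the partial trace over $FG$ into a maximization over unit vectors $|\eta\rangle_{FG}$ exhibits the recovery fidelity as an overlap between purifications of $\rho_{FR}$ (the states $(U_{\mathcal{R}}\otimes I_{FR})|\Phi\rangle$, which exhaust such purifications as $U_{\mathcal{R}}$ varies) and purifications of $\tau_F\otimes\rho_R$ on the common purifying system $AG$; Uhlmann then yields $\max_{\mathcal{R}}F_\rho(\mathcal{R}\circ\mathcal{E},\mathcal{I})=\max_{\tau_F}f(\rho_{FR},\tau_F\otimes\rho_R)$. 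One aside is false, though harmless since you maximize over $\tau_F$ anyway: the optimizer is generally \emph{not} $\rho_F=\mathrm{Tr}_R\,\rho_{FR}$ (for $\rho_{FR}$ pure and entangled, the optimal $\tau_F$ is a pure state, not the mixed marginal).

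The genuine gap is the step you flag but do not carry out: the minimax exchange, which is needed on \emph{both} sides. The pointwise identity equates $\min_\rho\sup_{\mathcal{R}}$ with $\min_\rho\sup_{\tau_F}$, whereas the theorem equates $\sup_{\mathcal{R}}\min_\rho$ with $\sup_{\tau_F}\min_\rho$; weak duality gives $\sup\min\le\min\sup$ on each side separately, so without \emph{two} exchanges you only obtain inequalities pointing in opposite directions relative to the middle quantity, and the argument does not close. For Sion's theorem you need quasi-convexity in $\rho$ of each functional. On the recovery side this is easy: $F_\rho(\mathcal{R}\circ\mathcal{E},\mathcal{I})^2=\sum_k|\mathrm{Tr}(\rho A_k)|^2$ for Kraus operators $A_k$ of $\mathcal{R}\circ\mathcal{E}$ restricted to the code, which is convex in $\rho$, and a monotone function of a convex function is quasi-convex. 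On the complementary side, however, $f(\rho_{FR},\tau_F\otimes\rho_R)$ has $\rho$ entering \emph{both} arguments nonlinearly (through $\sqrt{\rho}$ in the canonical purification), so convexity is not apparent, and joint concavity of the fidelity pushes in the wrong direction. The missing lemma that repairs this is: for \emph{any} two channels $\mathcal{M},\mathcal{N}$, the map $\rho\mapsto F_\rho(\mathcal{M},\mathcal{N})$ is convex. Sketch: write $\rho=\sum_i\lambda_i\rho_i$, purify coherently as $\sum_i\sqrt{\lambda_i}\,|\psi_{\rho_i}\rangle_{AR}|i\rangle_{R'}$ (entanglement fidelity is purification-independent), dephase the flag $R'$ in both arguments (fidelity is non-decreasing under a common channel), and use $f\bigl(\oplus_i\lambda_i\sigma_i,\oplus_i\lambda_i\sigma_i'\bigr)=\sum_i\lambda_i f(\sigma_i,\sigma_i')$ to conclude $F_\rho\le\sum_i\lambda_i F_{\rho_i}$. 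With this lemma, plus concavity in the channel argument (affine dependence composed with joint concavity of $f$) and compactness/convexity of the domains, Sion applies on both sides and your proof is complete.
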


Note that a channel $\mathcal{R}':\mathbb{R}\rightarrow S(\mathcal{H}_F)$ is just state preparation on the channel environment $\mathcal{H}_F$, i.e., every such channel $\mathcal{R}'$ is uniquely identified with a state $\sigma_F \in S(\mathcal{H}_F)$ such that
\begin{align}
    (\mathcal{R}'\circ\mathrm{Tr})(\rho) = \mathrm{Tr}(\rho)\,\sigma_F,
\end{align}
so we can equivalently write
\begin{align}
     \inf_{\mathcal{R}'} \mathfrak{B}\left(\widehat{\mathcal{E}}, \mathcal{R}'\circ \mathrm{Tr}\right) = \inf_{\sigma_F} \mathfrak{B}\left(\widehat{\mathcal{E}}, \sigma_F\otimes \mathrm{Tr}\right).
\end{align}

Now let's see what equation~\eqref{eq:info_dist_tradeoff} tells us in the context of the black hole error-correcting code defined by the (approximate) isometry $V_\Psi$. Let $\tilde{\rho}_{\tilde{B}}$ be a logical state and let $\tilde{\rho}_{\tilde{B}B}$ be a purification. The isometry $V_{\Psi}$ then embeds $\tilde{\rho}_{\tilde{B}B}$ as a (purified) code state $\rho_{EHB}$:
\begin{align}
    \rho_{EHB} = V_{\Psi}\tilde{\rho}_{\tilde{B}B}V^{\dagger}_{\Psi}.
\end{align}
Let $\mathcal{E}:S(\mathcal{H}_{E}) \rightarrow S(\mathcal{H}_{E})$ be an arbitrary channel acting on $E$ such that some purification $U_\mathcal{E}$ of $\mathcal{E}$ has low-complexity (see the set-up described in Figure~\ref{fig:setup}). Let
\begin{align}
    \sigma_{OEHB} = U_{\mathcal{E}}(\omega_O\otimes\rho_{EHB})U^{\dagger}_{\mathcal{E}}\label{eq:post_evol}
\end{align} 
denote the overall post-evolution state. To apply Theorem~\ref{thm:decoupling-tradeoff}, let us consider the error channel $\mathcal{E}\otimes \mathcal{I}_H$. Then the environment of the channel $\mathcal{E}\otimes \mathcal{I}_H$ is the observer subsystem $O$, and the complementary channel $\widehat{\mathcal{E}\otimes \mathcal{I}_H}$ maps $S(\mathcal{H}_{EH})$ to $S(\mathcal{H}_{O})$. From~\eqref{eq:post_evol}, the state obtained from $\rho_{EHB}$ after the application of $\widehat{\mathcal{E}\otimes \mathcal{I}_H}$ is precisely given by
\begin{align}
    \left(\widehat{\mathcal{E}\otimes \mathcal{I}_H}\otimes \mathcal{I}_B\right)(\rho_{EHB}) = \mathrm{Tr}_{EH}\left(\sigma_{OEHB}\right) = \sigma_{OB}.
\end{align}
Since $\sigma_{OEHB}$ was a state obtained through acting on the black hole code state 
$\rho_{EHB}$ with a low-complexity unitary, it follows by the pseudorandom hypothesis that the decoupling bound~\eqref{eq:decoupling_main} holds. Therefore we have
\begin{align}
    \|\sigma_{OB} - \sigma_O\otimes \sigma_B\|_1 \le 6\cdot 2^{-(\alpha|H|-|O|)}.
\end{align}
Finally, since $U_\mathcal{E}$ is supported away from $B$, we have $\sigma_B = \rho_B$, and so
\begin{align}
    \left\|\left(\widehat{\mathcal{E}\otimes \mathcal{I}_H}\otimes \mathcal{I}_B\right)(\rho_{EHB}) - \sigma_O\otimes \rho_B\right\|_1 \le 6\cdot 2^{-(\alpha|H|-|O|)}.\label{eq:channel_decouple}
\end{align}

This holds for all code states, so~\eqref{eq:channel_decouple}, together with the first inequality in~\eqref{eq:bures_inequality}, implies that we have
\begin{align}
    \inf_{\sigma_O}\mathfrak{B}\left(\widehat{\mathcal{E}\otimes \mathcal{I}_H}, \sigma_O\circ \mathrm{Tr}_{EH}\right) \le \sqrt{3}\cdot 2^{-(\alpha|H|-|O|)/2}.
\end{align}
Therefore, the channel $\mathcal{E}$ is approximately correctable by Theorem~\ref{thm:decoupling-tradeoff}. We state this as a Lemma.

\begin{lemma}\label{lem:correctable}
Let $V_{\Psi}$ be the approximate isometric embedding defined by the state $\Psi_{EHB}$. Let $\mathcal{E}$ be an error channel on $E$ with purification $U_{\mathcal{E}}$. Suppose that the decoupling bound~\eqref{eq:decoupling_main} holds. Then $\mathcal{E}$ is $\epsilon$-correctable for $V_{\Psi}$, where
\begin{align}
    \epsilon = \sqrt{3}\cdot 2^{-(\alpha|H| - |O|)/2},
\end{align}
if $B$ is a single qubit. For general $|B|$, we have
\begin{align}
    \epsilon = \sqrt{\frac{3}{2}}\cdot 2^{-(\alpha|H| - |OB|)/2}.
\end{align}
\end{lemma}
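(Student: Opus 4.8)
The plan is to reduce the claimed correctability directly to the information-disturbance trade-off of Theorem~\ref{thm:decoupling-tradeoff}, using the decoupling bound~\eqref{eq:decoupling_main} to control the right-hand side of~\eqref{eq:info_dist_tradeoff}. First I would fix an arbitrary logical state $\tilde{\rho}_{\tilde{B}}$ with purification $\tilde{\rho}_{\tilde{B}B}$ and embed it as the code state $\rho_{EHB} = V_{\Psi}\tilde{\rho}_{\tilde{B}B}V_{\Psi}^{\dagger}$, treating $B$ as the reference system that purifies the encoded qubit. The error to be corrected is $\mathcal{E}$ acting on $E$; to apply the trade-off I would instead consider the enlarged channel $\mathcal{E}\otimes\mathcal{I}_H$ on $EH$, whose purifying isometry is supplied by $U_{\mathcal{E}}$ (acting on $OE$) together with the identity on $H$, so that its environment is exactly the observer $O$.

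The central identification is that the complementary channel $\widehat{\mathcal{E}\otimes\mathcal{I}_H}$, which maps $S(\mathcal{H}_{EH})$ to $S(\mathcal{H}_O)$, produces on the reference precisely the marginal $\sigma_{OB}=\mathrm{Tr}_{EH}(\sigma_{OEHB})$, where $\sigma_{OEHB}$ is the post-evolution state~\eqref{eq:post_evol}. Since $\sigma_{OEHB}$ is obtained by a low-complexity unitary acting on a code state of the pseudorandom $\Psi_{EHB}$, the decoupling bound~\eqref{eq:decoupling_main} applies; using that $U_{\mathcal{E}}$ is supported away from $B$ so that $\sigma_B=\rho_B$, this yields the channel-level estimate~\eqref{eq:channel_decouple}, namely $\|\sigma_{OB}-\sigma_O\otimes\rho_B\|_1\le 6\cdot 2^{-(\alpha|H|-|O|)}$, uniformly over code states.

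Next I would feed this into the left-hand inequality of~\eqref{eq:bures_inequality}, $2\mathfrak{B}^2\le\|\cdot\|_1$, identifying the constant channel with $\sigma_O\otimes\mathrm{Tr}_{EH}$, to obtain $\mathfrak{B}(\widehat{\mathcal{E}\otimes\mathcal{I}_H},\,\sigma_O\otimes\mathrm{Tr}_{EH})\le\sqrt{3}\cdot 2^{-(\alpha|H|-|O|)/2}$ after the maximization over code states. Theorem~\ref{thm:decoupling-tradeoff} then transfers this bound to the recovery side, giving $\inf_{\mathcal{R}}\mathfrak{B}(\mathcal{R}\circ\mathcal{E},\mathcal{I})\le\sqrt{3}\cdot 2^{-(\alpha|H|-|O|)/2}$, which is exactly $\epsilon$-correctability in the sense of~\eqref{eq:bures-epsilon} with the stated $\epsilon$ for a single-qubit $B$. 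For general $|B|$ I would simply replace the single-qubit decoupling bound by its $|B|$-dependent form $3\cdot 2^{|OB|-\alpha|H|}$ read off from the chain~\eqref{eq:decoupling-derive}, which propagates through the same steps to give $\epsilon=\sqrt{3/2}\cdot 2^{-(\alpha|H|-|OB|)/2}$.

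The only genuinely delicate step is the middle one: verifying that the complementary channel of the \emph{dilated} error $\mathcal{E}\otimes\mathcal{I}_H$ really has $O$ as its environment and reproduces $\sigma_{OB}$ on the reference. Everything else is bookkeeping of the numerical constants and careful propagation of the $2\mathfrak{B}^2\le\|\cdot\|_1$ conversion; the hard conceptual content is already packaged inside Theorem~\ref{thm:decoupling-tradeoff} and the decoupling bound, so no new estimates are needed.
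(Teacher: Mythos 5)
Your proposal is correct and follows essentially the same route as the paper's own proof: embedding the logical state via $V_{\Psi}$, dilating $\mathcal{E}\otimes\mathcal{I}_H$ with environment $O$, identifying the complementary channel's output with $\sigma_{OB}$, invoking the decoupling bound~\eqref{eq:decoupling_main} together with $\sigma_B=\rho_B$, and converting via $2\mathfrak{B}^2\le\|\cdot\|_1$ and Theorem~\ref{thm:decoupling-tradeoff}. The numerical constants, including the general-$|B|$ case obtained from the bound $3\cdot 2^{|OB|-\alpha|H|}$ in~\eqref{eq:decoupling-derive}, also match the paper's.
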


\noindent 
Note that the recovery operator $\mathcal{R}$ acts on $EH$ rather than $E$. The same will be true for the ghost logical operators we construct in Section~\ref{sec:qec}.

\subsection{Including the probe\label{sec:probe}}

We would now like to consider a modified scenario in which both the observer $O$ and a probe $P$ interact with the Hawking radiation system $E$, as indicated in Figure~\ref{fig:setup}. We cannot simply absorb $P$ into $O$, because we will continue to insist that $O$ is small compared to $H$, while we wish to allow $P$ to be comparable to $H$ in size, or even larger. In this modified scenario, the unitary purification $U_\mathcal{E}$ of the noise model acts on $OPE$ rather than $OE$. This change does not alter the conclusion that $O$ and $B$ decouple if $U_\mathcal{E}$ has low complexity. Therefore, just as before, there is a recovery map that reverses the effect of the noise on the encoded state. What changes is that now the recovery map acts on $PEH$ rather than $EH$.



We emphasize that if the probe $P$ is sufficiently large, then $P$ need not decouple from $B$, even if $U_\mathcal{E}$ has low complexity.  To understand why not, suppose $P$ has the same size as the system $E$ and that the channel $\mathcal{E}$ swaps $P$ and $E$. Before this swap, $B$ is entangled with the code space embedded in $EH$; therefore after the swap (a low-complexity operation), $B$ is entangled with $PH$. More realistically, we might imagine that $P$ is a cloud of dust surrounding the black hole, and that $|P|\gg |E|$. After the dust interacts with the Hawking radiation, the encoding of $\tilde B$ will be modified, so that $B$ is entangled with a code subspace of $PEH$ rather than a subspace of $EH$~\cite{Bousso2014}. 

However, any subsystem of $OP$ which is small compared to $H$ will decouple from $B$, as long as $U_{\mathcal{E}}$ has low complexity, and assuming that the Hawking radiation is pseudorandom. The only way to distill the encoded state into a small subsystem is to perform a high complexity operation. Hence, if only low-complexity operations are allowed, we need not worry about a scenario in which the encoded version of $\tilde B$ outside the horizon is decoded into a small system, and then falls into the black hole to meet its twin in the interior.
This is essentially the observation of Harlow and Hayden~\cite{Harlow2013}, later extended by Aaronson \cite{Aaronson2016}. Our analysis goes further by clarifying that the encoded state is hard to distill even when the remaining black hole $H$ is much smaller than $E$, as long as $H$ is macroscopic and assuming that the Hawking radiation is pseudorandom. 

One might wonder whether the encoded mode can be easily extracted if the probe system $P$ is prepared in a carefully chosen state~\cite{Oppenheim2014}. Our conclusion is that any such initial state of $P$ would need to have exponential complexity, an unlikely property for the dust surrounding an evaporating black hole. One might also ask what happens if all the qubits in the early radiation system $E$ are measured in the standard basis by the observer. Surely this \emph{would} disrupt the encoded interior of the black hole. But in our model the number of radiation qubits that can be measured is limited by the size $|O|$ of the observer's memory, and the interior will stay well protected as long as $|O|$ is much smaller than $|H|$.

It is also instructive to view the system $O$ in a different way. Up to now we have regarded $O$ as a potentially malicious agent who attempts to damage the encoded interior of the black hole by acting on its exterior. More prosaically, we can think of $O$ as an abstract purifying space which is introduced for convenience so that we can describe the noise channel $\mathcal{E}$ using its purification, the unitary transformation $U_{\mathcal{E}}$. From that point of view, limiting the size $|O|$ of the ``observer'' $O$ is just a convenient way of restricting the form of the quantum channel $\mathcal{E}$. Specifically, the rank of the marginal density operator $\rho_O$ after $U_{\mathcal{E}}$ is applied is called the Kraus rank (or simply the rank) of the channel $\mathcal{E}$. This rank can be no larger than the dimension of system $O$, namely $2^{|O|}$, which we have assumed to be small compared to the dimension $2^{|H|}$ of the Hilbert space of black hole microstates.
Thus our conclusion can be restated: If the Hawking radiation is pseudorandom and $H$ is macroscopic, then the quantum error-correcting code protects the encoded version of $\tilde B$ against any noise channel acting on $PE$ that has both low complexity \emph{and} low rank.



An advantage of this viewpoint is that one might otherwise be misled into interpreting $|O|$ as the physical size of an actual observer. More accurately, it can be regarded as the effective size of the quantum memory of a physical object. This distinction is significant. For an object of specified mass, the largest possible quantum memory is achieved by a black hole of that mass, but the memory size of a quantum computer typically falls far short of that optimal value, because most of its mass is locked into the rest mass of atomic nuclei and unavailable for information processing purposes. Furthermore, the mass per unit volume of a typical quantum computer is far smaller than a black hole's. Therefore it is reasonable to expect that the effective Hilbert space dimension of system $O$ (and hence the Kraus rank of the channel $\mathcal{E}$) is far smaller than the Hilbert space dimension of a black hole with the same circumference as system $O$.

Up until now we have mostly focused on the hardness of decoding the black hole interior mode by acting on the Hawking radiation outside the black hole, concluding that distilling the encoded system to a small quantum memory is computationally hard if the remaining black hole is macroscopic. In section \ref{sec:qec} we will turn to a more subtle question: Can a low-complexity operation acting on the Hawking radiation system $E$ create an excitation near the black hole horizon that could be detected by an infalling observer who falls into the black hole? Here too we will argue that the answer is no. This is a nontrivial extension beyond what we have found so far --- on the face of it, perturbing a quantum state is a far easier task than depositing the state in a compact quantum memory. 

Bousso emphasized that if the interior mode $\tilde B$ is encoded in $EH$, and if effective quantum field theory on curved spacetime is a good approximation in regions of low curvature, then the vacuum near the black hole horizon would need to be ``frozen'' ~\cite{Bousso2014}. That is, neither a small agent $O$ acting on $E$ nor a large probe $P$ interacting with $E$ could disrupt the entanglement of $\tilde B$ with $B$ and hence create an excitation localized near the horizon. We agree with this conclusion, provided that $|H| \gg 1$ and that the interactions of $OP$ with $E$ have quantum complexity scaling polynomially with $|H|$. Interactions with the large probe may alter how the black hole interior is encoded in the radiation and probe, but they do not disrupt the frozen vacuum. 

Once $|H|$ is O(1), large corrections to effective field theory may be expected. Furthermore, the semiclassical structure of spacetime may no longer be applicable in the regime where operations of superpolynomial complexity are allowed; these high-complexity operations could tear spacetime apart. In particular, our expectation that an agent acting on $E$ should be unable to influence the black hole interior might be flagrantly violated if the agent can perform high-complexity operations. We should grow accustomed to the notion that for effective field theory to be an accurate approximation, we require not only geometry with low curvature and states with low energy, but also operations with low complexity and low Kraus rank.

To investigate whether the semiclassical causal structure is robust with respect to low-complexity operations we will need to develop some additional formalism, specifically the theory of \emph{ghost logical operators}; in the context of an old black hole, these may be viewed as operators which act on the black hole interior. We would like to understand, given that the interior is encoded in the Hawking radiation outside the black hole, why low-complexity operations acting on the Hawking radiation produce no detectable excitations inside the black hole. We turn to that task next. 


\section{Theory of ghost logical operators}\label{sec:qec}

So far, we have argued that the late radiation system $B$ remains decoupled from any sufficiently small subsystem of the early radiation $E$ and the probe $P$, when the observer $O$ performs a low-complexity operation on $EP$. Therefore an infalling observer with reasonable computational power is prevented from extracting the encoded interior mode before jumping into the black hole. But what if the observer settles for the seemingly easier task of disrupting the interior rather than decoding it? In this section we will show that an algebra of \emph{ghost logical operators} can be constructed acting on the interior mode, with the property that low-complexity operations performed outside the black hole nearly commute with the ghost algebra. Hence, if these ghost operators are regarded as operations that can be performed by an observer inside the black hole, we may conclude that the interior is well protected against the actions of malicious agents outside the black hole. 


Following arguments from~\cite{Almheiri2013a}, consider an operator $T$ which acts on the interior mode. Because the corresponding encoded operator acting on the Hawking radiation is highly scrambled, the commutator of this encoded operator with a generic simple operator acting on the radiation has no reason to be small. It seems, then, that an external observer should be able to perturb the interior mode easily \cite{Almheiri2013a,Bousso2014}. Can this conclusion be evaded by constructing the encoded operators suitably? For two-side black holes in AdS/CFT, 
Papadodimas and Raju argued that ``mirror operators'' with the desired properties can be constructed ~\cite{Papadodimas2013,Papadodimas2016}, but no satisfactory construction is known for evaporating black holes.

Within our simple toy model of evaporating black hole, we can construct analogues of the mirror operators. Assume that the decoupling bound~\eqref{eq:decoupling_main} holds. Then, as we will see, for every operator $\tilde{T}_B$ acting on some outgoing mode $B$, there exists a ``mirror operator" $T_{EH}$ acting on $EH$ which satisfies the following conditions:
\begin{equation}
\begin{aligned}
    &\tilde{T}_{B}|\Psi\rangle \approx T_{EH}|\Psi \rangle, \\
    &[T_{EH}, E_a]|\Psi\rangle \approx 0,
\end{aligned}\label{eq:mirror}
\end{equation}
where $\{E_a\}$ is a set of operators that a computationally bounded external observer can apply on the radiation, and $|\Psi\rangle$ is the state of the radiation and the black hole. The equations~\eqref{eq:mirror} hold up to an error exponentially small in $|H|$. The first line implies that one can (in principle if not in practice) certify entanglement between an outgoing radiation mode and an abstract subsystem specified by the operators $\{T_{EH}\}$.\footnote{For example, one could perform Bell tests using the Pauli operators acting on $B$ and its mirror.} Therefore, these operators satisfy the right measurement statistics expected for sensibly defined interior operators. The second line implies that these operators approximately commute with all the operators that the external observer can apply. The fact that $T_{EH}$ commutes with $\{E_a\}$ holds as an operator equation on all the states in the code subspace. Therefore, the subsystem specified by the mirror operators $\{T_{EH}\}$ is fully entangled with the late outgoing radiation modes while also being effectively ``space-like separated'' from the external observer. That is, the external observer can disrupt the semiclassical causal structure of the black hole only by applying operations of superpolynomial complexity to the radiation. 

In our construction, it is important to properly characterize the set $\{E_a\}$ of operators that the exterior observer can apply to the radiation. If we view the observer, the black hole, and the exterior radiation as a closed system, we ought to model the entire evolution as a unitary process. In order to enforce the unitarity of this process, the operator applied by the observer to the radiation should depend on the initial state of the observer, as in Figure~\ref{fig:action_radiation}.
\begin{figure}[ht]
    \centering
    \includegraphics[width=0.25\columnwidth]{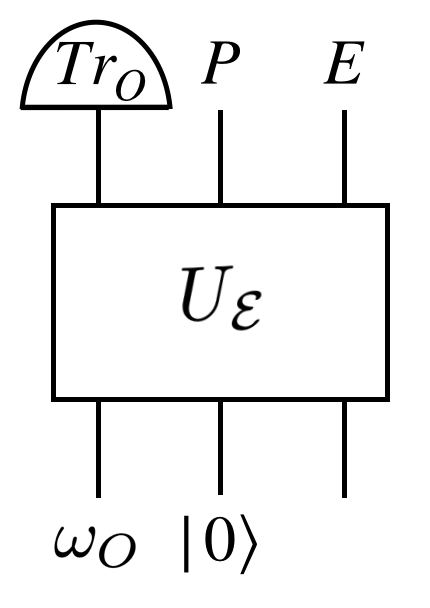}
    \caption{The operator applied by an exterior observer to the Hawking radiation depends on the observer's  initial state $\omega_O$, the probe's initial state $|0\rangle$, and the joint unitary transformation $U_{\mathcal{E}}$.}
    \label{fig:action_radiation}
\end{figure}

In this scenario, the set of operations that the observer can apply to the radiation is not completely arbitrary. Specifically, any such operation must be of the following form:
\begin{equation}
    \rho_{PE} \mapsto \mathrm{Tr}_O \big( U_{\mathcal{E}}(\omega_O \otimes \rho_{PE}) U_{\mathcal{E}}^{\dagger} \big), \label{eq:action_radiation}
\end{equation}
wherein the only freedom available to the observer is the choice of the initial state $\omega_O$. Because the observer is part of a system that is governed by the laws of physics, the observer's actions are determined entirely by that initial state, not by the global unitary process. 
One may view equation~\eqref{eq:action_radiation} as a quantum channel that acts on $PE$ with a Kraus representation and corresponding dilation given by 
\begin{equation}
\begin{aligned}
    \rho_{PE} &\mapsto \sum_a E_a\rho_{PE} E_a^{\dagger} \\
    &= \text{Tr}_O\left(\sum_{a,b}(|a\rangle_O \otimes E_a)\,\rho_{PE}\,(\langle b|_O \otimes E_b^{\dagger})\right),
\end{aligned}   
\end{equation}
where $\sum_{a}E_a^{\dagger} E_a = I$, and $\{|a\rangle\}$ is an orthonormal basis for $O$. Therefore, $E_a\rho_{PE}E_a^{\dagger}$ can be thought as a (subnormalized) post-selected state in which the state of the observer after interacting with the radiation is $|a\rangle_O$. Up to normalization, the operator that the observer applied on the radiation would be $E_a$ in that case. While we do not know the exact details about $\{E_a\}$, within our model we have the following non-trivial constraints:
\begin{enumerate}
    \item The cardinality of the set $\{E_a \}$ is bounded above by $d_O$, where $d_O=2^{|O|}$ is the dimension of the observer's Hilbert space.
    \item The global unitary evolution $U_{\mathcal{E}}$ has a complexity polynomial in the black hole entropy $\sim |H|$.
\end{enumerate}

The construction of the mirror operators rests on the observation that $V_{\Psi}$ defines the embedding map of a quantum error-correcting code that can protect quantum information against ``environmental noise" caused by the observer $O$; see Figure~\ref{fig:bh_code}. 
The error model induced by the observer is different from conventional error models that are typically considered in discussions of fault-tolerant quantum computing. For one, the error is applied only on the radiation $E$ and probe $P$, not the remaining black hole $H$. Secondly, $U_{\mathcal{E}}$ can apply any  operation to the radiation with complexity polynomial in $|H|$. In contrast,  more conventional noise models such as the depolarizing channel or the amplitude damping channel typically result from a brief interaction between the environment and the system of interest. 


We have already seen in Section~\ref{sec:bh_qecc} that the encoding map $V_\Psi$ protects quantum information against this exotic error model; this conclusion follows from the decoupling condition, which in turn is a consequence of the pseudorandomness of Hawking radiation as discussed in Section~\ref{sec:pr_decoupling}. Our next task is to relate this robustness against low-complexity noise to the claim in equation~\eqref{eq:mirror}.
The formalization and proof of this statement is the main technical contribution of this section. 

Before diving into details in the following subsections, let us summarize the conclusion. Consider an error model in which one applies either a channel $\mathcal{E}(\cdot) = \sum_a E_a (\cdot) E_a^{\dagger}$ or the identity channel, each occurring with nonzero probability. If a quantum error-correcting code $V_{\Psi}$ can correct such errors, then there is a complete set of logical operators that commutes with all the errors $\{E_a \}$ when acting on the code space; see Figure~\ref{fig:ghost}. That is, for any operator $\tilde{T}$ acting on the abstract logical space, there exists a corresponding logical operator $T$ acting identically on the code subspace such that $T$ satisfies the following intertwining condition for all $E_a$:
\begin{equation}\label{eq:commute-code-space}
    T E_a V_{\Psi} \approx E_a T V_{\Psi} \approx E_aV_{\Psi}\tilde{T}.
\end{equation}
These logical operators are special because the commutation relation holds as an operator equation acting on all the states in the code subspace. By mapping the isometry $V_{\Psi}$ back to the state $|\Psi\rangle$, we arrive at equation~\eqref{eq:mirror} and Figure~\ref{fig:ghost}. Note that this is a stronger statement than saying that the commutator of $T$ and $E_a$ has a vanishing expectation value in the code subspace, \emph{i.e.},
\begin{equation}
    V_{\Psi}^{\dagger}T E_a V_{\Psi} \approx V_{\Psi}^{\dagger}E_a T V_{\Psi}.
\end{equation}
In Section~\ref{sec:exactghost}, we will prove \eqref{eq:commute-code-space} in the exactly correctable setting. We will then generalize the construction to the approximate case in Section~\ref{sec:apxghost}. 

Equation~\eqref{eq:commute-code-space} also arises in the theory of Operator Algebra Quantum Error-Correction (OAQEC)~\cite{oaqec1,oaqec2}. However, in that context, one normally considers a logical operator $T$ which annihilates the orthogonal complement of the code space. A novelty of our discussion is that we will allow $T$ to have support extending beyond the code space. In that case, it is delicate to ensure that the action of $T$ on states outside the code space is consistent with \eqref{eq:commute-code-space}. More importantly, OAQEC was formulated in \cite{oaqec1,oaqec2} for the case of exact quantum error-correction. Our discussion in Section~\ref{sec:exactghost} is self-contained, and generalizes readily to the approximate setting, as we show in Section~\ref{sec:apxghost}.

\begin{figure}[ht]
    \centering
    \includegraphics[width=0.6\columnwidth]{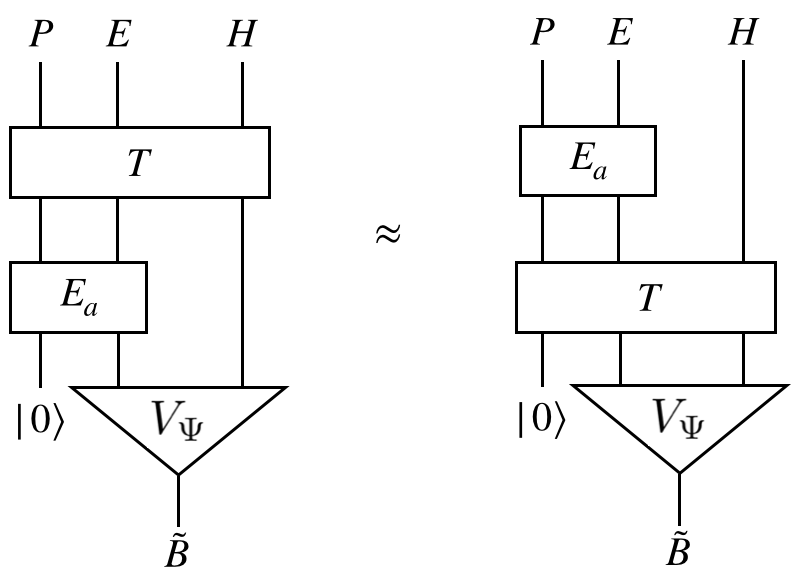}
    \caption{Acting on any code state, the ghost logical operator $T$ (approximately) commutes with any ``error'' in the set $\{E_a\}$.}
    \label{fig:ghost}
\end{figure}

\subsection{Exact ghost operators}\label{sec:exactghost}

Let $\tilde{\mathcal{H}}$ be an abstract logical Hilbert space, and consider an encoding $V:\tilde{\mathcal{H}} \rightarrow \mathcal{C}\subseteq \mathcal{H}$, where $\mathcal{C}$ denotes the code subspace embedded within the larger physical Hilbert space $\mathcal{H}$. Given a Hilbert space $\mathcal{H}$, we will let $S(\mathcal{H})$ denote the state space of $\mathcal{H}$, i.e., the set of all density operators supported on $\mathcal{H}$. Let $\mathcal{E}$ be a correctable error channel for $\mathcal{C}$, which we can write in a Kraus representation as
\begin{align}
    \mathcal{E}(\rho) = \sum_{a=1}^{|K|}E_a\rho E^\dagger_a,
\end{align}
where we denote the set of Kraus operators as $K = \{E_a\}$. A given channel will of course have many different Kraus representations; the choice of representation will not matter in the exact case, since the set of exactly correctable errors is closed under linear combinations, but we will have to be careful in the analysis of the approximate case in section~\ref{sec:apxghost}. In this section, we will fix an arbitrary Kraus representation $K$ for $\mathcal{E}$.

As a convention, we will denote quantities in $\tilde{\mathcal{H}}$ with tildes, and quantities in $\mathcal{H}$ without. Let
\begin{align}
    \tilde{T} = \sum_{k=1}^r \lambda_k\tilde{P}_k
\end{align}
be a normal operator on $\tilde{\mathcal{H}}$, with (distinct) eigenvalues $\{\lambda_k\}$, where each $\mathcal{P}_k$ is the spectral projector onto the corresponding eigenspace. For ease of notation, given any projector $P$, we will denote the corresponding range subspace as $[P]$, \emph{i.e.}, $[P] = \mathrm{Im}(P)$. 

Consider the encoded subspace $F_k = \mathrm{Im}(V\tilde{P}_k)$ of each eigenspace, and define
\begin{align}
    [P_k] =  \mathrm{span}\left\{E_a|\phi\rangle\ \bigg|\  E_a\in K,\ |\phi\rangle \in F_k\right\}.
\end{align}
Note that $[P_k]$ is the subspace generated by the set of all correctable errors, i.e., the span of $K$, acting on the encoded eigenspace $F_k$. These subspaces are well-defined since linear combinations of correctable errors remain correctable, and the Knill-Laflamme conditions \cite{knill1997theory} imply that subspaces corresponding to distinct eigenvalues will be orthogonal. We can then define a normal operator $T:\mathcal{H}\rightarrow \mathcal{H}$ by
\begin{align}
    T = \sum_{k=1}\lambda_kP_k,
\end{align}
where each $P_k$ is the corresponding projector onto $[P_k]$.
\begin{definition}
Given any normal operator $\tilde{T}:\tilde{\mathcal{H}}\rightarrow \tilde{\mathcal{H}}$, we will call the operator $T:\mathcal{H}\rightarrow \mathcal{H}$ obtained through the above construction the \emph{pseudo-ghost operator} corresponding to $\tilde{T}$.
\end{definition}

For any $|\chi_j\rangle \in F_j$ and any error operator $E_a \in K$, the action of the pseudo-ghost operator $T$ is such that
\begin{align}
    TE_a|\chi_j\rangle = \sum_{k=1}^r\lambda_kP_kE_a|\chi_j\rangle = \lambda_jE_a|\chi_j\rangle = E_a\hat{T}|\chi_j\rangle.
\end{align}
Here $\hat T$ can be any logical operator for $\tilde T$, which therefore satisfies $\hat T|\chi_j\rangle= \lambda_j |\chi_j\rangle$.

These pseudo-ghost operators satisfy $TE_a = E_a\hat T$ acting on the code space, and so do the ghost operators that we wish to construct. However, note that a pseudo-ghost operator $T$ will not necessarily act as a logical operator for $\tilde{T}$ since we might not have $F_k \subseteq [P_k]$ if the identity is not among the Kraus operators. The operator $T$ will not act correctly on the code subspace unless each of the encoded eigenspaces for $\tilde{T}$ are contained within the corresponding eigenspace for $T$. Our definition of a ghost logical operator should stipulate that $T$ is logical, as well as requiring $[T,E_a]=0$ acting on the code space.

\begin{definition}
Let $T:\mathcal{H}\rightarrow \mathcal{H}$ be a logical operator for $\tilde{T}$. We say that $T$ is a \emph{ghost logical operator} for $\tilde{T}$ if 
\begin{align}\label{eq:ghost_def}
    TE_a|\psi\rangle = E_aT|\psi\rangle
\end{align}
for all $E_a\in K$ and $|\psi\rangle \in \mathcal{C}$.
Given a pseudo-ghost operator $T$, we say that $T$ is \emph{extensible} if it admits an extension onto $\mathcal{H}$ such that it becomes a logical operator for $\tilde{T}$.
\end{definition}

Clearly the extension of any extensible pseudo-ghost operator will define a corresponding ghost logical operator. With the above definitions, it is simple to give a concise criterion for when pseudo-ghost operators extend to ghost logical operators in the exact setting.

\begin{lemma}\label{lem:ghost_exact1}
Let $T$ be a pseudo-ghost operator. Then $T$ is extensible if and only if
\begin{align}\label{eq:extensible}
    \langle \chi_j|E|\chi_i\rangle = 0,\qquad (i\neq j)
\end{align}
for all $E\in K$, $|\chi_i\rangle \in F_i$, $|\chi_j\rangle \in F_j$.
\end{lemma}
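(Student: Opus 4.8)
The plan is to recast extensibility as an orthogonality statement about the \emph{augmented eigenspaces} $G_k := [P_k] + F_k$, and then observe that condition~\eqref{eq:extensible} is exactly what is needed to make the $\{G_k\}$ mutually orthogonal, given the orthogonality relations we already have for free. Since $V$ is an isometry and the $\tilde{P}_k$ are orthogonal spectral projectors, the encoded eigenspaces satisfy $F_k \perp F_l$ for $k\neq l$; and since $\mathcal{E}$ is exactly correctable with the distinct $F_k$ orthogonal code subspaces, the Knill--Laflamme conditions give $[P_k]\perp[P_l]$ for $k\neq l$, as already noted. The only overlaps not controlled by these two facts are the cross terms $\langle\chi_j|E|\chi_i\rangle$ with $E\in K$, $|\chi_i\rangle\in F_i$, $|\chi_j\rangle\in F_j$, $i\neq j$, whose vanishing is precisely~\eqref{eq:extensible}. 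Because $[P_i]=\mathrm{span}\{E|\chi_i\rangle\}$, requiring $\langle\chi_j|E|\chi_i\rangle=0$ for all such $E,\chi_i,\chi_j$ is equivalent to $F_j\perp[P_i]$, and running over both orderings of the index pair gives $F_j\perp[P_i]$ and $F_i\perp[P_j]$ for $i\neq j$.

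For the \emph{if} direction, I would assume~\eqref{eq:extensible} and combine $[P_k]\perp[P_l]$, $F_k\perp F_l$, and $F_k\perp[P_l]$ (both orderings) for $k\neq l$ to conclude $G_k\perp G_l$ for $k\neq l$. Then define the extension $T'=\sum_k\lambda_k Q_k$, where $Q_k$ is the orthogonal projector onto $G_k$ and $T'=0$ on the remaining orthogonal complement. This $T'$ is normal with eigenvalues $\{\lambda_k\}$. Since $[P_k]\subseteq G_k$ and the $G_l$ are mutually orthogonal, $T'$ acts as $\lambda_k$ on $[P_k]$, so it agrees with $T$ on $\bigoplus_k[P_k]$ and is a genuine extension of the pseudo-ghost operator; the part of $G_k$ lying outside $[P_k]$ sits in $\left(\bigoplus_l[P_l]\right)^{\perp}$ precisely because of~\eqref{eq:extensible}, so no previously assigned value of $T$ is overwritten. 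Finally $F_k\subseteq G_k$ forces $T'|\chi_k\rangle=\lambda_k|\chi_k\rangle$ for $|\chi_k\rangle\in F_k$, i.e.\ $T'V=V\tilde{T}$, so $T'$ is a logical operator for $\tilde{T}$ and $T$ is extensible.

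For the \emph{only if} direction, suppose $T$ is extensible, so that there is a normal extension $T'$ that is logical for $\tilde{T}$. Being an extension, $T'$ acts as $\lambda_k$ on each $[P_k]$, so $[P_i]$ lies in the $\lambda_i$-eigenspace of $T'$; being logical, $T'|\chi_j\rangle=\lambda_j|\chi_j\rangle$, so $F_j$ lies in the $\lambda_j$-eigenspace. Because $T'$ is normal and the $\lambda_k$ are distinct, eigenspaces for different eigenvalues are orthogonal; hence for $i\neq j$ the vector $E|\chi_i\rangle\in[P_i]$ is orthogonal to $|\chi_j\rangle\in F_j$, which is exactly $\langle\chi_j|E|\chi_i\rangle=0$, establishing~\eqref{eq:extensible}.

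The main point requiring care is the status of the word ``extension'': both directions lean on the extension being a \emph{normal} operator with eigenvalues among $\{\lambda_k\}$, since it is only for normal operators that the $\lambda_i$- and $\lambda_j$-eigenspaces are automatically orthogonal (without this, the necessity argument can fail for a non-normal logical operator that happens to agree with $T$ on $\bigoplus_k[P_k]$). I would therefore make explicit at the outset that the extensions under consideration have the form $\sum_k\lambda_k Q_k$ with mutually orthogonal projectors $Q_k\supseteq P_k$, after which the equivalence reduces transparently to the chain of inner-product conditions above. A secondary bookkeeping subtlety, dispatched in the \emph{if} direction, is confirming that redefining $T$ on $G_k\ominus[P_k]$ is consistent with its prescribed action on $\bigoplus_l[P_l]$, which once more follows from~\eqref{eq:extensible}.
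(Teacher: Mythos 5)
Your proof is correct and follows the same core strategy as the paper's: the paper's sufficiency construction also enlarges each $[P_i]$ by the encoded eigenspace $F_i$, just done one basis vector at a time via $P_i' = P_i + |\chi_i^\perp\rangle\langle\chi_i^\perp|$, and its necessity argument likewise plays logicality against the extension's action on the $[P_k]$. Two differences are worth noting, both in your favor. The packaging via $G_k=[P_k]+F_k$, with mutual orthogonality of the $G_k$ checked once and for all from the Knill--Laflamme conditions, the isometry of $V$, and \eqref{eq:extensible}, is tidier than the paper's iterative enlargement, which must re-verify orthogonality at each step. More substantively, your explicit restriction to normal extensions of the form $\sum_k \lambda_k Q_k$ with mutually orthogonal $Q_k \supseteq P_k$ closes a real gap in the paper's necessity step: the equality $\langle\chi_j|E^\dagger T'|\chi_i\rangle = \sum_k \lambda_k \langle\chi_j|E^\dagger P_k|\chi_i\rangle$ used there requires $T'^\dagger$, not merely $T'$, to act as the appropriate scalar on $[P_j]$, which follows from normality but not from mere agreement of $T'$ with $T$ on $\bigoplus_k [P_k]$. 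Your caveat is not cosmetic: with a single unitary Kraus operator $E$ acting on $\mathbb{C}^5$ one can arrange $|\chi_1\rangle,|\chi_2\rangle,E|\chi_1\rangle,E|\chi_2\rangle$ to be linearly independent with $\langle\chi_2|E|\chi_1\rangle \neq 0$, and prescribing a linear map freely on this non-orthogonal set (eigenvalue $+1$ on $|\chi_1\rangle$ and $E|\chi_1\rangle$, eigenvalue $-1$ on $|\chi_2\rangle$ and $E|\chi_2\rangle$) yields a non-normal logical ``extension'' agreeing with $T$ on $[P_1]\oplus[P_2]$ even though \eqref{eq:extensible} fails; so the only-if direction is false without a restriction like yours. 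With that restriction made explicit, both of your directions go through as written.
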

\begin{proof}
To see necessity, suppose that $T$ is extensible, and let $T'$ denote its logical extension. Because $T'$ is a logical operator for $\tilde{T}$, it must satisfy
\begin{align}
T'|\chi_{i}\rangle=\lambda_{i}|\chi_{i}\rangle.
\end{align}
Let $E\in K$ be arbitrary. Left multiplying by $\langle\chi_{j}|E^{\dagger}$, we get
\begin{align}
\lambda_{i}\langle\chi_{j}|E^{\dagger}|\chi_{i}\rangle=\langle \chi_j|E^\dagger T'|\chi_i\rangle = \sum_{k=1}^{r}\lambda_{k}\langle\chi_{j}|E^{\dagger}P_{k}|\chi_{i}\rangle=\lambda_{j}\langle\chi_{j}|E^{\dagger}|\chi_{i}\rangle.
\end{align}
Here we have used $E|\chi_j\rangle \in [P_k]$, and noted that $T$ and $T'$ have the same action on $[P_k]$; we also used $P_k E|\chi_j\rangle = \delta_{kj} \lambda_j E|\chi_j\rangle$.
If $\lambda_{i}\neq\lambda_{j}$, then we must have $\langle\chi_{j}|E^{\dagger}|\chi_{i}\rangle=0$. Taking the complex conjugate, we obtain equation~\eqref{eq:extensible}.

Conversely, suppose that for all $i\neq j$ and all correctable errors we have $\langle\chi_{i}|E|\chi_{j}\rangle=0$. We must extend the action of $T$ to each encoded eigenvector $|\chi_i\rangle \in \mathcal{C}$. The relations $\langle\chi_{i}|E|\chi_{j}\rangle=0$ imply that $|\chi_i\rangle$ is orthogonal to the subspaces $[P_j]$ for $j\neq i$. There are two possible cases, either $|\chi_i\rangle \in [P_i]$, for which $T|\chi_i\rangle = \lambda_i|\chi_i\rangle$ is already well-defined and we are done, or else there exists a component of $|\chi_i\rangle$ lying in the subspace orthogonal to $\bigoplus_{k=1}^r[P_k]$. 

Let $|\chi_i^\perp\rangle$ denote the normalized component of $|\chi_i\rangle$ orthogonal to $[P_i]$. Then we extend the subspace $[P_i]$ to $[P'_i]$ by defining the projector
\begin{align}
    P'_i = P_i + |\chi_i^\perp\rangle\langle\chi_i^\perp|.
\end{align}
Note that the new subspace $[P'_i]$ contains within it $[P_i]$ and remains orthogonal to $[P_j]$ for $j\neq i$. Moreover, we have $|\chi_i\rangle \in [P'_i]$. We can now define an extension of $T$ with the projector $P'_i$ in place of $P_i$. Then the extension $T'$ satisfies
\begin{align}
    T'|\chi_i\rangle = \lambda_i|\chi_i\rangle.
\end{align}
We may repeat this procedure with an orthogonal basis $\{|\chi_k\rangle\}$ for $\mathcal{C}$ until we are left with an extension which acts as a logical operator for $\tilde{T}$.
\end{proof}

We will be primarily interested in the case where there exists a full set of ghost logical operators. We say that there exists a \emph{complete set} of ghost logical operators if for every normal operator $\tilde{T}:\tilde{\mathcal{H}}\rightarrow \tilde{\mathcal{H}}$, there exists a corresponding ghost logical operator $T$. In what follows, given a channel $\mathcal{E}$, we will let $\mathcal{E}_\mathcal{I}$ denote the channel 
\begin{align}
\mathcal{E}_\mathcal{I} = \mathcal{I}/2 + \mathcal{E}/2, 
\end{align}
where $\mathcal{I}$ is the identity channel. That is, in the channel $\mathcal{E}_\mathcal{I}$, with probability $1/2$ $\mathcal{E}$ is applied, and with probability $1/2$ nothing happens. 

\begin{theorem}\label{thm:ghost_exact2} 
Let $\mathcal{E}$ be a correctable channel with Kraus operators $K$. Then a complete set of ghost logical operators for $\mathcal{E}$ exists if and only if $K\cup \{I\}$ is a correctable set, i.e., if and only if $\mathcal{E}_\mathcal{I}$ is a correctable channel.
\end{theorem}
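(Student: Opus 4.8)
The plan is to show that the existence of a complete set of ghost logical operators is equivalent, through the extensibility criterion of Lemma~\ref{lem:ghost_exact1}, to the single algebraic condition that $V^\dagger E V$ is a scalar multiple of the identity on $\tilde{\mathcal{H}}$ for every $E\in K$, and then to recognize this condition as precisely the extra Knill--Laflamme constraint that must be adjoined to make $K\cup\{I\}$ correctable. The starting observation is that, since $\mathcal{E}$ is already correctable, the set $K=\{E_a\}$ satisfies $V^\dagger E_a^\dagger E_b V = c_{ab}I_{\tilde{\mathcal{H}}}$. Adjoining the identity Kraus operator introduces exactly the cross-terms $V^\dagger E_a V = c_a I_{\tilde{\mathcal{H}}}$ together with their adjoints, so $\mathcal{E}_\mathcal{I}$ is correctable if and only if $V^\dagger E V \propto I_{\tilde{\mathcal{H}}}$ for every $E\in K$. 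Both implications then reduce to exercising the quantifier over all normal $\tilde T$ against this clean condition.

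For the ``if'' direction I would assume $\mathcal{E}_\mathcal{I}$ correctable, so $V^\dagger E V = c_E I_{\tilde{\mathcal{H}}}$ for each $E\in K$, and verify the extensibility criterion~\eqref{eq:extensible} for an arbitrary normal $\tilde T = \sum_k\lambda_k\tilde P_k$ with encoded eigenspaces $F_k=\mathrm{Im}(V\tilde P_k)$. For $i\neq j$ and $|\chi_i\rangle = V|\tilde\chi_i\rangle\in F_i$, $|\chi_j\rangle = V|\tilde\chi_j\rangle\in F_j$, the vectors $|\tilde\chi_i\rangle,|\tilde\chi_j\rangle$ lie in orthogonal eigenspaces, so
\begin{align}
    \langle\chi_j|E|\chi_i\rangle = \langle\tilde\chi_j|V^\dagger E V|\tilde\chi_i\rangle = c_E\langle\tilde\chi_j|\tilde\chi_i\rangle = 0.
\end{align}
Lemma~\ref{lem:ghost_exact1} then promotes the pseudo-ghost operator for $\tilde T$ to a ghost logical operator, and since $\tilde T$ was arbitrary, a complete set exists.

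For the converse I would assume a complete set exists and extract the algebraic condition adversarially. Given any orthonormal basis $\{|\tilde e_k\rangle\}$ of $\tilde{\mathcal{H}}$, I apply the hypothesis to $\tilde T = \sum_k k\,|\tilde e_k\rangle\langle\tilde e_k|$, whose rank-one eigenprojectors give $F_k=\mathrm{span}\{V|\tilde e_k\rangle\}$; extensibility~\eqref{eq:extensible} then forces $\langle\tilde e_j|V^\dagger E V|\tilde e_i\rangle = 0$ for all $i\neq j$ and all $E\in K$, i.e.\ $V^\dagger E V$ is diagonal in this basis. As the basis was arbitrary, $V^\dagger E V$ is diagonal in every basis, hence scalar, $V^\dagger E V = c_E I_{\tilde{\mathcal{H}}}$, which is exactly the condition identifying $\mathcal{E}_\mathcal{I}$ as correctable.

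The step needing care is the passage ``diagonal in every basis $\Rightarrow$ scalar'': an operator $M$ with $\langle v|M|u\rangle = 0$ for every orthogonal pair sends each $|u\rangle$ into $\mathrm{span}\{|u\rangle\}$, so every vector is an eigenvector and $M\propto I$; this argument does not assume $M$ Hermitian, which matters because $V^\dagger E V$ need not be. The remaining work is the elementary bookkeeping that adjoining $I$ to a Knill--Laflamme-correctable set contributes precisely the constraints $V^\dagger E_a V\propto I$. I expect this bookkeeping, rather than any analytic estimate, to be the crux, since the whole theorem collapses onto it once the quantifier over normal $\tilde T$ is used adversarially (rank-one eigenprojectors in every basis) in the forward direction and uniformly in the reverse.
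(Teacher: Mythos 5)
Your ``if'' direction is correct and is essentially the paper's own argument: correctability of $K\cup\{I\}$ supplies, through the Knill--Laflamme cross terms with the identity, exactly the condition $V^\dagger E V\propto I_{\tilde{\mathcal{H}}}$, which together with the orthogonality of the encoded eigenspaces verifies criterion~\eqref{eq:extensible} of Lemma~\ref{lem:ghost_exact1} for every normal $\tilde T$. Your converse, however, has a genuine gap at the sentence ``extensibility~\eqref{eq:extensible} then forces $\langle\tilde e_j|V^\dagger EV|\tilde e_i\rangle=0$.'' The completeness hypothesis hands you a ghost logical operator $T$ in the sense of equation~\eqref{eq:ghost_def}: $T$ is logical and satisfies $TE_a|\psi\rangle=E_aT|\psi\rangle$ on code states. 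Nothing in that definition makes $T$ normal, nor an extension of the pseudo-ghost operator of the spectral form $\sum_k\lambda_k P_k'$ to which Lemma~\ref{lem:ghost_exact1} actually applies, so the lemma's necessity direction cannot be invoked. What the ghost and logical properties really give is that $EV|\tilde e_i\rangle$ is an eigenvector of $T$ with eigenvalue $\lambda_i$ while $V|\tilde e_j\rangle$ is an eigenvector with eigenvalue $\lambda_j$; for a non-normal $T$, eigenvectors belonging to distinct eigenvalues need not be orthogonal, so the matrix element need not vanish.

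The failure is not hypothetical. Take $\mathcal{H}=\mathbb{C}^4$ with orthonormal basis $\{f_1,f_2,f_3,f_4\}$, code spanned by $f_1,f_2$, and a single unitary Kraus operator $E$ with $Ef_1=f_3$ and $Ef_2=\cos\theta\,f_4+\sin\theta\,f_1$ for some $0<\theta<\pi/2$ (so $\mathcal{E}$ is trivially correctable). For every normal $\tilde T$ with eigenvectors $\tilde u_\pm$, the four vectors $V\tilde u_\pm$ and $EV\tilde u_\pm$ are linearly independent, so one can define a (non-normal) ghost logical operator by assigning the eigenvalues on those four vectors and acting arbitrarily on a complement; hence a complete set of ghost logical operators, per the literal definitions, exists. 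Yet $PEP=\sin\theta\,|f_1\rangle\langle f_2|$ is not proportional to $P$, so $K\cup\{I\}$ is not correctable. Your inference therefore requires the additional input that the ghost operators can be chosen normal (self-adjoint for Hermitian $\tilde T$), and indeed the theorem itself needs that reading of the definition. This is exactly where the paper's converse does its work: it picks, for each pair of orthogonal code states, the two Hermitian logical operators $\tilde T_1$ ($Z$-like) and $\tilde T_2$ ($X$-like), takes their ghost operators to be self-adjoint, and uses the ghost relation together with its adjoint to commute $T_1$ past $E_b$ on the right and past $E_a^\dagger$ on the left, extracting both the off-diagonal and the diagonal Knill--Laflamme constraints for $K\cup\{I\}$. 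If you grant yourself the same normality --- which the forward construction does produce --- your argument not only closes but simplifies: orthogonality of the eigenspaces of a normal $T$ gives $\langle\tilde e_j|V^\dagger EV|\tilde e_i\rangle=0$ directly, with no appeal to Lemma~\ref{lem:ghost_exact1}, and your remaining steps (diagonality in every basis forces $V^\dagger EV$ to be a scalar; scalar cross terms plus correctability of $K$ give the Knill--Laflamme conditions for $K\cup\{I\}$) are correct and make a clean alternative to the paper's two-operator computation. As written, though, the converse is incomplete.
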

\begin{proof}
Suppose that $K\cup\{I\}$ is a correctable set. Then the Knill-Laflamme conditions for $K\cup\{I\}$ imply that the hypotheses of Lemma~\ref{lem:ghost_exact1} are satisfied so that every pseudo-ghost operator is extensible to a ghost logical operator. It follows that there exists a complete set of ghost logical operators.

Conversely, suppose that there exists a complete set of ghost logical operators. Let $|\psi\rangle, |\phi\rangle\in \mathcal{C}$ be two mutually orthogonal code states, and let $|\tilde{\psi}\rangle=V^{\dagger}|\psi\rangle$ and $|\tilde{\phi}\rangle=V^{\dagger}|\phi\rangle$ be the corresponding pre-images in $\tilde{\mathcal{H}}$. Define the operators
\begin{align}
\tilde{T}_{1}=|\tilde{\phi}\rangle\langle\tilde{\phi}|-|\tilde{\psi}\rangle\langle\tilde{\psi}|,
\end{align}
and
\begin{align}
\tilde{T}_{2}=|\tilde{\phi}+\tilde{\psi}\rangle\langle\tilde{\phi}+\tilde{\psi}|-|\tilde{\phi}-\tilde{\psi}\rangle\langle\tilde{\phi}-\tilde{\psi}|,
\end{align}
where $|\tilde{\phi}\pm \tilde{\phi}\rangle = 2^{-1/2}(|\tilde{\phi}\rangle\pm |\tilde{\psi}\rangle)$. By assumption, there exist ghost logical operators $T_{1}$ and $T_{2}$ corresponding to $\tilde{T}_{1}$ and $\tilde{T}_{2}$. Now let $E_{a},E_{b}\in K\cup\{I\}$. Then we have
\begin{align}
\langle\psi|E_{a}^{\dagger}E_{b}|\phi\rangle &=\langle\psi|E_{a}^{\dagger}E_{b}T_{1}|\phi\rangle\\
&=\langle\psi|T_{1}E_{a}^{\dagger}E_{b}|\phi\rangle\\
&=-\langle\psi|E_{a}^{\dagger}E_{b}|\phi\rangle,
\end{align}
where the first line follows due to the fact that $|\phi\rangle$ is an eigenvector for $T_1$ with eigenvalue $1$, the second line follows from the defining equations~\eqref{eq:ghost_def} for the ghost operators, together with the fact that $T_1$ is self-adjoint, and the last line follows from the fact that $|\psi\rangle$ is an eigenvector for $T_1$ with eigenvalue $-1$. This implies that $\langle\psi|E_{a}^{\dagger}E_{b}|\phi\rangle=0$.

Repeating the same argument for $T_2$, we have
\begin{align}
\langle\phi-\psi|E_{a}^{\dagger}E_{b}|\phi+\psi\rangle &=\langle\phi-\psi|E_{a}^{\dagger}E_{b}T_{2}|\phi+\psi\rangle\\
&=\langle\phi-\psi|T_{2}E_{a}^{\dagger}E_{b}|\phi+\psi\rangle\\
&=-\langle\phi-\psi|E_{a}^{\dagger}E_{b}|\phi+\psi\rangle,
\end{align}
which implies that
\begin{align}
0=\langle\phi|E_{a}^{\dagger}E_{b}|\phi\rangle-\langle\psi|E_{a}^{\dagger}E_{b}|\psi\rangle.
\end{align}
Since $\phi$ and $\psi$ were arbitrary, this holds for any pair of orthogonal states. 

Let $\{|j\rangle\}$ be an orthonormal basis for $\mathcal{C}$ and define $\lambda_{ab}=\langle\psi|E_{a}^{\dagger}E_{b}|\psi\rangle$ for an arbitrary state $|\psi\rangle \in \mathcal{C}$. Then it follows that we have
\begin{align}
\langle i|E_{a}^{\dagger}E_{b}|j\rangle=\lambda_{ab}\delta_{ij},
\end{align}
so that the Knill-Laflamme conditions for $K\cup\left\{ I\right\}$ are satisfied. Therefore, $K\cup\left\{ I\right\}$ is a correctable set of errors.
\end{proof}

\subsection{Approximate ghost operators}\label{sec:apxghost}

In this section, we discuss how the ghost logical operators can be constructed for \emph{approximate} quantum error-correcting codes. We need to consider this case because we inferred in Section~\ref{sec:bh_qecc} that the errors due to low-complexity operations on the radiation system $E$ are correctable approximately (with a residual error exponentially small in $|H|$) rather than exactly. Although the uncorrected error is exponentially small, the Hilbert space is exponentially large, so we need to do a careful analysis to check that the ghost logical operators commute with the errors apart from exponentially small effects. 


It turns out the strategy that we pursued in the exact setting also works in the approximate setting. To get started, we will construct \emph{approximate ghost projectors} $\{\mathcal{P}_i\}$ that play the same role as the $\{P_i\}$ in the previous section.  

\begin{definition}
Let $\{|\tilde{i}\rangle\}$ be an orthonormal basis for $\tilde{\mathcal{H}}$, and suppose $V$ is an encoding isometry. We define (approximate) \emph{ghost projectors} with respect to this basis, denoted $\mathcal{P}_i$,  to be the orthogonal projectors onto the \emph{positive} eigenspace of
\begin{equation}
    \mathcal{E}(|i\rangle\langle i| - \rho_{i, \perp}),
\end{equation}
where $|i\rangle = V|\tilde{i}\rangle$ for $|\tilde{i}\rangle \in \tilde{\mathcal{H}}$, and where \begin{align}
\rho_{i, \perp} = \frac{1}{\dim\tilde{\mathcal{H}}-1}\sum_{j \neq i} |j\rangle \langle j|.
\end{align}
\end{definition}

The motivation behind this definition follows from the fact that $\mathcal{P}_i$ is an operator that can optimally distinguish $\mathcal{E}(|i\rangle\langle i|)$ from $\mathcal{E}(\rho_{i, \perp})$, according to the Holevo-Helstrom theorem \cite{MikeIke}. Because the effect of the channel $\mathcal{E}$ can be reversed up to a small error, it nearly preserves the orthogonality of $|i\rangle\langle i|$ and $\rho_{i,\perp}$; therefore, $\mathcal{P}_i$ can distinguish the two states almost perfectly. This suggests that $\mathcal{P}_i$, up to a small error, projects $\mathcal{E}(|i\rangle\langle i|)$ to a state close to $\mathcal{E}(|i\rangle\langle i|)$  and nearly annihilates $\mathcal{E}(\rho_{i,\perp}$). In the following two lemmas, we prove these claims rigorously. In Lemma~\ref{lem:ghostapx1}, we show that $\mathcal{P}_iE_a|i\rangle \approx E_a|i\rangle$, and in Lemma~\ref{lem:ghostapx2} we show that $\mathcal{P}_iE_a|j\rangle \approx 0$, for $i\neq j$, where $E_a$ is any Kraus operator of the channel $\mathcal{E}$.

If $\mathcal{E}$ is an $\epsilon$-correctable channel then we have 
\begin{align}
    \max_{\rho} \|(\mathcal{R} \circ \mathcal{E})(\rho) - \rho \|_1 \leq 2\sqrt{2}\epsilon := \tilde{\epsilon},
\end{align}
as given by equation~\eqref{eq:correctable}. Let us define $\tilde{\epsilon} = 2\sqrt{2}\epsilon$ to minimize factors of $2\sqrt{2}$. Then we can obtain the following bound:


\begin{lemma}\label{lem:ghostapx1}
Let $\mathcal{E}$ be an $\epsilon$-correctable channel and let $\mathcal{P}_i$ be the corresponding ghost projector with respect to some basis. Then we have
\begin{equation}\label{eq:lemma86}
\left\|E_a|i\rangle - \mathcal{P}_i E_a|i\rangle\right\|_2^2 \leq 2\sqrt{2}\epsilon := \tilde{\epsilon},
\end{equation}
where $\| |\phi\rangle \|_2 := \sqrt{\langle \phi|\phi\rangle}$.
\end{lemma}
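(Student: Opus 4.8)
The plan is to reduce the single-Kraus-operator statement to one about the whole channel, and then exploit the Holevo-Helstrom optimality that is built into the definition of $\mathcal{P}_i$. First I would write $E_a|i\rangle - \mathcal{P}_i E_a|i\rangle = (I-\mathcal{P}_i)E_a|i\rangle$, so that the quantity to be bounded is $\langle i|E_a^\dagger(I-\mathcal{P}_i)E_a|i\rangle$. Since $(I-\mathcal{P}_i)$ is a projector, each such term is non-negative, so it suffices to bound their sum over $a$, which collapses into a single channel expression:
\begin{align}
    \sum_a \left\|(I-\mathcal{P}_i)E_a|i\rangle\right\|_2^2 = \mathrm{Tr}\left[(I-\mathcal{P}_i)\,\mathcal{E}(|i\rangle\langle i|)\right].
\end{align}
Thus the lemma follows if I can show $\mathrm{Tr}[(I-\mathcal{P}_i)\mathcal{E}(|i\rangle\langle i|)] \le \tilde{\epsilon}$; this in fact establishes the slightly stronger statement that even the sum, not merely each individual term, is at most $\tilde{\epsilon}$.

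For the key estimate I would use that, by construction, $\mathcal{P}_i$ projects onto the positive eigenspace of the Hermitian operator $A := \mathcal{E}(|i\rangle\langle i|) - \mathcal{E}(\rho_{i,\perp})$. Because $\mathcal{E}$ is trace preserving we have $\mathrm{Tr}\,A = 0$, so the Holevo-Helstrom identity gives
\begin{align}
    \mathrm{Tr}[\mathcal{P}_i\,\mathcal{E}(|i\rangle\langle i|)] - \mathrm{Tr}[\mathcal{P}_i\,\mathcal{E}(\rho_{i,\perp})] = \tfrac{1}{2}\left\|\mathcal{E}(|i\rangle\langle i|) - \mathcal{E}(\rho_{i,\perp})\right\|_1 .
\end{align}
Dropping the manifestly non-negative term $\mathrm{Tr}[\mathcal{P}_i\mathcal{E}(\rho_{i,\perp})]$, it remains only to lower-bound this trace distance.

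Here the correctability hypothesis enters, which is the single place where $\epsilon$-correctability is actually used. Both $|i\rangle\langle i|$ and $\rho_{i,\perp}$ are code states, so monotonicity of the trace distance under the recovery channel $\mathcal{R}$, combined with the triangle inequality, gives
\begin{align}
    \left\|\mathcal{E}(|i\rangle\langle i|) - \mathcal{E}(\rho_{i,\perp})\right\|_1 \ge \left\||i\rangle\langle i| - \rho_{i,\perp}\right\|_1 - \left\|(\mathcal{R}\circ\mathcal{E})(|i\rangle\langle i|) - |i\rangle\langle i|\right\|_1 - \left\|(\mathcal{R}\circ\mathcal{E})(\rho_{i,\perp}) - \rho_{i,\perp}\right\|_1 .
\end{align}
Since $|i\rangle\langle i|$ and $\rho_{i,\perp}$ have orthogonal support, the first term equals $2$, while each of the last two is bounded by $\tilde{\epsilon}$ via equation~\eqref{eq:correctable}, yielding the bound $\ge 2 - 2\tilde{\epsilon}$. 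Substituting back gives $\mathrm{Tr}[\mathcal{P}_i\mathcal{E}(|i\rangle\langle i|)] \ge 1 - \tilde{\epsilon}$, and hence $\mathrm{Tr}[(I-\mathcal{P}_i)\mathcal{E}(|i\rangle\langle i|)] \le \tilde{\epsilon}$, as desired.

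The computation is mostly routine once the right quantity is identified, and I expect no serious obstacle. The only genuinely delicate points are (i) confirming that $\rho_{i,\perp}$ is a legitimate code state, so that the correctability bound may be applied to it, and (ii) verifying the sign conventions so that the positive-eigenspace projector is indeed the Holevo-Helstrom optimal test discriminating $\mathcal{E}(|i\rangle\langle i|)$ from $\mathcal{E}(\rho_{i,\perp})$. The recovery-channel monotonicity step is the conceptual crux, as it converts near-perfect correctability into near-perfect preservation of the orthogonality of $|i\rangle\langle i|$ and $\rho_{i,\perp}$.
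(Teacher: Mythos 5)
Your proof is correct and follows essentially the same route as the paper's: both arguments lower-bound $\|\mathcal{E}(|i\rangle\langle i| - \rho_{i,\perp})\|_1 \ge 2 - 2\tilde{\epsilon}$ via monotonicity under the recovery map plus the triangle inequality, convert this to $\mathrm{Tr}[\mathcal{P}_i\,\mathcal{E}(|i\rangle\langle i|)] \ge 1 - \tilde{\epsilon}$ using the positive-eigenspace (Helstrom) identity for the traceless operator and the non-negativity of $\mathrm{Tr}[\mathcal{P}_i\,\mathcal{E}(\rho_{i,\perp})]$, and then extract the per-Kraus-operator bound from the non-negativity of each summand in $\sum_a \|(I-\mathcal{P}_i)E_a|i\rangle\|_2^2$. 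The only differences are organizational (you sum over $a$ first and cite Holevo--Helstrom by name where the paper writes out the positive/negative-part decomposition), so there is nothing substantive to change.
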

\begin{proof}
Note that, by the monotonicity of the trace norm, we have 
\begin{equation}
    \|\mathcal{E}(|i\rangle\langle i| - \rho_{i, \perp}) \|_1 \geq \| (\mathcal{R} \circ \mathcal{E})(|i\rangle\langle i| - \rho_{i, \perp})\|_1.
\end{equation}
We can use the fact that the recovery map $\mathcal{R}$ nearly succeeds in recovering the original state. By the triangle inequality, 
\begin{equation}
\begin{aligned}
    2=\||i\rangle \langle i| - \rho_{i, \perp} \|_1 &\le \|(\mathcal{R}\circ \mathcal{E})(|i\rangle\langle i| - \rho_{i, \perp}) \|_1\\ 
    & \quad + \||i\rangle \langle i| - (\mathcal{R}\circ \mathcal{E})(|i\rangle\langle i|) \|_1\\ 
    & \quad + \|\rho_{i,\perp} - (\mathcal{R}\circ \mathcal{E})(\rho_{i, \perp}) \|_1. \label{eq:proof_approx_ghost_triangle}
    \end{aligned}
\end{equation}
Therefore,
\begin{equation}\label{eq:mathE-epsilon-tilde}
    \|\mathcal{E}(|i\rangle\langle i| - \rho_{i, \perp}) \|_1 \geq 2 - 2\tilde{\epsilon}.
\end{equation}
Moreover, we have
\begin{equation}
\begin{aligned}
    \|\mathcal{E}(|i\rangle\langle i| - \rho_{i, \perp}) \|_1 &= \mathrm{Tr}(2\mathcal{P}_i \mathcal{E}(|i\rangle\langle i| - \rho_{i, \perp})) \\
    & \leq 2 \mathrm{Tr}(\mathcal{P}_i \mathcal{E}(|i\rangle\langle i|)).\label{eq:ghost1}
\end{aligned}
\end{equation}
The first line above follows by decomposing $\mathcal{E}(|i\rangle\langle i| - \rho_{i, \perp})$ into its positive and negative parts. Because the operator is traceless, the trace of the positive part is equal to the trace of the negative part, up to a minus sign. Since the trace distance is equal to the sum of the absolute value of the positive and negative trace, and because these values are the same, we arrive at the first identity. The second line then follows from the fact that $\mathrm{Tr}(\mathcal{P}_i\mathcal{E}( \rho_{i,\perp})) \geq 0$.

Therefore, we get the following bound:
\begin{equation}\label{eq:epsilon-tilde-bound}
\begin{aligned}
    1-\tilde \epsilon & \le \mathrm{Tr}(\mathcal{P}_i\mathcal{E}(|i\rangle\langle i|))\\
    &= \sum_a \langle i|E_a^{\dagger}\mathcal{P}_i E_a |i \rangle \\
    & = \sum_a q_{ia} \langle\psi_{ia}| \mathcal{P}_i|\psi_{ia}\rangle \\
    & = 1-\sum_a q_{ia} (1-\langle \psi_{ia}| \mathcal{P}_i |\psi_{ia}\rangle),
\end{aligned}
\end{equation}
where we define
\begin{align}
|\psi_{ia}\rangle = \frac{E_a|i\rangle}{ \sqrt{\langle i| E_a^{\dagger}E_a|i\rangle}},
\end{align}
and $q_{ia} = \langle i|E_a^{\dagger}E_a |i\rangle$. Note that $\sum_a q_{ia}=1$ since $\mathcal{E}$ is trace-preserving. Therefore, we get
\begin{equation}
    1-\langle \psi_{ia}| \mathcal{P}_i |\psi_{ia}\rangle \leq \frac{\tilde{\epsilon}}{q_{ia}}
\end{equation}
by noting that the last line of equation~\eqref{eq:epsilon-tilde-bound} contains a sum of non-negative terms. Since the sum is $\le\tilde\epsilon$, each individual term must be $\le \tilde \epsilon$ as well. Substituting in the expressions for $q_{ia}$ and $|\psi_{ia}\rangle$, this inequality becomes
\begin{equation}
    \langle i|E_a^\dagger E_a|i\rangle - \langle i|E_a^\dagger\mathcal{P}_i E_a|i\rangle \le \tilde \epsilon,
\end{equation}
which is equivalent to equation~\eqref{eq:lemma86}.
\end{proof}

\begin{lemma}\label{lem:ghostapx2}
Under the same hypothesis as Lemma~\ref{lem:ghostapx1}, if $i\neq j$, then
\begin{equation}\label{eq:lemma87}
    \left\|\mathcal{P}_i E_a|j\rangle\right\|_2^2 \leq (\dim\mathcal{C})\,\tilde{\epsilon}.
\end{equation}
\end{lemma}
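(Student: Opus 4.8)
The plan is to exploit the fact that the same ghost projector $\mathcal{P}_i$ that nearly preserves $E_a|i\rangle$ (as established in Lemma~\ref{lem:ghostapx1}) must \emph{nearly annihilate} the error-images of the orthogonal code states, precisely because $\mathcal{P}_i$ is by construction the optimal operator distinguishing $\mathcal{E}(|i\rangle\langle i|)$ from $\mathcal{E}(\rho_{i,\perp})$. The starting point is the bound already derived inside the proof of Lemma~\ref{lem:ghostapx1}, namely equation~\eqref{eq:mathE-epsilon-tilde}, which gives $\|\mathcal{E}(|i\rangle\langle i| - \rho_{i,\perp})\|_1 \geq 2 - 2\tilde{\epsilon}$. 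The whole argument is the ``negative part'' dual of Lemma~\ref{lem:ghostapx1}, reusing the same traceless operator.

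First I would rewrite the trace norm on the left using the definition of $\mathcal{P}_i$. Since $\mathcal{E}(|i\rangle\langle i| - \rho_{i,\perp})$ is Hermitian and traceless, and $\mathcal{P}_i$ projects onto its positive eigenspace, we have the identity $\|\mathcal{E}(|i\rangle\langle i| - \rho_{i,\perp})\|_1 = 2\,\mathrm{Tr}\big(\mathcal{P}_i\,\mathcal{E}(|i\rangle\langle i| - \rho_{i,\perp})\big)$, exactly as in the first line of~\eqref{eq:ghost1}. Splitting the trace and combining with the lower bound yields
\begin{equation}
    \mathrm{Tr}\big(\mathcal{P}_i\mathcal{E}(|i\rangle\langle i|)\big) - \mathrm{Tr}\big(\mathcal{P}_i\mathcal{E}(\rho_{i,\perp})\big) \geq 1 - \tilde{\epsilon}.
\end{equation}
Because $\mathcal{E}(|i\rangle\langle i|)$ is a normalized state and $0\leq \mathcal{P}_i\leq I$, the first term is at most $1$; hence $\mathrm{Tr}\big(\mathcal{P}_i\mathcal{E}(\rho_{i,\perp})\big) \leq \tilde{\epsilon}$. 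This is the key intermediate inequality, asserting that $\mathcal{P}_i$ barely overlaps the channel-image of the complementary subspace.

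The remaining step is bookkeeping. Expanding $\rho_{i,\perp}$ and the channel $\mathcal{E}$ in Kraus form gives
\begin{equation}
    \mathrm{Tr}\big(\mathcal{P}_i\mathcal{E}(\rho_{i,\perp})\big) = \frac{1}{\dim\tilde{\mathcal{H}}-1}\sum_{j\neq i}\sum_a \langle j|E_a^\dagger \mathcal{P}_i E_a|j\rangle = \frac{1}{\dim\tilde{\mathcal{H}}-1}\sum_{j\neq i}\sum_a \big\|\mathcal{P}_i E_a|j\rangle\big\|_2^2 .
\end{equation}
Every summand is nonnegative, so any single term is bounded by the full sum; combining this with $\mathrm{Tr}\big(\mathcal{P}_i\mathcal{E}(\rho_{i,\perp})\big)\leq\tilde{\epsilon}$ yields $\|\mathcal{P}_i E_a|j\rangle\|_2^2 \leq (\dim\tilde{\mathcal{H}}-1)\tilde{\epsilon}$, and since $\dim\mathcal{C}=\dim\tilde{\mathcal{H}}$ this gives the claimed bound $\|\mathcal{P}_i E_a|j\rangle\|_2^2 \leq (\dim\mathcal{C})\tilde{\epsilon}$.

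There is no serious analytical obstacle here; the content is essentially the positive-part identity applied in the opposite direction to Lemma~\ref{lem:ghostapx1}. The one point requiring care is the factor $\dim\tilde{\mathcal{H}}-1$ arising from the normalization of $\rho_{i,\perp}$: it is precisely this factor that prevents the per-term estimate from being $O(\tilde{\epsilon})$ and instead produces the dimension-dependent bound $(\dim\mathcal{C})\tilde{\epsilon}$. This dimensional penalty is exactly why the later analysis must combine these estimates carefully, since the code dimension is exponentially large while $\tilde{\epsilon}$ is only exponentially small.
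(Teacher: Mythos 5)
Your proof is correct and follows essentially the same route as the paper's: the same lower bound $\|\mathcal{E}(|i\rangle\langle i| - \rho_{i,\perp})\|_1 \geq 2-2\tilde{\epsilon}$, the same positive-part identity $\|\mathcal{E}(|i\rangle\langle i| - \rho_{i,\perp})\|_1 = 2\,\mathrm{Tr}(\mathcal{P}_i\,\mathcal{E}(|i\rangle\langle i| - \rho_{i,\perp}))$, the same key intermediate bound $\mathrm{Tr}(\mathcal{P}_i\mathcal{E}(\rho_{i,\perp}))\leq\tilde{\epsilon}$, and the same Kraus expansion with nonnegative terms dropped. The only cosmetic difference is that the paper first isolates $\mathrm{Tr}(\mathcal{P}_i\mathcal{E}(|j\rangle\langle j|))\leq(\dim\mathcal{C})\tilde{\epsilon}$ before expanding in Kraus operators, whereas you expand both sums at once — the content is identical.
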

\begin{proof}
Note that
\begin{equation}
\begin{aligned}
    2-2\tilde{\epsilon} &\leq \|\mathcal{E}(|i\rangle\langle i| - \rho_{i,\perp}) \|_1 \\
    &= \mathrm{Tr}(2\mathcal{P}_i\mathcal{E}(|i\rangle \langle i| - \rho_{i,\perp})) \\
    &\leq 2- 2\mathrm{Tr}(\mathcal{P}_i \mathcal{E}(\rho_{i,\perp})),
\end{aligned}
\end{equation}
where we've used equation~\eqref{eq:mathE-epsilon-tilde} in the first line, and equation~\eqref{eq:ghost1} in the second. The last line follows from the fact that $\mathcal{P}_i \leq I$. It follows that
\begin{equation}
 \frac{1}{\dim\mathcal{C} -1}\sum_{j\ne i}  \mathrm{Tr}(\mathcal{P}_i\mathcal{E}(|j\rangle\langle j|))
 =\mathrm{Tr}(\mathcal{P}_i\mathcal{E}(\rho_{i,\perp})) \leq \tilde{\epsilon},
\end{equation}
and therefore, we have 
\begin{equation}
    \mathrm{Tr}(\mathcal{P}_i \mathcal{E}(|j\rangle\langle j|)) \leq (\dim\mathcal{C})\,\tilde{\epsilon},
\end{equation}
for all $j\neq i$. 
Expanding in terms of the Kraus operators of the channel $\mathcal{E}$, this becomes
\begin{equation}
    \sum_a \mathrm{Tr}(\mathcal{P}_iE_a|j\rangle\langle j| E_a^\dagger\mathcal{P}_i) 
    =\sum_a  \left\|\mathcal{P}_i E_a|j\rangle\right\|_2^2
    \le (\dim\mathcal{C})\tilde\epsilon,
\end{equation}
where 
the first equality holds because $\mathcal{P}_i$ is a projector. Equation~\eqref{eq:lemma87} then follows.
\end{proof}

At this point, we can follow the construction we used for the case of exact ghost operators. Let $\mathcal{C}$ be a code subspace and let $\mathcal{E}$ be an error channel such that $\mathcal{E}_\mathcal{I}$ is $\epsilon$-correctable. Then by Lemmas~\ref{lem:ghostapx1} and~\ref{lem:ghostapx2}, we have
\begin{align}
    \|E_a|i\rangle - \mathcal{P}_iE_a|i\rangle\|_2^2 \le 2\tilde{\epsilon},\quad\text{and}\quad \|\mathcal{P}_iE_a|j\rangle\|_2^2 \le 2(\dim\mathcal{C})\,\tilde{\epsilon},
\end{align}
where each $E_a$ is a Kraus operators for $\mathcal{E}$, or the identity. Note that the extra factor of $2$ comes from the fact that the Kraus operators for $\mathcal{E}_{\mathcal{I}}$ are given by $\{E_a/\sqrt{2}\}\cup \{I/\sqrt{2}\}$, where each $E_a$ is a Kraus operator for $\mathcal{E}$.

Given a normal operator $\tilde{T}:\tilde{\mathcal{H}} \rightarrow \tilde{\mathcal{H}}$ defined by
\begin{align}
    \tilde{T} = \sum_k \lambda_k |\tilde{k}\rangle\langle \tilde{k}|,
\end{align}
we define the operator
\begin{align}
    T = \sum_k \lambda_k \mathcal{P}_k,\label{eq:ghost_op}
\end{align}
where each $\mathcal{P}_k$ is a ghost projector with respect to the given eigenbasis for $\tilde{T}$.
Then the operator $T$ satisfies
\begin{equation}
\begin{aligned}
    \|TE_{a}|j\rangle - \lambda_{j}E_{a}|j\rangle\|_2
     &= \left\Vert \sum_{k}\lambda_{k}\mathcal{P}_{k}E_{a}|j\rangle-\lambda_{j}E_{a}|j\rangle\right\Vert_2\\
    &\le \left\Vert \sum_{k\neq j}\lambda_{k}\mathcal{P}_{k}E_{\ell}|j\rangle\right\Vert_2 +\left\Vert \lambda_{j}\mathcal{P}_{j}E_{\ell}|j\rangle-\lambda_{j}E_{a}|j\rangle\right\Vert_2\\
    &\le \sum_{k\neq j}\left|\lambda_{k}\right|\left\Vert_2 \mathcal{P}_{k}E_{a}|j\rangle\right\Vert_2 +\left|\lambda_{j}\right|\left\Vert \mathcal{P}_{j}E_{a}|j\rangle-E_{a}|j\rangle\right\Vert_2 \\
    &\le \sqrt{2(\dim\mathcal{C})\tilde{\epsilon}}\sum_{k\neq j}\left|\lambda_{k}\right|+\left|\lambda_{j}\right|\sqrt{2\tilde{\epsilon}}\\
    &\le \sqrt{2(\dim\mathcal{C})\tilde{\epsilon}}\|\tilde{T}\|_{1},
\end{aligned}
\end{equation}
where $\|\tilde{T}\|_1$ is the trace norm of $\tilde{T}$. 
Now, let $\hat{T} = V\tilde{T}V^\dagger$, where $V$ is the code embedding. Then for a general code state $|\psi\rangle=\sum_{j}c_{j}|j\rangle$, we have
\begin{equation}
\begin{aligned}
\left\Vert TE_{a}|\psi\rangle-E_{a}\hat{T}|\psi\rangle\right\Vert_2 &\le \sum_j|c_j|\cdot \left\Vert TE_{a}|j\rangle-\lambda_jE_{a}|j\rangle\right\Vert_2\\
&\le\sqrt{2(\dim\mathcal{C})\tilde{\epsilon}}\|\tilde{T}\|_{1}\sum_{j}\left|c_{j}\right|\\
&\le(\dim\mathcal{C})\|\tilde{T}\|_{1}\sqrt{2\tilde{\epsilon}}.
\end{aligned}
\end{equation}
A slightly weaker, but more convenient bound in terms of the operator norm of $\tilde{T}$ can be given as
\begin{align}
\left\Vert TE_{a}|\psi\rangle-E_{a}\hat{T}|\psi\rangle\right\Vert_2 \le (\dim\mathcal{C})^{2}\|\tilde{T}\|\sqrt{\left(4\sqrt{2}\right)\epsilon},
\end{align}
which we can also express as a bound on the difference of two operators in the operator norm:
\begin{align}
\left\Vert TE_{a}V-E_{a}V\tilde{T}\right\Vert \le 2^{5/4}(\dim\mathcal{C})^{2}\|\tilde{T}\|\sqrt{\epsilon},\label{eq:apxghostbound}
\end{align}
where we have used $\tilde \epsilon = 2\sqrt{2} \epsilon$. Note that the bound~\eqref{eq:apxghostbound} holds for any Kraus representation $\{E_a\}$ of $\mathcal{E}$.
The bound~\eqref{eq:apxghostbound} motivates the following definition:

\begin{definition}
Let $(\mathcal{E}, K)$ be a noise channel $\mathcal{E}$ equipped with a given Kraus representation $K=\{E_a\}$. Let $\tilde{T}:\tilde{\mathcal{H}}\rightarrow \tilde{\mathcal{H}}$ be a normal operator. We say that $T$ is a $\delta$-approximate ghost operator for $\tilde{T}$ with respect to $(\mathcal{E},K)$ if we have
\begin{align}
\|TE_{a}V - E_{a}V\tilde{T}\| &\le \|\tilde{T}\|\delta,\label{eq:ghostapx}
\end{align}
where $E_{a} \in K\cup\{I\}$ is either a Kraus operator for $\mathcal{E}$, or the identity.
We say that a ghost operator $T$ is \emph{universal} if equation~\eqref{eq:ghostapx} holds for every Kraus representation of $\mathcal{E}$.
\end{definition}

Now we are ready to prove the analog of Theorem~\ref{thm:ghost_exact2} in the approximate setting. As before, we say that there exists a complete set of $\epsilon$-approximate ghost operators if there exists an $\epsilon$-approximate ghost logical operator for every normal operator on $\tilde{\mathcal{H}}$. 

\begin{theorem}\label{thm:ghostapx}
Let $\mathcal{C}$ be a code subspace and suppose that $\mathcal{E}_\mathcal{I}$ is $\epsilon$-correctable for $\mathcal{C}$. Then there exists a complete set of $\delta$-approximate universal ghost operators, where
\begin{align}\label{eq:thm-89}
\delta = 2^{5/4}(\dim\mathcal{C})^2\sqrt{\epsilon}.
\end{align}
\end{theorem}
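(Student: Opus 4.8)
The plan is to assemble the approximate ghost operator directly from the ghost projectors $\{\mathcal{P}_k\}$ and then read off the required inequality from the estimate~\eqref{eq:apxghostbound} already established. Concretely, given any normal operator $\tilde{T}:\tilde{\mathcal{H}}\to\tilde{\mathcal{H}}$, I would first fix an orthonormal eigenbasis $\{|\tilde{k}\rangle\}$ so that $\tilde{T}=\sum_k\lambda_k|\tilde{k}\rangle\langle\tilde{k}|$, form the associated ghost projectors $\mathcal{P}_k$ with respect to this basis, and define the candidate $T=\sum_k\lambda_k\mathcal{P}_k$ exactly as in~\eqref{eq:ghost_op}. The point is that this $T$ is manufactured to be the approximate analogue of the exact pseudo-ghost operator $\sum_k\lambda_kP_k$, and the whole theorem amounts to verifying that it meets the definition of a $\delta$-approximate universal ghost operator.

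Next I would establish the defining inequality~\eqref{eq:ghostapx} for $T$. This is precisely the content of the triangle-inequality chain culminating in~\eqref{eq:apxghostbound}: using $\|\mathcal{P}_kE_a|j\rangle\|_2^2\le 2(\dim\mathcal{C})\tilde{\epsilon}$ for $k\ne j$ (Lemma~\ref{lem:ghostapx2}) together with $\|E_a|i\rangle-\mathcal{P}_iE_a|i\rangle\|_2^2\le 2\tilde{\epsilon}$ (Lemma~\ref{lem:ghostapx1}), and summing over the spectrum of $\tilde{T}$, one obtains $\|TE_aV-E_aV\tilde{T}\|\le 2^{5/4}(\dim\mathcal{C})^2\|\tilde{T}\|\sqrt{\epsilon}$, with the extra factor of $2$ inside the radicals traced to the fact that the Kraus operators of $\mathcal{E}_\mathcal{I}=\mathcal{I}/2+\mathcal{E}/2$ are $\{E_a/\sqrt{2}\}\cup\{I/\sqrt{2}\}$. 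Since $\mathcal{E}_\mathcal{I}$ is the correctable channel, the identity $I$ is among the admissible $E_a$; taking $E_a=I$ yields $\|TV-V\tilde{T}\|\le\|\tilde{T}\|\delta$, so $T$ also acts as an approximate logical operator for $\tilde{T}$ and genuinely qualifies as a $\delta$-approximate ghost operator with $\delta=2^{5/4}(\dim\mathcal{C})^2\sqrt{\epsilon}$, matching~\eqref{eq:thm-89}.

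The remaining point is universality. Here the key observation is that each ghost projector $\mathcal{P}_k$ is defined through $\mathcal{E}$ acting on the fixed operator $|k\rangle\langle k|-\rho_{k,\perp}$, and hence does not depend on any choice of Kraus representation. Moreover, the bounds in Lemmas~\ref{lem:ghostapx1} and~\ref{lem:ghostapx2} descend from representation-independent traces such as $\mathrm{Tr}(\mathcal{P}_k\mathcal{E}(|k\rangle\langle k|))$; the per-Kraus-operator estimates then follow for \emph{any} representation simply because a sum of non-negative terms bounded by $\tilde{\epsilon}$ forces each summand to be bounded by $\tilde{\epsilon}$. Thus~\eqref{eq:ghostapx} holds for every Kraus representation of $\mathcal{E}_\mathcal{I}$, which is exactly the definition of universality. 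Finally, since the construction takes an arbitrary normal $\tilde{T}$ and outputs such an operator, the set of $\delta$-approximate universal ghost operators is complete, establishing the claim.

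I expect the only genuinely delicate step to be this universality argument: one must confirm that the passage from the representation-independent trace inequalities to the per-Kraus-operator inequalities of the two lemmas never secretly invokes a privileged decomposition, and that any degeneracy in the spectrum of $\tilde{T}$, where the eigenbasis is non-unique, does not spoil the bound. Both are handled by the non-negativity-of-summands observation and by noting that the final estimate depends on $\tilde{T}$ only through its eigenvalues and on $\dim\mathcal{C}$, so that every admissible eigenbasis produces an operator obeying the same $\delta$.
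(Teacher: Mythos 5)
Your proposal is correct and follows essentially the same route as the paper's own proof: you build $T=\sum_k\lambda_k\mathcal{P}_k$ as in equation~\eqref{eq:ghost_op}, invoke the bound~\eqref{eq:apxghostbound} derived from Lemmas~\ref{lem:ghostapx1} and~\ref{lem:ghostapx2} (including the factor of $2$ from the Kraus operators $\{E_a/\sqrt 2\}\cup\{I/\sqrt 2\}$ of $\mathcal{E}_\mathcal{I}$), and deduce universality from the fact that the ghost projectors depend only on the channel, not on a Kraus representation. Your added remarks on the non-negativity-of-summands step and on spectral degeneracy are sound elaborations of points the paper leaves implicit, and do not change the argument.
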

\noindent For the sake of completeness, we also prove a converse of this result (Theorem~\ref{thm:ghost_converse}) in Appendix~\ref{sec:ghost_converse}. These results collectively can be seen as a generalization of the standard theorems of operator algebra quantum error-correction~\cite{oaqec1,oaqec2} to the approximate setting.

\begin{proof}
Suppose that $\mathcal{E}_\mathcal{I}$ is $\epsilon$-correctable for $\mathcal{C}$. Then equation~\eqref{eq:apxghostbound} shows that $T$ defined by equation~\eqref{eq:ghost_op} is a $\delta$-approximate ghost operator for any normal operator $\tilde T$, where $\delta=2^{5/4}(\dim\mathcal{C})^2\sqrt{\epsilon}$. The construction of the ghost projector $\mathcal{P}_k$, and therefore also the construction of $T$, depends only on the channel $\mathcal{E}$ and not on any particular Kraus representation; it follows that $T$ is universal. 
\end{proof}




\subsection{Firewall revisited}\label{sec:firewall_revisited}

We have now seen that, by assuming that the state of the Hawking radiation system $EB$ is pseudorandom, we may infer that low-complexity operations on $E$ are approximately correctable; the code space $\tilde B$ that purifies the late radiation system $B$ is protected against low-complexity operations on $E$. Correctability in turn implies that a complete set of ghost logical operators acting on $EH$, which nearly commute with all low-complexity operations on $E$, can be constructed.

Let us now reconsider the potential implications of the existence of ghost logical operators in the context of the black hole firewall problem. First, we assemble the results we have derived thus far to determine the value of $\delta$ for which the ghost logical operators are $\delta$-approximate. Under the pseudorandomness assumption equation~\eqref{eq:pr}, we saw in Lemma \ref{lem:correctable} that low-complexity operations are $\epsilon$-correctable for $\epsilon = \sqrt{3/2}\cdot 2^{-(\alpha|H| - |OB|)/2}$. Since the code space dimension is $\dim \mathcal{C} = 2^{|B|}$, equation~\eqref{eq:thm-89} says that the ghost operators are $\delta$-approximate for
\begin{equation}\label{eq:delta-from-pseudo}
    \delta = 2^{5/4} 2^{2|B|} \sqrt{\epsilon}
    = 2\cdot 3^{1/4} ~ 2^{2|B|} ~2^{-(\alpha |H| - |OB|)/4}
     = 2\cdot 3^{1/4} ~2^{-(\alpha |H| - |O| - 9|B|)/4}.
\end{equation}
Thus, $\delta$ becomes exponentially small for asymptotically large $|H|$, $|O|$, and $|B|$, provided $|O|, |B| \ll  |H|$. We could, for example, consider an encoded interior and an observer with size scaling linearly with $|H|$, and still have a complete set of ghost logical operators commuting with all low-complexity operations on $E$, up to exponentially small errors. 

This conclusion followed only from the assumption that the state of $EB$ is pseudorandom --- we needed no other special properties of black holes to derive it. We might, in fact, expect the same pseudorandomness assumption to hold not just for black holes but also for other strongly chaotic quantum systems. But a black hole \emph{is} special, because it has an event horizon, and it is because of the event horizon that we expect the late radiation system $B$ to be entangled with modes behind the horizon as well as with a subspace of $EH$; thus arises the black hole firewall problem. To ease the firewall problem, we propose using the ghost logical operators to describe (a portion of) the black hole interior. We would not make such a proposal for describing the ``interior'' of a burning lump of coal. 

Pleasingly, under this proposal, it is hard for an agent who acts on the radiation to create a firewall, or to otherwise influence the black hole interior apart from exponentially small effects. To create an excitation behind the horizon, the agent outside the black hole must perform an operation of superpolynomial complexity. 

We might want to allow the observer to perform a quantum computation on $E$ which is chosen from a long list of possible unitary transformations. The observer's freedom to choose can be encoded in the observer's initial state $\omega_O$, as depicted in Figure~\ref{fig:action_radiation}. If there are multiple observers $\{O_1, O_2, \dots O_m\}$, all interacting with $E$, we can group them all together into a collective observer $O=O_1O_2\dots O_m$. We may construct a complete set of ghost logical operators acting on the encoded black hole interior, consistently shared by all the observers, provided that $|O|, |B| \ll |H|$.

To be more concrete, suppose we want the black hole interior to be protected against any unitary transformation acting on $E$ chosen from amongst a collection of $N$ possible unitaries $\mathcal{U}=\{U_a\}_{a=1}^N$. We can model this situation by considering a conditional unitary transformation, controlled by an ancilla register in the observer's possession. To ensure that we can apply Theorem \ref{thm:ghostapx} we will add the identity transformation $U_0 = I_E$ to the list of possibilities, and envision that the observer applies 
\begin{equation}\label{eq:U_U-controlled}
    U_{\mathcal{U}} = \sum_{a=0}^{N} |a\rangle\langle a|_O\otimes (U_a)_E,
\end{equation}
where each $|a\rangle_O$ is a computational basis state and $2^{|O|} = N{+}1$.
Thus $U_a$ is applied by fixing the initial state of the $O$ register to be $|a\rangle_O$; see Figure~\ref{fig:controlledU}.

\begin{figure}[ht]
\centering
\includegraphics[width=0.6\columnwidth]{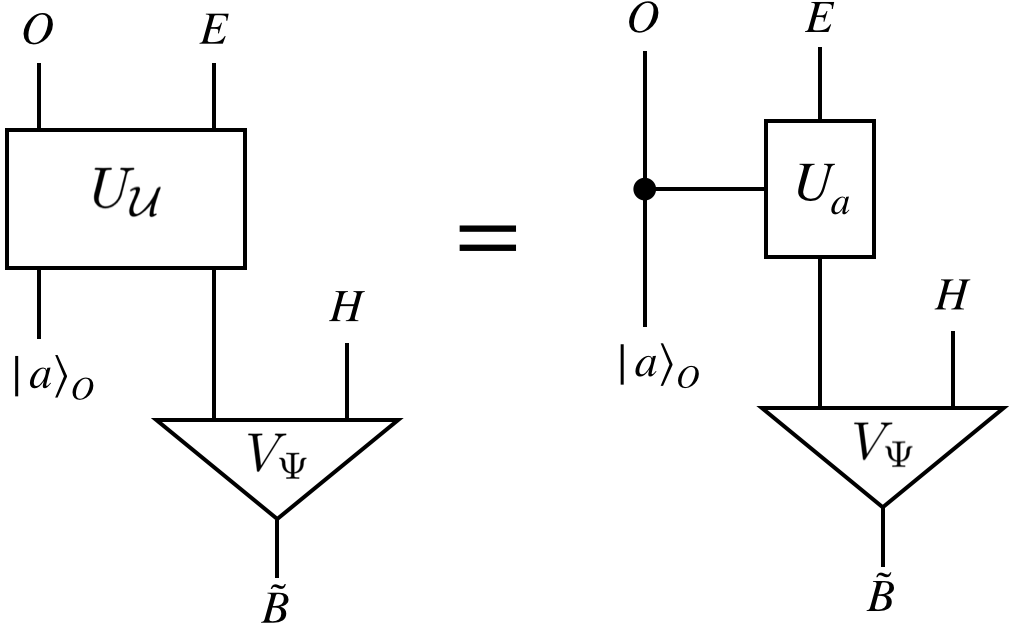}
\caption{The action of the observer as a controlled unitary tranformation. 
}
\label{fig:controlledU}
\end{figure}

Our construction of a complete set of ghost logical operators applies --- assuming the Hawking radiation is pseudorandom --- if $U_\mathcal{U}$ has complexity polynomial in $|H|$. This will be assured if the cardinality $N$ of the list of unitaries is polynomial in $|H|$. The unitary
\begin{equation}
    \Lambda_a(U_a) = |a\rangle\langle a|_O \otimes (U_a)_E
\end{equation}
for which a non-trivial unitary acting on $E$ is triggered only by the basis state $|a\rangle_O$, has polynomial quantum complexity if $U_a$ does --- we show in Lemma \ref{lem:controlU} that, if we fix the complexity of $U_a$, then $\Lambda_a(U_a)$ can be implemented to precision $\epsilon$ with a circuit of $O(N^2 \log^4(1/\epsilon))$ two-gubit gates. Furthermore, the overall operator
\begin{align}
U_{\mathcal{U}} = \prod_{a=0}^{N}\Lambda_a(U_a)
\end{align}
is a product of $N+1$ such unitaries, and thus has complexity at worst a factor of $N+1$ larger. Therefore, if $N= \textrm{poly}(|H|)$, then $U_\mathcal{U}$ can be executed to exponential precision with a circuit of size $\textrm{poly}(|H|)$.

The unitary transformation $U_\mathcal{U}$ is a dilation of the quantum channel
\begin{equation}\label{eq:channel-E-U}
    \mathcal{E}_\mathcal{U}(\rho) = \frac{1}{N+1}\sum_{a=0}^N U_a \rho U_a^\dagger 
\end{equation}
acting on $E$, with Kraus operators $\{U_a\}_{a=0}^N$. Because $U_\mathcal{U}$ has polynomial complexity, under the pesudorandomness assumption a complete set of $\delta$-approximate ghost logical operators can be constructed, with $\delta$ given by equation~\eqref{eq:delta-from-pseudo}. In other words, for each unitary $U_a$ that the observer might apply, $U_a$ commutes with all ghost logical operators up to an exponentially small error. Hence no matter which low-complexity unitary the observer applies, the encoded black hole interior is hardly affected at all.

This conclusion is summarized by the following theorem:

\begin{theorem}\label{thm:controlled_unitary} 
Suppose that the decoupling bound~\eqref{eq:decoupling_main} holds. Let $V:\mathcal{H}_{\tilde{B}}\rightarrow \mathcal{H}_{EH}$ denote the black hole code embedding. Let $\mathcal{U} = \{U_a\}_{a=1}^N$ denote an arbitrary set of $N = \mathrm{poly}(|H|)$ unitaries acting on the early radiation $E$, where each unitary has complexity $\mathrm{poly}(|H|)$. 
Then there exists a complete set of logical operators $\mathcal{L} \subseteq \mathcal{B}(\mathcal{H}_{EH})$ for the black hole code such that for all $T \in \mathcal{L}$, and all $U_a \in \mathcal{U}$, we have
\begin{align}
    \|[U_a, T]V\| \le 2\delta' \|\tilde{T}\|,\label{eq:theorem810a}
    \end{align} 
    and
    \begin{align}
    \|TV - V\tilde{T}\| \le \delta' \|\tilde{T}\|, \label{eq:theorem810b}
\end{align}
where $\Tilde T$ is the operator on $\tilde B$ corresponding to $T$, and
\begin{align}\label{eq:delta-prime}
\delta' = 8 \cdot 6^{1/4}~ 2^{-\alpha |H|/4}(N+1)^{3/4}
\end{align}
if $\tilde B$ is a single qubit $(|B|=1)$.
\end{theorem}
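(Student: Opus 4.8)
The plan is to obtain Theorem~\ref{thm:controlled_unitary} as a direct corollary of the approximate-ghost construction of Theorem~\ref{thm:ghostapx}, applied to the channel whose Kraus operators are proportional to the unitaries $\{U_a\}$. The key preparatory move is to fold the identity into the error set exactly as in~\eqref{eq:U_U-controlled}. I adjoin $U_0 = I_E$ to $\mathcal{U}$ and introduce the channel $\mathcal{E}'(\rho)=\tfrac{1}{N}\sum_{a=1}^N U_a\rho U_a^\dagger$ on $E$, with the trace-preserving Kraus representation $K=\{U_a/\sqrt{N}\}_{a=1}^N$; this is the representation I feed to the ghost construction. Since Theorem~\ref{thm:ghostapx} certifies ghost bounds for $E_a\in K\cup\{I\}$ and requires correctability of $\mathcal{E}'_{\mathcal{I}}=\tfrac12\mathcal{I}+\tfrac12\mathcal{E}'$, it is $\mathcal{E}'_{\mathcal{I}}$ whose correctability must be established; note its Kraus span is precisely $\mathrm{span}\{I,U_1,\dots,U_N\}$.

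First I would verify the hypotheses. The controlled unitary $U_{\mathcal{U}}=\prod_{a=0}^N\Lambda_a(U_a)$ is a dilation of $\mathcal{E}'_{\mathcal{I}}$ for a suitable choice of control-register amplitudes ($|c_0|^2=\tfrac12$, $|c_a|^2=\tfrac{1}{2N}$), a choice that affects neither the Kraus span nor the complexity; by Lemma~\ref{lem:controlU} and the product decomposition it has complexity $\mathrm{poly}(|H|)$ whenever $N=\mathrm{poly}(|H|)$ and each $U_a$ is efficient. Hence the decoupling bound~\eqref{eq:decoupling_main} applies, and Lemma~\ref{lem:correctable} gives that $\mathcal{E}'_{\mathcal{I}}$ is $\epsilon$-correctable with $\epsilon=\sqrt{3}\cdot 2^{-(\alpha|H|-|O|)/2}$ for single-qubit $B$, where the environment $O$ is the $(N+1)$-dimensional control register, so $2^{|O|}=N+1$ and $\epsilon=\sqrt{3}\,(N+1)^{1/2}2^{-\alpha|H|/2}$. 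Theorem~\ref{thm:ghostapx} then supplies, for every normal $\tilde{T}$, a universal $\delta$-approximate ghost operator $T$ with $\delta=2^{5/4}(\dim\mathcal{C})^2\sqrt{\epsilon}$; since $\dim\mathcal{C}=2^{|B|}=2$ this is $\delta=2^{13/4}\,3^{1/4}\,(N+1)^{1/4}2^{-\alpha|H|/4}$.

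It then remains to convert the ghost bound~\eqref{eq:ghostapx} into the two claimed inequalities. Applying \eqref{eq:ghostapx} with $E_a=I$ gives $\|TV-V\tilde{T}\|\le \delta\|\tilde{T}\|$, which yields~\eqref{eq:theorem810b} as soon as $\delta\le\delta'$. Applying it instead with the normalized Kraus operator $E_a=U_a/\sqrt{N}$ and clearing the factor gives $\|TU_aV-U_aV\tilde{T}\|\le \sqrt{N}\,\delta\|\tilde{T}\|$. A single triangle inequality, using unitarity of $U_a$, then controls the commutator: $\|[U_a,T]V\|\le \|U_a(TV-V\tilde{T})\|+\|TU_aV-U_aV\tilde{T}\|\le(1+\sqrt{N})\delta\|\tilde{T}\|$. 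Using $1+\sqrt{N}\le 2\sqrt{N+1}$, I would set $\delta'=\sqrt{N+1}\,\delta$, which dominates $\delta$ (securing~\eqref{eq:theorem810b}) and gives $\|[U_a,T]V\|\le 2\delta'\|\tilde{T}\|$, establishing~\eqref{eq:theorem810a}. Substituting $\delta$ and using $2^{13/4}3^{1/4}=8\cdot 6^{1/4}$ reproduces $\delta'=8\cdot 6^{1/4}\,2^{-\alpha|H|/4}(N+1)^{3/4}$, in agreement with~\eqref{eq:delta-prime}.

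The main obstacle is bookkeeping of normalizations rather than any conceptual difficulty. One must set up the error channel so that the identity sits inside the correctable span, identify the environment dimension with $N+1$ so that Lemma~\ref{lem:correctable} delivers precisely $\epsilon\propto(N+1)^{1/2}$, and then carefully account for the factor $\sqrt{N}$ that appears when passing from the trace-preserving Kraus operators $U_a/\sqrt{N}$ to the bare unitaries $U_a$. It is the combination of this $\sqrt{N}$ with the $\sqrt{N+1}$ from the commutator step and the $(N+1)^{1/4}$ already present inside $\delta$ that assembles the final exponent $(N+1)^{3/4}$; making these powers of $N$ line up exactly is where the care is needed.
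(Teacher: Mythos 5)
Your proof is correct and follows essentially the same route as the paper's: dilate the family $\{U_a\}$ as a low-complexity controlled unitary, invoke the decoupling bound and Lemma~\ref{lem:correctable} with the $(N+1)$-dimensional control register as the observer $O$, apply Theorem~\ref{thm:ghostapx}, and then rescale the trace-preserving Kraus operators back to the bare unitaries before a triangle-inequality step gives the commutator bound. The only difference is bookkeeping: the paper folds the identity into the uniform-mixture channel $\mathcal{E}_\mathcal{U}$ with Kraus operators $\{U_a/\sqrt{N+1}\}_{a=0}^{N}$, whereas you keep $\mathcal{E}'$ identity-free and certify correctability of $\mathcal{E}'_{\mathcal{I}}$ directly (a slightly more literal match to the hypothesis of Theorem~\ref{thm:ghostapx}), and both routes assemble the identical $\delta' = 8\cdot 6^{1/4}\,2^{-\alpha|H|/4}(N+1)^{3/4}$.
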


\begin{proof}
Let us model the observer $O$ on the Hilbert space $\mathcal{H}_O = \mathbb{C}^{N+1}$, so that $2^{|O|}=N+1$. In Lemma~\ref{lem:controlU}, we show that the conditional unitary $U_\mathcal{U}$ defined in equation~\eqref{eq:U_U-controlled} can be approximated to exponential accuracy with a circuit of size $\mathrm{poly}(|H|)$ if each $U_a$ has complexity $\mathrm{poly}(|H|)$ and $N=\mathrm{poly}(|H|)$; therefore, under the pseudorandomness assumption, $U_\mathcal{U}$ is $\epsilon$-correctable with 
\begin{equation}
    \epsilon = \sqrt{\frac{3}{2}}\cdot 2^{-(\alpha |H| - |OB|)/2},
\end{equation}
and hence there exists a complete set of $\delta$-approximate ghost logical operators for $U_\mathcal{U}$ with
\begin{equation}
    \delta = 2\cdot 3^{1/4} ~2^{-(\alpha |H| - |O| - 9|B|)/4},
\end{equation}
or
\begin{equation}
    \delta = 8 \cdot 6^{1/4}~ 2^{-\alpha |H|/4}2^{|O|/4} = 8 \cdot 6^{1/4}~ 2^{-\alpha |H|/4}(N+1)^{1/4}
\end{equation}
if $|B|=1$. The Kraus operators for the channel $\mathcal{E}_\mathcal{U}$ in equation~\eqref{eq:channel-E-U} are $\{U_a/\sqrt{N{+}1}\}$; hence 
\begin{align}
\|TU_{a}V - U_{a}V\tilde{T}\| &\le \|\tilde{T}\|\delta\sqrt{N+1}=\|\tilde{T}\|\delta'.
\end{align}
This, together with Lemma~\ref{lem:ghostcommute}, gives the desired results equation~\eqref{eq:theorem810a} and equation~\eqref{eq:theorem810b}.

\end{proof}

Note that, although $\delta'$ in equation~(\ref{eq:delta-prime}) could be exponentially small even for superpolynomial $N$, we required $N = \textrm{poly}(|H|)$ because only in that case have we shown that the conditional unitary $U_\mathcal{U}$ has complexity $\textrm{poly}(|H|)$; we needed this property for the pseudorandomness assumption to imply that the observer is unable to distinguish the state of $EB$ from a maximally mixed state. 

We have inferred the existence of ghost logical operators which act on $EH$. It should also be possible to realize a non-trivial logical operator as a physical operator acting on $E$ alone, but only if that operator is computationally complex to construct. For instance, suppose that  
$W: \mathcal{H}_E \rightarrow \mathcal{H}_E$ is a unitary logical operator that can be accurately approximated by a quantum circuit of polynomial size.
Then there exists a ghost logical operator $T$ that fails to commute with $W$ acting on the code space. Since $W$ has polynomial complexity, this contradicts Theorem \ref{thm:controlled_unitary}, and we conclude that no such $W$ can exist. 
This conclusion resonates with the observations of Bouland, Fefferman, and Vazirani, who argued that in the context of AdS/CFT duality, the dictionary relating the black hole exterior and interior should be computationally complex \cite{bouland2019,susskind2020horizons}.

On the other hand, if a quantum circuit is allowed to act on $H$ as well as $E$, and if $B$ has constant size, then any logical operator on the code space \emph{can} be realized efficiently.  We show this in Section \ref{sec:efficient-manipulation}.

\subsection{State dependence 
\label{sec:comments}}
The (approximate) encoding isometry $V_\Psi: \mathcal{H}_{\tilde B}\rightarrow  \mathcal{H}_{EH}$ is determined by the pure quantum state $\Psi_{EHB}$ of the black hole $H$ and its emitted Hawking radiation $EB$. This state, and hence the encoding map, depends on the initial microstate of the infalling matter that collapsed to form the black hole. Therefore, the encoded interior of the black hole is said to be ``state dependent'' \cite{Papadodimas2013,Papadodimas2016}.

If black hole evaporation is unitary, and the event horizon is smooth because the black hole interior is encoded in the radiation, then state dependence of the encoding seems to be unavoidable; if the quantum information encoded in the initial state is preserved in the final state of the fully evaporated black hole, then how the late radiation emitted after the Page time is entangled with the early radiation emitted before the Page time must depend on that initial state.  This state dependence of the encoding is nonetheless troubling \cite{Harlow2014,Bousso2014,Marolf2016}. If the experiences of observers who fall through the event horizon are described by the logical operators of the code, and these logical operators are state dependent, then the observers inside the black hole seem to be capable of measuring nonlinear operators acting on $\Psi_{EHB}$, rather than linear operators as in the standard theory of quantum measurement. This ability to measure nonlinear properties of the state could lead to inconsistencies. We regard this as an unresolved issue, reflecting our incomplete understanding of how to describe measurements conducted behind black hole horizons. 

But the state-dependent encoding of the black hole interior is not sufficient by itself to solve the black hole firewall problem.\footnote{We thank Raphael Bousso for raising this issue.} 
If the Hawking radiation is thoroughly scrambled, then we expect that the interior mode that purifies $B$ can be decoded by acting on $E$ alone after the Page time \cite{Hayden2007}, and therefore that the logical operators of the code may also be chosen to act on $E$ alone. If $T$ and $S$ are two noncommuting logical operators, where $S$ acts on $E$, then an observer (Bob) outside the black hole who applies $S$ could in principle alter the outcome of a measurement of $T$ performed by an observer (Alice) inside the black hole. Thus Bob can send an instantaneous message to Alice, in apparent violation of relativistic causality.

While we agree that such acausal signaling is possible in principle, we insist that the computational complexity of the task should be considered. Under the assumption that the Hawking radiation is pseudorandom, we have found that, in order to signal Alice, Bob must apply an operation to $E$ with complexity superpolynomial in $|H|$, if Alice's observables are the ghost logical operators we have constructed. Though possible, such an operation is infeasible in practice if the black hole $H$ is macroscopic; therefore the semiclassical causal structure of the spacetime is respected. 

\section{Inside the black hole}\label{sec:efficient-manipulation}

Under our pseudorandomness assumption, an observer who acts on the early radiation system $E$ can affect the encoded interior of a black hole only by applying an operation with superpolynomial complexity. However, an agent who has access to the black hole system $H$ as well as $E$ can manipulate the interior efficiently. Here we construct an efficient unitary circuit $\bar U_{EH}$, acting on $EH$, that perturbs the encoded interior. Our construction makes use of an efficient quantum circuit that realizes the unitary $U_{\text{bh}}$ that describes the formation and partial evaporation of a black hole. This unitary creates a state in which $B$ is maximally entangled with a subspace of $EH$; if the circuit that implements $U_{\text{bh}}$ is efficient, then $\bar U_{EH}$ can be implemented efficiently as well. We will also see that an agent with access to $EH$ can efficiently decode the interior, distilling the code subspace of $EH$ to a small quantum memory.

Suppose we are given a unitary operator $U_{BEH}$ which realizes the map 
\begin{equation}
    U_{BEH}|0\rangle_B |0\ldots 0\rangle_{EH} = \frac{1}{\sqrt{2}}(|0\rangle_B |\psi_0\rangle_{EH} + |1\rangle_B  |\psi_1\rangle_{EH}),
\end{equation}
where $B$ is a single qubit, and $EH$ is $n$ qubits. By applying the circuits that implement $U_{BEH}$ and  $U_{BEH}^{\dagger}$ on an ancillary register, together with some additional gates acting on the ancilla and $EH$, we will apply a unitary operator $\bar{U}_{EH}$ acting on $EH$ with the property that
\begin{equation}
    \begin{aligned}
    & \bar{U}_{EH}|\psi_0\rangle_{EH} = v_{00} |\psi_0\rangle_{EH} + v_{10}|\psi_1\rangle_{EH},\\
      & \bar{U}_{EH}|\psi_1\rangle_{EH} = v_{01} |\psi_0\rangle_{EH} + v_{11}|\psi_1\rangle_{EH},
    \end{aligned}
\end{equation}
where 
\begin{equation}
v = \left(
\begin{array}{cc}
  v_{00}   & v_{01} \\
 v_{10}    & v_{11}
\end{array}
\right)
\end{equation}
is some chosen $2\times 2$ unitary matrix. That is, $\bar{U}_{EH}$ applies an arbitrary ``logical'' unitary transformation on the two-dimensional ``code space'' spanned by $\{ |\psi_0\rangle_{EH}, |\psi_1\rangle_{EH}\}$.

The protocol is explained in two steps. First, we describe a probabilistic protocol which applies $\bar{U}_{EH}$ with success probability $\frac{1}{4}$. Next, using the probabilistic protocol, we build  a deterministic protocol which applies $\bar{U}_{EH}$ with probability $1$. The first protocol applies a unitary $U_{a_1a_2}$ and $U^\dagger_{a_1a_2}$ once each. Here, $U_{a_1a_2}$ is a unitary acting on an ancillary register $a=a_1a_2$ and can be realized by applying the circuit that implements $U_{BEH}$ on register $a_1$ and $a_2$. The register $B$ is replaced with $a_1$ and the register $EH$ is replaced with $a_2$. The second protocol applies $U_{a_1a_2}$ and $U_{a_1a_2}^\dagger$ three times each. We also use some additional gates, which are also efficient.

For the probabilistic protocol, consider the following sequence of operations:
\begin{enumerate}
    \item Initialize $a$ in the $|0\ldots 0\rangle$ state.
    \item Apply $U_{a_1a_2}$.
    \item Apply a swap between $a_2$ and $EH$.
    \item Apply the single-qubit operation $v^T$ to $a_1$.
    \item Apply $U_{a_1a_2}^{\dagger}$.
    \item Measure the $a$ register in the computational basis, and postselect on measuring the all-$0$ bit string.
\end{enumerate}
\noindent Applying this protocol for $U_{BEH}=U_{\text{bh}}$, and taking the initial state 
to be $|\phi_\mathrm{matter}\rangle = |00\dots 0\rangle$, we obtain the circuit diagram in Figure~\ref{fig:bh_inside_probabilistic}.

\begin{figure}[ht]
    \centering
    \includegraphics[width=0.7\columnwidth]{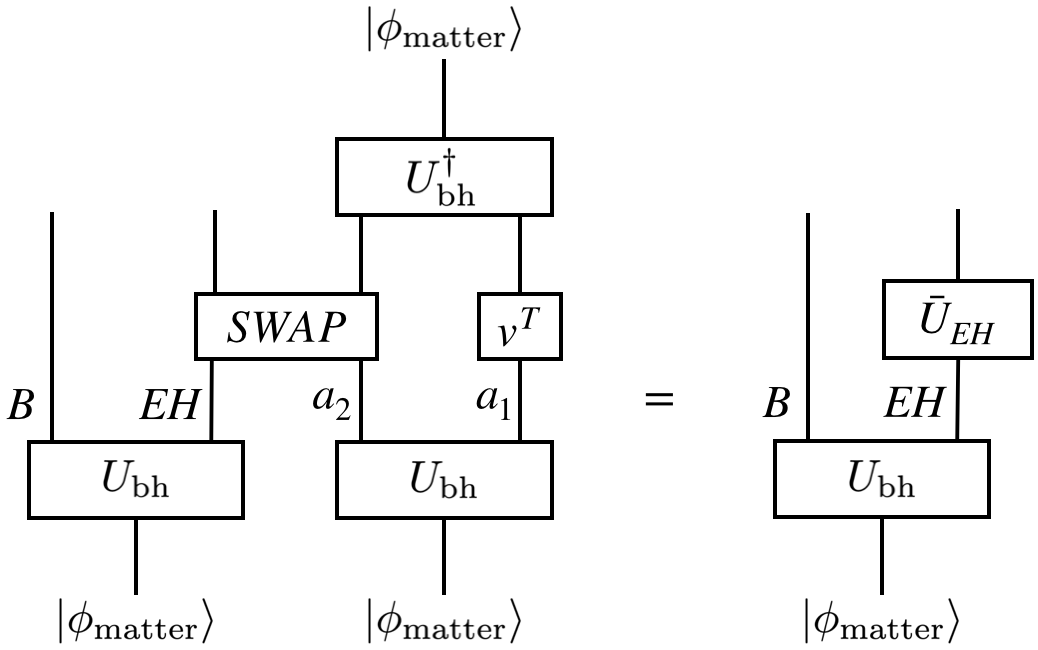}
    \caption{A probabilistic protocol which (with success probability $1/4$) applies an arbitrary unitary operator $v$ to $\tilde{B}$, the encoded interior partner of $B$. Here $\bar U_{EH}$ denotes $v$ acting on the code subspace of $EH$. \label{fig:bh_inside_probabilistic}}
\end{figure}

Let's analyze what happens when this protocol is executed. Suppose the state of $EH$ is an arbitrary pure quantum state $|\psi\rangle_{EH}$. After the second step, we have
\begin{equation}
    \frac{1}{\sqrt{2}}\left(|0\rangle_{a_1}|\psi_0\rangle_{a_2} + |1\rangle_{a_1}|\psi_1\rangle_{a_2}\right)|\psi\rangle_{EH}.
\end{equation}
Now expand $|\psi\rangle_{EH}$ in an orthonormal basis that includes both $|\psi_0\rangle$ and $|\psi_1\rangle$;  the remaining $2^{n}-2$ elements of the basis set are labeled $|\psi_i\rangle$ from $i=2$ to $i=2^{n}-1$, so that 
\begin{equation}
    |\psi\rangle = \sum_{i} \lambda_i |\psi_i\rangle.
\end{equation}
After the third step, we obtain
\begin{equation}
    \frac{1}{\sqrt{2}}(|0\rangle_{a_1}|\psi_0\rangle_{EH} + |1\rangle_{a_1}|\psi_1\rangle_{EH})|\psi\rangle_{a_2},
\end{equation}
which after the fourth step becomes
\begin{equation}
\begin{aligned}
   & \frac{1}{\sqrt{2}}((v_{00}|0\rangle_{a_1}+ v_{01}|1\rangle_{a_1})|\psi_0\rangle_{EH} + (v_{10}|0\rangle_{a_1}+ v_{11}|1\rangle_{a_1})|\psi_1\rangle_{EH})|\psi\rangle_{a_2}\\
    =  &\frac{1}{\sqrt{2}}(|0\rangle_{a_1}\left(v_{00}|\psi_0\rangle_{EH} + v_{10} |\psi_1\rangle_{EH}\right)+
    |1\rangle_{a_1}\left(v_{01}|\psi_0\rangle_{EH} + v_{11}|\psi_1\rangle_{EH}\right)) |\psi\rangle_{a_2}.
\end{aligned}
\end{equation}

Now we want to study what happens after we carry out the fifth and the sixth step. Instead of explicitly applying $U_{a_1a_2}^{\dagger}$, it is more convenient to think about an orthogonal measurement in a basis that includes $U_{a_1a_2}|0\ldots 0\rangle_a = \frac{1}{\sqrt{2}}(|0\rangle_{a_1}|\psi_0\rangle_{a_2} + |1\rangle_{a_1}|\psi_1\rangle_{a_2})$. After projecting onto this state, we obtain the (subnormalized) state
\begin{equation} 
\begin{aligned}
    &\frac{1}{2}(\lambda_0\left(v_{00}|\psi_0\rangle_{EH} + v_{10} |\psi_1\rangle_{EH}\right)+
    \lambda_1\left(v_{01}|\psi_0\rangle_{EH} + v_{11}|\psi_1\rangle_{EH}\right))\\
    =& \frac{1}{2}\left(\left(v_{00} \lambda_0 + v_{01} \lambda_1\right)|\psi_0\rangle_{EH} +\left(v_{10} \lambda_0 + v_{11} \lambda_1\right)|\psi_1\rangle_{EH} \right)),
    \end{aligned}
\end{equation}
which aside from the normalization factor of $1/2$ is equivalent to applying $v$ to the code vector $\lambda_0|\psi_0\rangle_{EH} + \lambda_1 |\psi_1\rangle_{EH}$. Hence, $\bar U_{EH}$ is applied with success probability $1/4$.

Now we explain how to upgrade this probabilistic operation to a unitary quantum circuit that applies $\bar{U}_{EH}$ deterministically. For this purpose, we use the 
\emph{oblivious amplitude amplification} technique introduced by Berry \emph{et al.}; see Lemma 3.6 of \cite{Berry2014}. For the reader's convenience, we restate this result.
\begin{lemma}
 (Oblivious amplitude amplification) Let $V'$ and $V$ be unitary matrices on $\mu + n$ qubits and $n$ qubits respectively, and let $\theta \in (0, \pi/2)$. Suppose that for any $n$-qubit state $|\psi\rangle$,
 \begin{equation}
     V'|0^{\mu}\rangle|\psi\rangle = \sin (\theta)|0^{\mu}\rangle V|\psi\rangle + \cos(\theta)|\Phi^{\perp}\rangle,
 \end{equation}
 where $(|0^{\mu}\rangle\langle 0^\mu|\otimes I)|\Phi^{\perp}\rangle=0.$ Let $R=2|0^{\mu}\rangle\langle 0^{\mu}|\otimes I-I$ and $S=-V'RV'^{\dagger}R^{\dagger}$. Then,
 \begin{equation}
     S^{\ell} V'|0^{\mu}\rangle|\psi\rangle = \sin((2\ell+1)\theta) |0^{\mu}\rangle V|\psi\rangle + \cos ((2\ell+1)\theta)|\Phi^{\perp}\rangle.
 \end{equation}
\end{lemma}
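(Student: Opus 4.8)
The plan is to fix an arbitrary $n$-qubit state $|\psi\rangle$ and show that the iterate $S$ acts as a rotation by angle $2\theta$ inside a single two-dimensional subspace that carries the entire dynamics; the stated formula then follows by induction on $\ell$. Concretely, I would set $|g\rangle = |0^{\mu}\rangle V|\psi\rangle$ and $|b\rangle = |\Phi^{\perp}\rangle$, so that the hypothesis reads $V'|0^{\mu}\rangle|\psi\rangle = \sin\theta\,|g\rangle + \cos\theta\,|b\rangle =: |\psi_0\rangle$. Since $(|0^{\mu}\rangle\langle 0^{\mu}|\otimes I)|b\rangle = 0$ while $|g\rangle$ lies in the range of $P := |0^{\mu}\rangle\langle 0^{\mu}|\otimes I$, the vectors $|g\rangle,|b\rangle$ are orthonormal and span a plane $\mathcal{T}$. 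On $\mathcal{T}$ the reflection $R = 2P - I$ acts as $\mathrm{diag}(1,-1)$, so the whole problem reduces to understanding the second reflection $V'RV'^{\dagger} = 2V'PV'^{\dagger} - I$ on $\mathcal{T}$.

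The engine of the argument is a single block identity that I would extract from the fact that the hypothesis holds for \emph{every} $|\psi\rangle$ with the \emph{same} $\theta$. Projecting the hypothesis with $P$ gives, as an operator equation,
\begin{equation}
    P V' P = \sin\theta\,\big(|0^{\mu}\rangle\langle 0^{\mu}|\otimes V\big).
\end{equation}
Taking its adjoint and applying it to $|g\rangle$ and $|b\rangle$, I would compute $P V'^{\dagger}|g\rangle = \sin\theta\,|0^{\mu}\rangle|\psi\rangle$ and $P V'^{\dagger}|b\rangle = \cos\theta\,|0^{\mu}\rangle|\psi\rangle$, the latter after writing $|b\rangle = (|\psi_0\rangle - \sin\theta|g\rangle)/\cos\theta$. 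Applying $V'$ to these then yields $V'PV'^{\dagger}|g\rangle = \sin\theta\,|\psi_0\rangle$ and $V'PV'^{\dagger}|b\rangle = \cos\theta\,|\psi_0\rangle$; that is, $V'PV'^{\dagger}$ restricted to $\mathcal{T}$ is exactly the rank-one projector $|\psi_0\rangle\langle\psi_0|$, and in particular $\mathcal{T}$ is invariant.

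With both reflections pinned down on $\mathcal{T}$ — namely $R = \mathrm{diag}(1,-1)$ and $V'RV'^{\dagger} = 2|\psi_0\rangle\langle\psi_0| - I$ — I would multiply the two $2\times 2$ matrices and read off that $S = -V'RV'^{\dagger}R^{\dagger}$ (using $R^{\dagger}=R$) equals the rotation $\left(\begin{smallmatrix}\cos 2\theta & \sin 2\theta\\ -\sin 2\theta & \cos 2\theta\end{smallmatrix}\right)$ in the basis $(|g\rangle,|b\rangle)$. Since $|\psi_0\rangle = \sin\theta|g\rangle + \cos\theta|b\rangle$ sits at angle $\theta$, applying this rotation $\ell$ times carries it to angle $(2\ell+1)\theta$, which is precisely the claim. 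The one genuinely nontrivial point — and the step I would dwell on — is the invariance of $\mathcal{T}$ under $V'PV'^{\dagger}$: this operator reflects about the $2^{n}$-dimensional subspace $V'(\mathrm{range}\,P)$, so a priori the Grover-type orbit could leak out of the plane, and it is exactly the uniformity of the hypothesis over all $|\psi\rangle$ (encoded in the block identity above) that confines the dynamics to $\mathcal{T}$ and makes the amplification \emph{oblivious}.
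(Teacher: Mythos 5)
Your proof is correct. Note that the paper itself does not prove this lemma at all---it restates Lemma 3.6 of Berry \emph{et al.}~for the reader's convenience and relies on that reference---so there is no in-paper argument to compare against. Your argument is essentially the standard one behind that result: the operator identity $PV'P = \sin\theta\,\bigl(|0^{\mu}\rangle\langle 0^{\mu}|\otimes V\bigr)$, extracted from the fact that the hypothesis holds for \emph{every} $|\psi\rangle$ with the same $\theta$, is exactly what forces $V'PV'^{\dagger}$ to map the plane $\mathcal{T}=\mathrm{span}\{|g\rangle,|b\rangle\}$ into itself (via $V'PV'^{\dagger}|g\rangle=\sin\theta\,|\psi_0\rangle$ and $V'PV'^{\dagger}|b\rangle=\cos\theta\,|\psi_0\rangle$), after which $S$ restricted to $\mathcal{T}$ is the rotation by $2\theta$ and the conclusion follows by iterating. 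You correctly identify this invariance step as the crux---it is precisely where obliviousness (uniformity in $|\psi\rangle$) enters, since for a single fixed $|\psi\rangle$ the orbit could leak out of the plane.
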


\noindent In our case, $V'$ is the unitary process described in the first five steps, $V$ is $\bar{U}_{EH}$, $|0^{\mu}\rangle$ is $|0\ldots 0\rangle_a$, and $\sin (\theta)= \frac{1}{2}$. Therefore, $\theta = \frac{\pi}{6}$, and we can choose $\ell=1$ to apply $V$ deterministically. For this choice of $\ell$, it suffices to apply $V'$ twice and its inverse once to achieve $V$. For each $V'$, we apply $U_{a_1a_2}$ and its inverse $U_{a_1a_2}^\dagger$ once each (as well as other simple unitary operations). In total, then, we can deterministically apply $\bar{U}_{EH}$ by using $U_{a_1a_2}$ three times and $U_{a_1a_2}^\dagger$ three times. In particular, the entire circuit is efficient if $U_{a_1a_2}$ is. Applying this protocol for $U_{BEH}=U_{\text{bh}}$, we obtain the circuit diagram in Figure~\ref{fig:bh_inside_deterministic}.

\begin{figure}[ht]
\centering
\includegraphics[width=0.875\columnwidth]{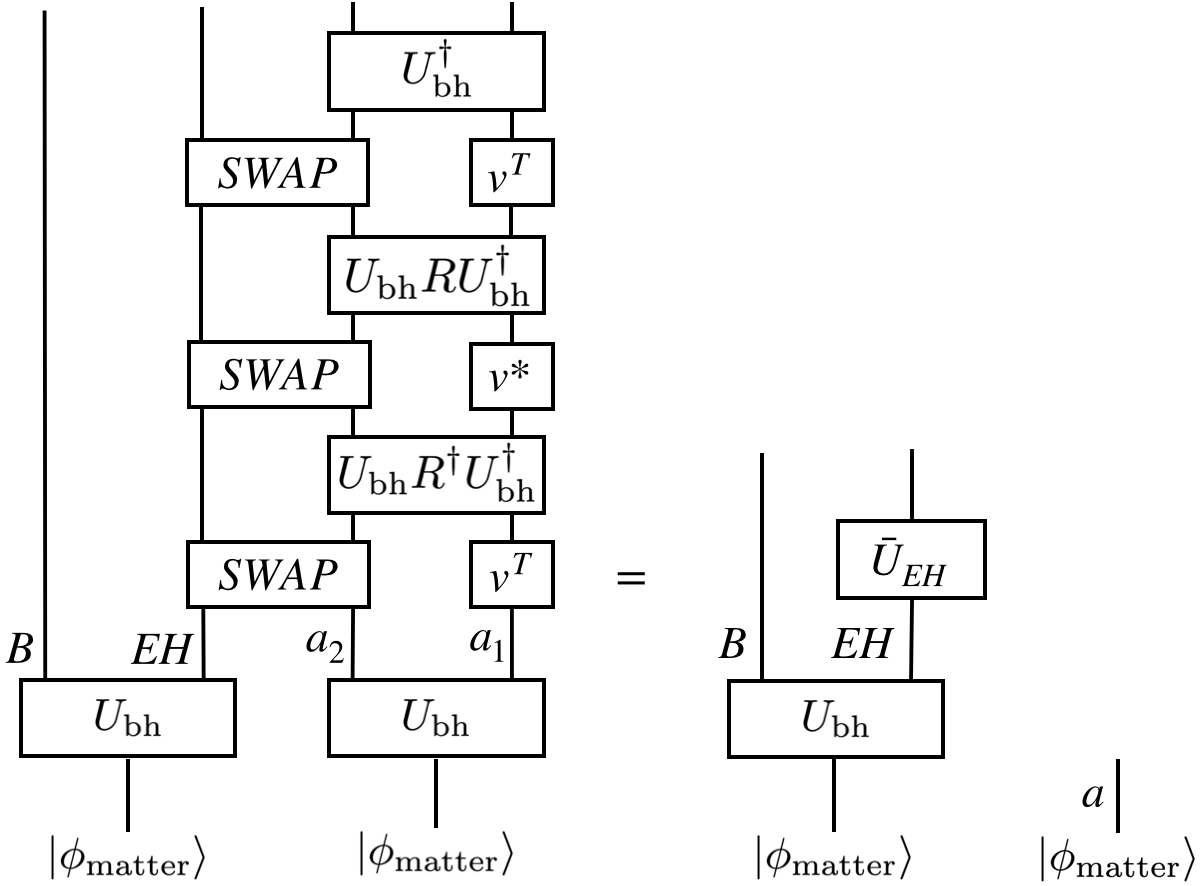}
\caption{A deterministic circuit which applies an arbitrary unitary operator $v$ to $\tilde{B}$, the encoded interior partner of $B$. Here $\bar U_{EH}$ denotes $v$ acting on the code subspace of $EH$. Note that the final $U_{\text{bh}}^\dagger$ and $v^T$ acting on the ancilla can be removed without changing how the circuit acts on the code space.   \label{fig:bh_inside_deterministic}}
\end{figure}

More generally, suppose that the register $B$ contains $|B|> 1$ qubits, so that the code subspace of $EH$ has dimension $2^{|B|}$. A probabilistic protocol for applying an arbitrary unitary transformation to the code space can be constructed that closely follows the construction for a single qubit, but now with success probability $2^{-2|B|}$. 
In particular, using the probabilistic protocol and oblivious amplitude amplification we can approximate any two-qubit gate ($|B|=2$) acting on the code space accurately and efficiently. 
From a universal set of such two-qubit gates, we can build a logical unitary circuit. Hence any low-complexity operation on the code space can be realized as a low-complexity quantum circuit acting on $EH$. 


If we can perform logical gates on the code space, then we can also decode the logical state, distilling it to a small quantum memory in our possession. To be concrete, suppose the code space is two-dimensional. To decode, it suffices to prepare an ancilla qubit $b$ in an arbitrary state, and then perform a SWAP operation on $b$ and the encoded qubit. For this purpose we can use the quantum circuit identity shown in Figure~\ref{fig:swap}, where SWAP is constructed from controlled-$X$, controlled-$Z$, and Hadamard gates. The Hadamard gates act on $b$, and the C-$X$ and C-$Z$ gates act with $b$ as the control qubit and the code space as the target qubit.
\begin{figure}[ht]
    \centering
    \includegraphics[width=.95\columnwidth]{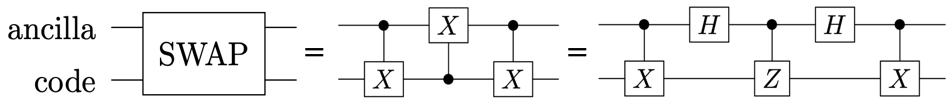}
    \caption{A two-qubit SWAP gate can be expressed in terms of Hadamard gates, controlled-$X$ gates, and a controlled-$Z$ gate. If there are efficient circuits for the $X$ and $Z$ gates acting on the code space, we may replace the gates in these circuits by gates controlled by an ancilla qubit, and use this identity to build a circuit that swaps the logical qubit in the code space with the ancilla qubit.}
    \label{fig:swap}
\end{figure}


Suppose we have a circuit acting on $EH$ that applies $X$ to the code. We can replace each gate in that circuit by a controlled gate, with $b$ as the control qubit. The resulting circuit applies C-$X$ with $b$ as the control qubit, and if the circuit for $X$ is efficient, so is the circuit for C-$X$. Likewise, we can turn  an efficient circuit acting on $EH$ that applies $Z$ to the code into an efficient circuit for C-$Z$. Using the circuit identity, we obtain an efficient circuit acting on $b$ and $EH$ that swaps the encoded information into $b$. 
Using this realization of the SWAP gate, the entangled state of $B$ with the encoded interior mode $\tilde B$ becomes an entangled state of $B$ and $b$.

Note that this construction of logical gates, and of the decoding circuit, can also be applied to the fully evaporated black hole. After the evaporation is complete, $H$ is gone, but any Hawking radiation qubit $B$ is entangled with a highly scrambled subspace of $E$, a large system composed of all the other radiation quanta. Because the evolution of the initial infalling matter to the final outgoing Hawking radiation is described by an efficient unitary transformation $U$, we have seen how $U_{a_1a_2}$ and $U_{a_1a_2}^\dagger$ can be used three times each to construct either $X$ or $Z$ acting on the encoded qubit. By replacing the gates in $U$ by controlled gates, we can construct the SWAP operator, and hence distill the encoded qubit which is entangled with $B$ into a small quantum memory efficiently.


\section{Conclusion \label{sec:conclusion}}
From a purely quantum information perspective, the results in this paper apply to a tripartite pure state $\Psi_{EHB}$, where $|E| \gg |H| \gg |B|$. Our central assumption, from which all else follows, is that the marginal state $\rho_{EB}$ is \textit{pseudorandom} --- \emph{i.e.}, cannot be distinguished from a maximally mixed state with a bias better than $2^{-\alpha |H|}$ by any quantum computation with complexity polynomial in $|H|$. From this assumption, it follows that if a unitary transformation with complexity $\mathrm{poly}(|H|)$ acts on $E$ and an observer $O$, then $B$ and $O$ \emph{decouple} in the resulting state $\Psi'_{OEHB}$, \emph{i.e.}, $\rho'_{OB}\approx \rho'_O\otimes \rho'_B$ up to an error $O\left(2^{-\alpha |H| +|O|+|B|}\right)$. Here $\alpha = O(1)$ is a positive constant. 

The state $\Psi_{EHB}$ also defines an encoding map $V_\Psi: \mathcal{H}_{\tilde B}\rightarrow  \mathcal{H}_{EH}$, whose image is a subspace of $EH$ that is nearly maximally entangled with $B$. From the decoupling condition we can infer that the encoded system $\tilde B$ is hard to decode if $\alpha |H| {-} |O|{-}|B| \gg 1$; the observer can distill $\tilde B$ to a small subsystem only by performing an operation with complexity superpolynomial in $|H|$. Furthermore, if the observer $O$ performs any quantum computation on $E$ with complexity $\mathrm{poly}(|H|)$, there is a recovery operator $\mathcal{R}$ acting on $EH$ that corrects this ``error'' with fidelity $F=1-\epsilon$ where $\epsilon = O\left(2^{-\alpha|H|/2 +|O|/2 +|B|/2}\right)$. Here the size $|O|$ of the observer $O$ may be interpreted as the number of qubits in $O$'s quantum memory, or equivalently as the Kraus rank of the quantum channel applied to $E$ by $O$.

The existence of such a recovery operator $\mathcal{R}$ has a further implication. We can construct a complete set of \emph{ghost logical operators} for $\tilde B$ acting on $EH$; if $O$ applies a quantum channel to $E$ with complexity $\mathrm{poly}(|H|)$, then these ghost operators commute with all the Kraus operators of the channel, up to an error $O\left(2^{-\alpha|H|/4 +|O|/4+ 9|B|/4}\right)$. Thus the ghost operators fail to detect the action of any observer who performs an operation on $E$ with complexity $\mathrm{poly}(|H|)$.

For quantum informationists, these results may be viewed as a contribution to the theory of operator algebra quantum error-correcction in the approximate setting. 
What can be said about their potential physical consequences? 

The existence of pseudorandom quantum states that can be prepared by quantum circuits with depth $O(\mathrm{polylog} |H|)$ follows from standard assumptions used in post-quantum cryptography \cite{Ji2018}. Because black holes are efficient scramblers of quantum information, it is plausible that a pseudorandom state can be efficiently prepared by an evaporating black hole, where the black hole microstates of $H$ provide the concealed ``key'' of the state.  A similar remark may apply to other strongly chaotic systems as well. In the setting of black holes, our conclusion about the hardness of decoding the Hawking radiation of an old black hole builds on the work of Harlow and Hayden \cite{Harlow2013} by highlighting the role of pseudorandomness, and by clarifying that the the condition $|H|\gg 1$ already ensures that decoding is hard - even if $|H|$ is much smaller than $|E|$. 

We require in addition that $|H|$ is sufficiently large compared to the size $|O|$ of the observer's quantum memory, though we may allow the observer to wield a large probe system $P$ which interacts with $E$,  where $|P|\gg |H|, |O|$. In that case, the system $\tilde B$ becomes encoded in $PEH$ rather than $EH$. However, the conclusion that $\tilde B$ cannot be efficiently distilled to a subsystem of size $|O|$  still applies for $|O|\ll \alpha |H|$, if $B$ has constant size. Therefore, no agent with reasonable computational power can decode $\tilde B$ and carry it into the black hole without incurring a substantial backreaction on the black hole geometry. 


To evade the black hole firewall problem, it has been proposed that (part of) the interior of an old black hole past its Page time is actually encoded in the radiation system $E$ emitted long ago. This encoding is profoundly nonlocal and therefore potentially problematic --- why can't an agent far outside the black hole who acts on $E$ send instantaneous messages to observers who are inside, or even create a firewall at the event horizon? Our view is that computational complexity should be invoked to reconcile the nonlocal encoding of the interior with the semiclassical causal structure of the black hole geometry.

The finding that ghost logical operators can be constructed when the Hawking radiation is pseudorandom fits neatly with this viewpoint. We propose that the observables accessible to observers inside the black hole are described by these ghost logical operators, though admittedly we have no compelling general basis for this claim other than to address the firewall problem. If we accept the claim, it follows that an agent outside the black hole can create detectable excitations behind the horizon only by performing operations of superpolynomial complexity. This conclusion, though based on different arguments, meshes with the proposal by Bouland \emph{et al.}~\cite{bouland2019,susskind2020horizons}, that the dictionary relating the black hole interior to its exterior in the context of AdS/CFT duality must be computationally complex. 

In our discussion, the encoding map relating the interior system $\tilde B$ to the early radiation $E$ and remaining black hole $H$ depends on the microstate of the initial collapsing body from which the black hole formed. It can also depend on how the observer interacts with the radiation \cite{Yoshida2019a}. Specifically, an observer who controls a large probe system $P$ that comes into contact with $E$ is empowered to alter the encoding substantially. But modifying the code does not help the observer to decode the radiation or to send a message to the interior --- achieving either task by acting on $E$ requires an operation with complexity superpolynomial in $|H|$.

Once an observer falls through the event horizon, the interior of the black hole should become accessible. From our point of view, this interior observer can interact not just with $E$ but also with $H$, which makes the task of manipulating the interior far easier. Indeed, 
for a code space of constant dimension, arbitrary unitary transformations on the code space can be realized by quantum circuits acting on $EH$ with complexity $\mathrm{poly}(|EH|)$. 

It is a familiar notion that, even in a theory of quantum gravity, local effective field theory on a curved background can provide an excellent approximation when the spacetime curvature is sufficiently small and the energy is sufficiently low. The story of ghost logical operators indicates that further constraints may need to be satisfied for physics to be approximately local: operations must have sufficiently low complexity and Kraus rank. Operations with high complexity and/or high rank can tear spacetime apart.

Our description of the robust encoded interior of an old black hole highlights the effectiveness of quantum error-correction against a nonstandard noise model. In the setting of fault-tolerant quantum computing, we normally seek an encoding that can protect against weakly correlated errors with a relatively low error rate. Here, though, the ``noise'' inflicted by our observer $O$ on the early radiation system $E$ is strong and chosen adversarially. As long as this noise process has computational complexity $\mathrm{poly}(|H|)$ and sufficiently small Kraus rank, the encoded system $\tilde B$ can be restored with high fidelity, and the ghost logical operators are barely affected at all. What makes this protection possible is that, although $E$ is treated very harshly, the ``key space'' $H$ is assumed to be noiseless. Perhaps related ideas can be exploited to protect quantum information in other physically relevant settings.

\begin{acknowledgments}
We thank Adam Bouland, Raphael Bousso, Anne Broadbent, Juan Maldacena, and Geoff Penington for valuable discussions. IK's work was supported by the Simons Foundation It from Qubit Collaboration and by the Australian Research Council via the Centre of Excellence in Engineered Quantum Systems (EQUS) project number CE170100009. Part of this work was done during IK's visit to the Galileo Galilei Institute during the ``Entanglement in Quantum Systems" workshop. ET and JP acknowledge funding provided by the Institute for Quantum Information and Matter, an NSF Physics Frontiers Center (NSF Grant {PHY}-{1733907}), the Simons Foundation It from Qubit Collaboration, the DOE QuantISED program ({DE}-{SC0018407}), and the Air Force Office of Scientific Research ({FA9550}-{19}-{1}-{0360}). ET acknowledges the support of the Natural Sciences and Engineering Research Council of Canada (NSERC).
\end{acknowledgments}

\bibliography{bib}

\begin{thebibliography}{49}
\expandafter\ifx\csname natexlab\endcsname\relax\def\natexlab#1{#1}\fi
\expandafter\ifx\csname bibnamefont\endcsname\relax
  \def\bibnamefont#1{#1}\fi
\expandafter\ifx\csname bibfnamefont\endcsname\relax
  \def\bibfnamefont#1{#1}\fi
\expandafter\ifx\csname citenamefont\endcsname\relax
  \def\citenamefont#1{#1}\fi
\expandafter\ifx\csname url\endcsname\relax
  \def\url#1{\texttt{#1}}\fi
\expandafter\ifx\csname urlprefix\endcsname\relax\def\urlprefix{URL }\fi
\providecommand{\bibinfo}[2]{#2}
\providecommand{\eprint}[2][]{\url{#2}}

\bibitem[{\citenamefont{Hawking}(1975)}]{Hawking1975}
\bibinfo{author}{\bibfnamefont{S.~W.} \bibnamefont{Hawking}},
  \bibinfo{journal}{Comm. Math. Phys.} \textbf{\bibinfo{volume}{43}},
  \bibinfo{pages}{199} (\bibinfo{year}{1975}).

\bibitem[{\citenamefont{Maldacena}(1998)}]{Maldacena1997}
\bibinfo{author}{\bibfnamefont{J.~M.} \bibnamefont{Maldacena}},
  \bibinfo{journal}{Adv. Theor. Math. Phys.} \textbf{\bibinfo{volume}{2}},
  \bibinfo{pages}{231} (\bibinfo{year}{1998}),
  \eprint{http://arxiv.org/abs/hep-th/9711200v3}.

\bibitem[{\citenamefont{Almheiri
  et~al.}(2013{\natexlab{a}})\citenamefont{Almheiri, Marolf, Polchinski, and
  Sully}}]{Almheiri2013}
\bibinfo{author}{\bibfnamefont{A.}~\bibnamefont{Almheiri}},
  \bibinfo{author}{\bibfnamefont{D.}~\bibnamefont{Marolf}},
  \bibinfo{author}{\bibfnamefont{J.}~\bibnamefont{Polchinski}},
  \bibnamefont{and} \bibinfo{author}{\bibfnamefont{J.}~\bibnamefont{Sully}},
  \bibinfo{journal}{JHEP} \textbf{\bibinfo{volume}{2013}}, \bibinfo{pages}{62}
  (\bibinfo{year}{2013}{\natexlab{a}}).

\bibitem[{\citenamefont{Maldacena}(2003)}]{Maldacena2003}
\bibinfo{author}{\bibfnamefont{J.}~\bibnamefont{Maldacena}},
  \bibinfo{journal}{JHEP} \textbf{\bibinfo{volume}{2003}}, \bibinfo{pages}{021}
  (\bibinfo{year}{2003}).

\bibitem[{\citenamefont{Maldacena and Susskind}(2013)}]{ER=EPR}
\bibinfo{author}{\bibfnamefont{J.}~\bibnamefont{Maldacena}} \bibnamefont{and}
  \bibinfo{author}{\bibfnamefont{L.}~\bibnamefont{Susskind}},
  \bibinfo{journal}{Fortschritte der Physik} \textbf{\bibinfo{volume}{61}},
  \bibinfo{pages}{781} (\bibinfo{year}{2013}).

\bibitem[{\citenamefont{Susskind}(2014)}]{Susskind2014}
\bibinfo{author}{\bibfnamefont{L.}~\bibnamefont{Susskind}}
  (\bibinfo{year}{2014}), \eprint{arXiv:1402.5674}.

\bibitem[{\citenamefont{Papadodimas and Raju}(2013)}]{Papadodimas2013}
\bibinfo{author}{\bibfnamefont{K.}~\bibnamefont{Papadodimas}} \bibnamefont{and}
  \bibinfo{author}{\bibfnamefont{S.}~\bibnamefont{Raju}},
  \bibinfo{journal}{JHEP} \textbf{\bibinfo{volume}{2013}}, \bibinfo{pages}{212}
  (\bibinfo{year}{2013}).

\bibitem[{\citenamefont{Harlow}(2014)}]{Harlow2014}
\bibinfo{author}{\bibfnamefont{D.}~\bibnamefont{Harlow}},
  \bibinfo{journal}{JHEP} \textbf{\bibinfo{volume}{2014}}, \bibinfo{pages}{55}
  (\bibinfo{year}{2014}).

\bibitem[{\citenamefont{Bousso}(2014)}]{Bousso2014}
\bibinfo{author}{\bibfnamefont{R.}~\bibnamefont{Bousso}},
  \bibinfo{journal}{Phys. Rev. Lett.} \textbf{\bibinfo{volume}{112}},
  \bibinfo{pages}{041102} (\bibinfo{year}{2014}).

\bibitem[{\citenamefont{Papadodimas and Raju}(2016)}]{Papadodimas2016}
\bibinfo{author}{\bibfnamefont{K.}~\bibnamefont{Papadodimas}} \bibnamefont{and}
  \bibinfo{author}{\bibfnamefont{S.}~\bibnamefont{Raju}},
  \bibinfo{journal}{Phys. Rev. D} \textbf{\bibinfo{volume}{93}},
  \bibinfo{pages}{084049} (\bibinfo{year}{2016}).

\bibitem[{\citenamefont{Almheiri
  et~al.}(2013{\natexlab{b}})\citenamefont{Almheiri, Marolf, Polchinski,
  Stanford, and Sully}}]{Almheiri2013a}
\bibinfo{author}{\bibfnamefont{A.}~\bibnamefont{Almheiri}},
  \bibinfo{author}{\bibfnamefont{D.}~\bibnamefont{Marolf}},
  \bibinfo{author}{\bibfnamefont{J.}~\bibnamefont{Polchinski}},
  \bibinfo{author}{\bibfnamefont{D.}~\bibnamefont{Stanford}}, \bibnamefont{and}
  \bibinfo{author}{\bibfnamefont{J.}~\bibnamefont{Sully}},
  \bibinfo{journal}{JHEP} \textbf{\bibinfo{volume}{2013}}, \bibinfo{pages}{18}
  (\bibinfo{year}{2013}{\natexlab{b}}).

\bibitem[{\citenamefont{Ji et~al.}(2018)\citenamefont{Ji, Liu, and
  Song}}]{Ji2018}
\bibinfo{author}{\bibfnamefont{Z.}~\bibnamefont{Ji}},
  \bibinfo{author}{\bibfnamefont{Y.-K.} \bibnamefont{Liu}}, \bibnamefont{and}
  \bibinfo{author}{\bibfnamefont{F.}~\bibnamefont{Song}}, in
  \emph{\bibinfo{booktitle}{Advances in Cryptology -- CRYPTO 2018}}
  (\bibinfo{publisher}{Springer International Publishing},
  \bibinfo{address}{Cham}, \bibinfo{year}{2018}), pp.
  \bibinfo{pages}{126--152}, ISBN \bibinfo{isbn}{978-3-319-96878-0}.

\bibitem[{\citenamefont{Sekino and Susskind}(2008)}]{Sekino2008}
\bibinfo{author}{\bibfnamefont{Y.}~\bibnamefont{Sekino}} \bibnamefont{and}
  \bibinfo{author}{\bibfnamefont{L.}~\bibnamefont{Susskind}},
  \bibinfo{journal}{JHEP} \textbf{\bibinfo{volume}{2008}}, \bibinfo{pages}{065}
  (\bibinfo{year}{2008}).

\bibitem[{\citenamefont{Harlow and Hayden}(2013)}]{Harlow2013}
\bibinfo{author}{\bibfnamefont{D.}~\bibnamefont{Harlow}} \bibnamefont{and}
  \bibinfo{author}{\bibfnamefont{P.}~\bibnamefont{Hayden}},
  \bibinfo{journal}{JHEP} \textbf{\bibinfo{volume}{2013}}, \bibinfo{pages}{85}
  (\bibinfo{year}{2013}).

\bibitem[{\citenamefont{Aaronson}(2016)}]{Aaronson2016}
\bibinfo{author}{\bibfnamefont{S.}~\bibnamefont{Aaronson}}
  (\bibinfo{year}{2016}), \eprint{arXiv:1607.05256}.

\bibitem[{\citenamefont{Marolf and Polchinski}(2016)}]{Marolf2016}
\bibinfo{author}{\bibfnamefont{D.}~\bibnamefont{Marolf}} \bibnamefont{and}
  \bibinfo{author}{\bibfnamefont{J.}~\bibnamefont{Polchinski}},
  \bibinfo{journal}{JHEP} \textbf{\bibinfo{volume}{2016}}, \bibinfo{pages}{8}
  (\bibinfo{year}{2016}).

\bibitem[{\citenamefont{Hayden and Penington}(2019)}]{Hayden2019}
\bibinfo{author}{\bibfnamefont{P.}~\bibnamefont{Hayden}} \bibnamefont{and}
  \bibinfo{author}{\bibfnamefont{G.}~\bibnamefont{Penington}},
  \bibinfo{journal}{JHEP} \textbf{\bibinfo{volume}{2019}}, \bibinfo{pages}{7}
  (\bibinfo{year}{2019}), ISSN \bibinfo{issn}{1029-8479},
  \urlprefix\url{https://doi.org/10.1007/JHEP12(2019)007}.

\bibitem[{\citenamefont{Penington}(2019)}]{Penington2019}
\bibinfo{author}{\bibfnamefont{G.}~\bibnamefont{Penington}}
  (\bibinfo{year}{2019}), \eprint{arXiv:1905.08255}.

\bibitem[{\citenamefont{Almheiri
  et~al.}(2019{\natexlab{a}})\citenamefont{Almheiri, Engelhardt, Marolf, and
  Maxfield}}]{Almheiri2019b}
\bibinfo{author}{\bibfnamefont{A.}~\bibnamefont{Almheiri}},
  \bibinfo{author}{\bibfnamefont{N.}~\bibnamefont{Engelhardt}},
  \bibinfo{author}{\bibfnamefont{D.}~\bibnamefont{Marolf}}, \bibnamefont{and}
  \bibinfo{author}{\bibfnamefont{H.}~\bibnamefont{Maxfield}}
  (\bibinfo{year}{2019}{\natexlab{a}}), \eprint{arXiv:1905.08762}.

\bibitem[{\citenamefont{Almheiri
  et~al.}(2019{\natexlab{b}})\citenamefont{Almheiri, Mahajan, Maldacena, and
  Zhao}}]{Almheiri2019a}
\bibinfo{author}{\bibfnamefont{A.}~\bibnamefont{Almheiri}},
  \bibinfo{author}{\bibfnamefont{R.}~\bibnamefont{Mahajan}},
  \bibinfo{author}{\bibfnamefont{J.}~\bibnamefont{Maldacena}},
  \bibnamefont{and} \bibinfo{author}{\bibfnamefont{Y.}~\bibnamefont{Zhao}}
  (\bibinfo{year}{2019}{\natexlab{b}}), \eprint{arXiv:1908.10996}.

\bibitem[{\citenamefont{Penington et~al.}(2019)\citenamefont{Penington,
  Shenker, Stanford, and Yang}}]{Penington2019a}
\bibinfo{author}{\bibfnamefont{G.}~\bibnamefont{Penington}},
  \bibinfo{author}{\bibfnamefont{S.~H.} \bibnamefont{Shenker}},
  \bibinfo{author}{\bibfnamefont{D.}~\bibnamefont{Stanford}}, \bibnamefont{and}
  \bibinfo{author}{\bibfnamefont{Z.}~\bibnamefont{Yang}}
  (\bibinfo{year}{2019}), \eprint{arXiv:1911.11977}.

\bibitem[{\citenamefont{Almheiri
  et~al.}(2019{\natexlab{c}})\citenamefont{Almheiri, Hartman, Maldacena,
  Shaghoulian, and Tajdini}}]{Almheiri2019}
\bibinfo{author}{\bibfnamefont{A.}~\bibnamefont{Almheiri}},
  \bibinfo{author}{\bibfnamefont{T.}~\bibnamefont{Hartman}},
  \bibinfo{author}{\bibfnamefont{J.}~\bibnamefont{Maldacena}},
  \bibinfo{author}{\bibfnamefont{E.}~\bibnamefont{Shaghoulian}},
  \bibnamefont{and} \bibinfo{author}{\bibfnamefont{A.}~\bibnamefont{Tajdini}}
  (\bibinfo{year}{2019}{\natexlab{c}}), \eprint{arXiv:1911.12333}.

\bibitem[{\citenamefont{B\'eny and Oreshkov}(2010)}]{BenyOreshkov10}
\bibinfo{author}{\bibfnamefont{C.}~\bibnamefont{B\'eny}} \bibnamefont{and}
  \bibinfo{author}{\bibfnamefont{O.}~\bibnamefont{Oreshkov}},
  \bibinfo{journal}{Phys. Rev. Lett.} \textbf{\bibinfo{volume}{104}},
  \bibinfo{pages}{120501} (\bibinfo{year}{2010}).

\bibitem[{\citenamefont{Flammia et~al.}(2017)\citenamefont{Flammia, Haah,
  Kastoryano, and Kim}}]{Flammia2017limitsstorageof}
\bibinfo{author}{\bibfnamefont{S.~T.} \bibnamefont{Flammia}},
  \bibinfo{author}{\bibfnamefont{J.}~\bibnamefont{Haah}},
  \bibinfo{author}{\bibfnamefont{M.~J.} \bibnamefont{Kastoryano}},
  \bibnamefont{and} \bibinfo{author}{\bibfnamefont{I.~H.} \bibnamefont{Kim}},
  \bibinfo{journal}{{Quantum}} \textbf{\bibinfo{volume}{1}}, \bibinfo{pages}{4}
  (\bibinfo{year}{2017}).

\bibitem[{\citenamefont{B\'eny et~al.}(2007{\natexlab{a}})\citenamefont{B\'eny,
  Kempf, and Kribs}}]{oaqec1}
\bibinfo{author}{\bibfnamefont{C.}~\bibnamefont{B\'eny}},
  \bibinfo{author}{\bibfnamefont{A.}~\bibnamefont{Kempf}}, \bibnamefont{and}
  \bibinfo{author}{\bibfnamefont{D.~W.} \bibnamefont{Kribs}},
  \bibinfo{journal}{Phys. Rev. Lett.} \textbf{\bibinfo{volume}{98}},
  \bibinfo{pages}{100502} (\bibinfo{year}{2007}{\natexlab{a}}).

\bibitem[{\citenamefont{B\'eny et~al.}(2007{\natexlab{b}})\citenamefont{B\'eny,
  Kempf, and Kribs}}]{oaqec2}
\bibinfo{author}{\bibfnamefont{C.}~\bibnamefont{B\'eny}},
  \bibinfo{author}{\bibfnamefont{A.}~\bibnamefont{Kempf}}, \bibnamefont{and}
  \bibinfo{author}{\bibfnamefont{D.~W.} \bibnamefont{Kribs}},
  \bibinfo{journal}{Phys. Rev. A} \textbf{\bibinfo{volume}{76}},
  \bibinfo{pages}{042303} (\bibinfo{year}{2007}{\natexlab{b}}).

\bibitem[{\citenamefont{Page}(1993)}]{Page1993}
\bibinfo{author}{\bibfnamefont{D.~N.} \bibnamefont{Page}},
  \bibinfo{journal}{Phys. Rev. Lett.} \textbf{\bibinfo{volume}{71}},
  \bibinfo{pages}{1291} (\bibinfo{year}{1993}).

\bibitem[{\citenamefont{{Yao}}(1982)}]{Yao1982}
\bibinfo{author}{\bibfnamefont{A.~C.} \bibnamefont{{Yao}}}, in
  \emph{\bibinfo{booktitle}{23rd Annual Symposium on Foundations of Computer
  Science (sfcs 1982)}} (\bibinfo{year}{1982}), pp. \bibinfo{pages}{80--91}.

\bibitem[{\citenamefont{Goldreich and Krawczyk}(1990)}]{Goldreich1989}
\bibinfo{author}{\bibfnamefont{O.}~\bibnamefont{Goldreich}} \bibnamefont{and}
  \bibinfo{author}{\bibfnamefont{H.}~\bibnamefont{Krawczyk}}, in
  \emph{\bibinfo{booktitle}{Proceedings of the 9th Annual International
  Cryptology Conference on Advances in Cryptology}}
  (\bibinfo{publisher}{Springer-Verlag}, \bibinfo{address}{Berlin, Heidelberg},
  \bibinfo{year}{1990}), CRYPTO '89, pp. \bibinfo{pages}{113--127}, ISBN
  \bibinfo{isbn}{3-540-97317-6}.

\bibitem[{\citenamefont{Arora and Barak}(2009)}]{Arora2009}
\bibinfo{author}{\bibfnamefont{S.}~\bibnamefont{Arora}} \bibnamefont{and}
  \bibinfo{author}{\bibfnamefont{B.}~\bibnamefont{Barak}},
  \emph{\bibinfo{title}{Computational Complexity: A Modern Approach}}
  (\bibinfo{publisher}{Cambridge University Press}, \bibinfo{address}{New York,
  NY, USA}, \bibinfo{year}{2009}), \bibinfo{edition}{1st} ed., ISBN
  \bibinfo{isbn}{0521424267, 9780521424264}.

\bibitem[{\citenamefont{Hayden et~al.}(2006)\citenamefont{Hayden, Leung, and
  Winter}}]{Hayden2004a}
\bibinfo{author}{\bibfnamefont{P.}~\bibnamefont{Hayden}},
  \bibinfo{author}{\bibfnamefont{D.~W.} \bibnamefont{Leung}}, \bibnamefont{and}
  \bibinfo{author}{\bibfnamefont{A.}~\bibnamefont{Winter}},
  \bibinfo{journal}{Comm. Math. Phys.} \textbf{\bibinfo{volume}{265}},
  \bibinfo{pages}{95} (\bibinfo{year}{2006}).

\bibitem[{\citenamefont{Brakerski and Shmueli}(2019)}]{brakerski2019}
\bibinfo{author}{\bibfnamefont{Z.}~\bibnamefont{Brakerski}} \bibnamefont{and}
  \bibinfo{author}{\bibfnamefont{O.}~\bibnamefont{Shmueli}}, in
  \emph{\bibinfo{booktitle}{TCC}} (\bibinfo{year}{2019}).

\bibitem[{\citenamefont{Gheorghiu and Hoban}(2020)}]{gheorghiu2020estimating}
\bibinfo{author}{\bibfnamefont{A.}~\bibnamefont{Gheorghiu}} \bibnamefont{and}
  \bibinfo{author}{\bibfnamefont{M.~J.} \bibnamefont{Hoban}},
  \bibinfo{journal}{arXiv preprint arXiv:2002.12814}  (\bibinfo{year}{2020}).

\bibitem[{\citenamefont{Cleve and Watrous}(2000)}]{cleve2000fast}
\bibinfo{author}{\bibfnamefont{R.}~\bibnamefont{Cleve}} \bibnamefont{and}
  \bibinfo{author}{\bibfnamefont{J.}~\bibnamefont{Watrous}}, in
  \emph{\bibinfo{booktitle}{Proceedings 41st Annual Symposium on Foundations of
  Computer Science}} (\bibinfo{organization}{IEEE}, \bibinfo{year}{2000}), pp.
  \bibinfo{pages}{526--536}.

\bibitem[{\citenamefont{Almheiri et~al.}(2015)\citenamefont{Almheiri, Dong, and
  Harlow}}]{Almheiri2015}
\bibinfo{author}{\bibfnamefont{A.}~\bibnamefont{Almheiri}},
  \bibinfo{author}{\bibfnamefont{X.}~\bibnamefont{Dong}}, \bibnamefont{and}
  \bibinfo{author}{\bibfnamefont{D.}~\bibnamefont{Harlow}},
  \bibinfo{journal}{JHEP} \textbf{\bibinfo{volume}{2015}}, \bibinfo{pages}{163}
  (\bibinfo{year}{2015}).

\bibitem[{\citenamefont{Pastawski et~al.}(2015)\citenamefont{Pastawski,
  Yoshida, Harlow, and Preskill}}]{Pastawski2015}
\bibinfo{author}{\bibfnamefont{F.}~\bibnamefont{Pastawski}},
  \bibinfo{author}{\bibfnamefont{B.}~\bibnamefont{Yoshida}},
  \bibinfo{author}{\bibfnamefont{D.}~\bibnamefont{Harlow}}, \bibnamefont{and}
  \bibinfo{author}{\bibfnamefont{J.}~\bibnamefont{Preskill}},
  \bibinfo{journal}{JHEP} \textbf{\bibinfo{volume}{2015}}, \bibinfo{pages}{149}
  (\bibinfo{year}{2015}).

\bibitem[{\citenamefont{Verlinde and Verlinde}(2013)}]{Verlinde2013}
\bibinfo{author}{\bibfnamefont{E.}~\bibnamefont{Verlinde}} \bibnamefont{and}
  \bibinfo{author}{\bibfnamefont{H.}~\bibnamefont{Verlinde}},
  \bibinfo{journal}{JHEP} \textbf{\bibinfo{volume}{2013}}, \bibinfo{pages}{107}
  (\bibinfo{year}{2013}).

\bibitem[{\citenamefont{Yoshida}(2019{\natexlab{a}})}]{Yoshida2019}
\bibinfo{author}{\bibfnamefont{B.}~\bibnamefont{Yoshida}},
  \bibinfo{journal}{JHEP} \textbf{\bibinfo{volume}{2019}}, \bibinfo{pages}{132}
  (\bibinfo{year}{2019}{\natexlab{a}}).

\bibitem[{\citenamefont{Preskill}(2013)}]{preskill2013sufficient}
\bibinfo{author}{\bibfnamefont{J.}~\bibnamefont{Preskill}},
  \bibinfo{journal}{Quantum Information and Computation}
  \textbf{\bibinfo{volume}{13}}, \bibinfo{pages}{0181} (\bibinfo{year}{2013}).

\bibitem[{\citenamefont{Nielsen and Chuang}(2011)}]{MikeIke}
\bibinfo{author}{\bibfnamefont{M.~A.} \bibnamefont{Nielsen}} \bibnamefont{and}
  \bibinfo{author}{\bibfnamefont{I.~L.} \bibnamefont{Chuang}},
  \emph{\bibinfo{title}{Quantum Computation and Quantum Information: 10th
  Anniversary Edition}} (\bibinfo{publisher}{Cambridge University Press},
  \bibinfo{address}{New York, NY, USA}, \bibinfo{year}{2011}),
  \bibinfo{edition}{10th} ed., ISBN \bibinfo{isbn}{1107002176, 9781107002173}.

\bibitem[{\citenamefont{Oppenheim and Unruh}(2014)}]{Oppenheim2014}
\bibinfo{author}{\bibfnamefont{J.}~\bibnamefont{Oppenheim}} \bibnamefont{and}
  \bibinfo{author}{\bibfnamefont{B.}~\bibnamefont{Unruh}},
  \bibinfo{journal}{JHEP} \textbf{\bibinfo{volume}{2014}}, \bibinfo{pages}{120}
  (\bibinfo{year}{2014}).

\bibitem[{\citenamefont{Knill and Laflamme}(1997)}]{knill1997theory}
\bibinfo{author}{\bibfnamefont{E.}~\bibnamefont{Knill}} \bibnamefont{and}
  \bibinfo{author}{\bibfnamefont{R.}~\bibnamefont{Laflamme}},
  \bibinfo{journal}{Physical Review A} \textbf{\bibinfo{volume}{55}},
  \bibinfo{pages}{900} (\bibinfo{year}{1997}).

\bibitem[{\citenamefont{Bouland et~al.}(2019)\citenamefont{Bouland, Fefferman,
  and Vazirani}}]{bouland2019}
\bibinfo{author}{\bibfnamefont{A.}~\bibnamefont{Bouland}},
  \bibinfo{author}{\bibfnamefont{B.}~\bibnamefont{Fefferman}},
  \bibnamefont{and} \bibinfo{author}{\bibfnamefont{U.}~\bibnamefont{Vazirani}},
  \bibinfo{journal}{arXiv preprint arXiv:1910.14646}  (\bibinfo{year}{2019}).

\bibitem[{\citenamefont{Susskind}(2020)}]{susskind2020horizons}
\bibinfo{author}{\bibfnamefont{L.}~\bibnamefont{Susskind}},
  \bibinfo{journal}{arXiv preprint arXiv:2003.01807}  (\bibinfo{year}{2020}).

\bibitem[{\citenamefont{Hayden and Preskill}(2007)}]{Hayden2007}
\bibinfo{author}{\bibfnamefont{P.}~\bibnamefont{Hayden}} \bibnamefont{and}
  \bibinfo{author}{\bibfnamefont{J.}~\bibnamefont{Preskill}},
  \bibinfo{journal}{JHEP} \textbf{\bibinfo{volume}{2007}}, \bibinfo{pages}{120}
  (\bibinfo{year}{2007}).

\bibitem[{\citenamefont{Berry et~al.}(2014)\citenamefont{Berry, Childs, Cleve,
  Kothari, and Somma}}]{Berry2014}
\bibinfo{author}{\bibfnamefont{D.~W.} \bibnamefont{Berry}},
  \bibinfo{author}{\bibfnamefont{A.~M.} \bibnamefont{Childs}},
  \bibinfo{author}{\bibfnamefont{R.}~\bibnamefont{Cleve}},
  \bibinfo{author}{\bibfnamefont{R.}~\bibnamefont{Kothari}}, \bibnamefont{and}
  \bibinfo{author}{\bibfnamefont{R.~D.} \bibnamefont{Somma}}, in
  \emph{\bibinfo{booktitle}{Proceedings of the Forty-Sixth Annual ACM Symposium
  on Theory of Computing}} (\bibinfo{publisher}{Association for Computing
  Machinery}, \bibinfo{address}{New York, NY, USA}, \bibinfo{year}{2014}), STOC
  ’14, p. \bibinfo{pages}{283–292}.

\bibitem[{\citenamefont{Yoshida}(2019{\natexlab{b}})}]{Yoshida2019a}
\bibinfo{author}{\bibfnamefont{B.}~\bibnamefont{Yoshida}}
  (\bibinfo{year}{2019}{\natexlab{b}}), \eprint{arXiv:1910.11346}.

\bibitem[{\citenamefont{Kitaev et~al.}(2002)\citenamefont{Kitaev, Shen, Vyalyi,
  and Vyalyi}}]{kitaev2002classical}
\bibinfo{author}{\bibfnamefont{A.~Y.} \bibnamefont{Kitaev}},
  \bibinfo{author}{\bibfnamefont{A.}~\bibnamefont{Shen}},
  \bibinfo{author}{\bibfnamefont{M.~N.} \bibnamefont{Vyalyi}},
  \bibnamefont{and} \bibinfo{author}{\bibfnamefont{M.~N.}
  \bibnamefont{Vyalyi}}, \emph{\bibinfo{title}{Classical and quantum
  computation}}, \bibinfo{number}{47} (\bibinfo{publisher}{American
  Mathematical Soc.}, \bibinfo{year}{2002}).

\bibitem[{\citenamefont{Barenco et~al.}(1995)\citenamefont{Barenco, Bennett,
  Cleve, DiVincenzo, Margolus, Shor, Sleator, Smolin, and
  Weinfurter}}]{barenco1995elementary}
\bibinfo{author}{\bibfnamefont{A.}~\bibnamefont{Barenco}},
  \bibinfo{author}{\bibfnamefont{C.~H.} \bibnamefont{Bennett}},
  \bibinfo{author}{\bibfnamefont{R.}~\bibnamefont{Cleve}},
  \bibinfo{author}{\bibfnamefont{D.~P.} \bibnamefont{DiVincenzo}},
  \bibinfo{author}{\bibfnamefont{N.}~\bibnamefont{Margolus}},
  \bibinfo{author}{\bibfnamefont{P.}~\bibnamefont{Shor}},
  \bibinfo{author}{\bibfnamefont{T.}~\bibnamefont{Sleator}},
  \bibinfo{author}{\bibfnamefont{J.~A.} \bibnamefont{Smolin}},
  \bibnamefont{and}
  \bibinfo{author}{\bibfnamefont{H.}~\bibnamefont{Weinfurter}},
  \bibinfo{journal}{Physical review A} \textbf{\bibinfo{volume}{52}},
  \bibinfo{pages}{3457} (\bibinfo{year}{1995}).

\end{thebibliography}
\bibliographystyle{apsrev}

\appendix
\section{Approximate Embedding\label{sec:embedding}}

\begin{lemma}\label{lem:embedding} 
Let $|\Psi\rangle_{EBH}$ be a pseudo-random state (see Definition~\ref{def:pseudorandom}), where $B$ is a single qubit. Then the operator $V_{\Psi}$ defined by equation~\eqref{eq:embedding} is an approximate embedding, i.e., there exists an embedding $V$ such that
\begin{align}
    \|V - V_{\Psi}\| \le 2\cdot 2^{-\alpha|H|}.
\end{align}
\end{lemma}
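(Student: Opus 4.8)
The plan is to show that $V_\Psi$ fails to be an isometry only to the extent that the marginal state $\rho_B$ differs from the maximally mixed state, and then to control that difference using the pseudorandomness hypothesis of Definition~\ref{def:pseudorandom}.

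First I would compute $V_\Psi^\dagger V_\Psi$ directly. Writing $|\Psi\rangle_{EHB} = \sum_{k\in\{0,1\}}|\alpha_k\rangle_{EH}\otimes |k\rangle_B$ and contracting against the EPR pair $|\omega\rangle_{B\tilde B}$ in~\eqref{eq:embedding}, one finds $V_\Psi|i\rangle_{\tilde B} = \sqrt{2}\,|\alpha_i\rangle_{EH}$, so that
\begin{equation}
\langle j|V_\Psi^\dagger V_\Psi|i\rangle = 2\langle \alpha_j|\alpha_i\rangle = 2\,\langle i|\rho_B|j\rangle,
\end{equation}
that is, $V_\Psi^\dagger V_\Psi = 2\rho_B^{T}$ in the computational basis. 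The transpose is harmless, since it preserves the operator norm and fixes $\sigma_B=I/2$. Thus $V_\Psi$ is exactly an isometry precisely when $\rho_B=\sigma_B$, and its deviation from an isometry is governed entirely by $\|\rho_B-\sigma_B\|$.

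Next I would extract the nearest isometry via the polar decomposition. Once $\rho_B$ is shown to be close to $I/2$ it is positive-definite, so $M:=V_\Psi^\dagger V_\Psi=2\rho_B^{T}$ is invertible and $V:=V_\Psi M^{-1/2}$ is a genuine isometry, $V^\dagger V=I$. Because an isometry is norm-preserving under left multiplication,
\begin{equation}
\|V-V_\Psi\| = \|V(I-M^{1/2})\| = \|I-M^{1/2}\|.
\end{equation}
For a positive operator with eigenvalue $\lambda$ one has $|1-\sqrt\lambda| = |1-\lambda|/(1+\sqrt\lambda)\le |1-\lambda|$, giving the spectral bound $\|I-M^{1/2}\|\le\|I-M\| = \|I-2\rho_B^{T}\| = 2\|\sigma_B-\rho_B\|$. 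Finally I would bound $\|\sigma_B-\rho_B\|$ from pseudorandomness: any measurement on the single qubit $B$ alone is a negligible-complexity measurement of $EB$, so the maximal distinguishing probability difference $\max_{\mathcal M}|\Pr(\mathcal M(\rho_B){=}1)-\Pr(\mathcal M(\sigma_B){=}1)|$, which equals the trace distance $\tfrac12\|\rho_B-\sigma_B\|_1$, is at most $2^{-\alpha|H|}$ by~\eqref{eq:pr}. For a single qubit $\rho_B-\sigma_B$ is traceless Hermitian, so its operator norm is exactly half its trace norm, whence $\|\sigma_B-\rho_B\| = \tfrac12\|\rho_B-\sigma_B\|_1\le 2^{-\alpha|H|}$. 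Chaining the three steps gives $\|V-V_\Psi\|\le 2\cdot 2^{-\alpha|H|}$.

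The matrix-element computation is routine and the polar-decomposition argument is standard; the delicate points are (i) confirming that $M$ is invertible, so that the polar decomposition yields a true isometry rather than merely a partial isometry — this is why the closeness of $\rho_B$ to $\sigma_B$ is logically prior — and (ii) the single-qubit identity equating the operator norm with half the trace norm, which is exactly what produces the clean constant $2$ instead of the $\sqrt{3}$ one would obtain by naively summing squared Pauli coefficients. I expect the main conceptual care to lie in invoking pseudorandomness correctly: one must observe that discarding $E$ and measuring $B$ is a legitimate low-complexity measurement of $EB$, so that the trace-distance bound $\tfrac12\|\rho_B-\sigma_B\|_1\le 2^{-\alpha|H|}$ genuinely follows from~\eqref{eq:pr}.
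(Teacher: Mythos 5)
Your proof is correct, and it reaches the stated constant. At the skeleton level it matches the paper's own argument: both proofs reduce the lemma to showing $V_\Psi^\dagger V_\Psi \approx I_{\tilde B}$ and then pass to the nearest exact isometry using the same spectral inequality $|1-\sqrt{\lambda}| = |1-\lambda|/(1+\sqrt{\lambda}) \le |1-\lambda|$ (your polar isometry $V_\Psi M^{-1/2}$ and the paper's SVD-with-unit-singular-values are in fact the same operator when $M$ is invertible). The genuine difference is in how the key estimate on $\rho_B$ is obtained. The paper invokes its decoupling machinery from Section~\ref{sec:pr_decoupling} with $|O|=0$ — i.e., Pauli expectation values, a Frobenius-norm bound, and a rank bound — and asserts $\|\rho_B - \tfrac12 I_B\|_1 \le 2^{-\alpha|H|}$; traced strictly, that route actually incurs an extra constant (the Pauli-expectation step already hides a factor of $2$ relative to Definition~\ref{def:pseudorandom}). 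You instead compute $V_\Psi^\dagger V_\Psi = 2\rho_B^T$ exactly and bound $\tfrac12\|\rho_B-\sigma_B\|_1$ directly from the pseudorandomness definition via the Helstrom-optimal measurement on $B$, then convert to operator norm with the single-qubit identity $\|\rho_B-\sigma_B\| = \tfrac12\|\rho_B-\sigma_B\|_1$. This is self-contained, does not depend on the decoupling bound, and delivers the constant $2$ honestly. Two small caveats you should make explicit: (i) realizing the Helstrom measurement as a ``polynomial-complexity'' measurement either assumes arbitrary single-qubit gates at unit cost or requires a Solovay--Kitaev approximation, which would degrade the constant by a factor of $2$ (a level of slop the paper itself also tolerates); (ii) invertibility of $M$, needed for the polar decomposition to yield a true isometry rather than a partial one, holds only in the regime $2^{-\alpha|H|} < 1/2$ — you flag this correctly, and outside that regime the bound is essentially vacuous anyway, but a one-line extension of the partial isometry would make the argument unconditional, as the paper's SVD formulation is.
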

\begin{proof}
Let $\rho_{EBH} = |\Psi\rangle\langle\Psi|_{EBH}$. Applying the decoupling inequality~\eqref{eq:decoupling_main} without the presence of an observer (i.e., taking $|O|=0$), we see that $\rho_B$ is nearly maximally mixed, i.e.,
\begin{align}
    \|\rho_B - \frac{1}{2}I_B\|_1 \le 2^{-\alpha|H|}.
\end{align}
Equivalently, this implies that
\begin{align}
    \|V^\dagger_{\Psi}V_{\Psi} - I_{\tilde{B}}\|_1 \le 2\cdot 2^{-\alpha|H|}:=\epsilon.\label{eq:proj}
\end{align}
Now, let $U\Sigma_{\Psi} W^\dagger = V_{\Psi}$ be the singular value decomposition for $V_{\Psi}$. Let us denote the singular values of $V_{\Psi}$ as $\{\sigma_k\}$. Then~\eqref{eq:proj} implies that we have $|\sigma_k^2 - 1| \le \epsilon$. Since $|\sigma_k + 1| \ge 1$ (the singular values are nonnegative real numbers), we then also have
\begin{align}
|\sigma_k - 1| \le \epsilon\cdot |\sigma_k+1|^{-1} \le \epsilon.   
\end{align}
Now, let $\Sigma$ denote the matrix with the same shape as $\Sigma_{\Psi}$ whose diagonal values are all equal to $1$. Define $V = U\Sigma W^\dagger$, and note that $V$ is an isometric embedding since $V^\dagger V=I_{\tilde{B}}$. Finally, we have
\begin{align}
    \|V-V_{\Psi}\| &= \|U(\Sigma - \Sigma_{\Psi})W^\dagger\|\\
    &\le \|U\|\cdot\|W^\dagger\|\cdot \|\Sigma - \Sigma_{\Psi}\|\\
    &\le \epsilon,
\end{align}
where the last inequality follows since all singular values of $\Sigma - \Sigma_{\Psi}$ are bounded above by $\epsilon$ by construction.
\end{proof}

\section{Complete Set of Ghost Operators Implies Correctability}\label{sec:ghost_converse}

In this Appendix, we prove a converse to Theorem~\ref{thm:ghostapx}, showing that if a quantum error-correcting code $\mathcal{C}$ has a complete set of $\delta$-approximate ghost logical operators for a channel 
\begin{align}
\mathcal{E}(\rho) = \sum_{a=1}^r E_a \rho E_a^{\dagger}
\end{align}
with a set of $r$ Kraus operators $K=\{E_a\}$, then the channel $\mathcal{E}_\mathcal{I}$ with Kraus operators $K\cup \{I\}$ is $\epsilon$-correctable for $\mathcal{C}$, where $\epsilon =O(|K| \sqrt{(\dim \mathcal{C})\,\delta})$.

For this purpose, we will use the approximate version of the Knill-Laflamme error-correction conditions studied by B\'eny and Oreshkov \cite{BenyOreshkov10}; these may be expressed in the form
\begin{equation}
    PE_a^\dagger E_b P = \lambda_{ab} P + B_{ab},
\end{equation}
where $P$ is the projector on the code space $\mathcal{C}$, $\lambda_{ab}$ is a density matrix (a non-negative Hermitian operator with trace 1), and for each $a$ and $b$, $B_{ab}$ is an operator mapping $\mathcal{C}$ to $\mathcal{C}$. For $B_{ab}= 0$, these are the usual Knill-Laflamme conditions for exact correctability \cite{knill1997theory}. If $B_{ab}$ is small, the Knill-Laflamme conditions are approximately satisfied, and a recovery operator $\mathcal{R}$ exists that corrects the channel $\mathcal{E}$ acting on the code space, up to a small error $\epsilon$ as in equation \eqref{eq:bures-epsilon}.

A relation between $B_{ab}$ and $\epsilon$ was derived in \cite{BenyOreshkov10}. We define maps $\Lambda$ and $\mathcal{B}$ by
\begin{align}\label{eq:Lambda-B-define}
    \quad \Lambda(\rho) = \sum_{a,b=1}^r\lambda_{ab}\mathrm{Tr}(\rho)|a\rangle\langle b|,\quad\text{and}\quad \mathcal{B}(\rho)=\sum_{a,b=1}^r \mathrm{Tr}(\rho B_{ab})|a\rangle\langle b|,
\end{align}
    respectively. Consider the Bures distance $\mathfrak{B}(\Lambda + \mathcal{B},\Lambda)$ defined as in equation \eqref{eq:bures-define}, with the maximum taken over all code states $\rho$. 
Then the noise channel $\mathcal{E}$ is $\epsilon$-correctable for the code $\mathcal{C}$ if and only if  $\mathfrak{B}(\Lambda + \mathcal{B},\Lambda)\le \epsilon$ \cite{BenyOreshkov10}.

We may estimate this Bures distance as in equation \eqref{eq:bures_inequality}, finding 
\begin{align}\label{eq:bound-B-B}
  2 \mathfrak{B}^2(\Lambda +\mathcal{B}, \Lambda ) \le \max_\rho\left\|(\mathcal{B} \otimes \mathcal{I})(|\psi\rangle\langle\psi|)\right\|_1,
\end{align}
where $|\psi\rangle$ is a purification of the logical density operator $\rho$. Using equation \eqref{eq:Lambda-B-define}, we obtain
\begin{equation}\label{eq:B-bound-r}
\begin{aligned}
\left\|(\mathcal{B} \otimes \mathcal{I})(|\psi\rangle\langle\psi|)\right\|_1 &=
   \left\|\sum_{a,b=1}^r \langle \psi|B_{ab}|\psi\rangle |a\rangle\langle b|\right\|_1\\
    &\le r^2\max_{a,b}\left|\langle \psi|B_{ab}|\psi\rangle \right|\\
    &\le r^2\max_{a,b}\|B_{ab}\|\\
    &\le r^2(\dim\mathcal{C})\,\max_{a,b}\|B_{ab}\|_{\mathrm{max}}.
    \end{aligned}
\end{equation}
Here the entry-wise max norm of $\| A\|_{\mathrm{max}}$ of a matrix $A$ is defined as the largest (in absolute value) entry of the matrix in the computational basis; \emph{i.e.},
\begin{align}
    \|A\|_{\mathrm{max}} = \max_{i,j}\left|\langle i|A|j\rangle\right|,
\end{align}
and we used an inequality relating the operator and max norms,
\begin{align}
\|B_{ab}\| \le (\dim\mathcal{C})\|B_{ab}\|_{\mathrm{max}}.
\end{align}

We can now prove: 
\begin{lemma}\label{lem:knill-laflamme}
The channel 
\begin{align}
\mathcal{E}(\rho) = \sum_{a=1}^r E_a \rho E_a^{\dagger}
\end{align}
is $\epsilon$-correctable with respect to the code $\mathcal{C}$,
with
\begin{equation}
    \epsilon = r \sqrt{\frac{1}{2} (\dim\mathcal{C})\,\delta},
\end{equation}
if there is a density operator $\lambda_{ab}$ and an  orthonormal basis $\{|i\rangle\}$ for the code space such that for all $i$ and $j$
\begin{align}\label{eq:entrywise-bound}
    \left|\langle i | E_a^\dagger E_b |j\rangle - \delta_{ij}\lambda_{ab}\right| \le \delta.
\end{align}
\end{lemma}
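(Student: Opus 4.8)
The plan is to invoke the approximate Knill--Laflamme framework of B\'eny and Oreshkov assembled immediately above the statement, and simply feed the hypothesis into the chain of inequalities already derived there. Recall that in that framework one writes $PE_a^\dagger E_b P = \lambda_{ab}P + B_{ab}$, where $P$ is the projector onto $\mathcal{C}$ and $\{\lambda_{ab}\}$ is a density matrix, and that by the B\'eny--Oreshkov criterion \cite{BenyOreshkov10} the channel $\mathcal{E}$ is $\epsilon$-correctable for $\mathcal{C}$ precisely when $\mathfrak{B}(\Lambda+\mathcal{B},\Lambda)\le\epsilon$, with $\Lambda$ and $\mathcal{B}$ as in equation~\eqref{eq:Lambda-B-define}. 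Since the hypothesis already hands us a density operator $\lambda_{ab}$, the map $\Lambda$ is a legitimate channel and the criterion applies.

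First I would identify the deviation operators $B_{ab}$ explicitly in the given orthonormal code basis $\{|i\rangle\}$. Because $P|i\rangle=|i\rangle$, we have $\langle i|B_{ab}|j\rangle = \langle i|E_a^\dagger E_b|j\rangle - \lambda_{ab}\delta_{ij}$, so the entrywise hypothesis~\eqref{eq:entrywise-bound} says exactly that every matrix element of $B_{ab}$ in this basis has modulus at most $\delta$; that is, $\|B_{ab}\|_{\mathrm{max}}\le\delta$ for all $a,b$. Then I would substitute this into the bound~\eqref{eq:B-bound-r}, which already establishes $\|(\mathcal{B}\otimes\mathcal{I})(|\psi\rangle\langle\psi|)\|_1 \le r^2(\dim\mathcal{C})\max_{a,b}\|B_{ab}\|_{\mathrm{max}}$, giving $\max_\rho\|(\mathcal{B}\otimes\mathcal{I})(|\psi\rangle\langle\psi|)\|_1 \le r^2(\dim\mathcal{C})\delta$. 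Combining with~\eqref{eq:bound-B-B} yields $2\mathfrak{B}^2(\Lambda+\mathcal{B},\Lambda)\le r^2(\dim\mathcal{C})\delta$, hence $\mathfrak{B}(\Lambda+\mathcal{B},\Lambda)\le r\sqrt{\tfrac12(\dim\mathcal{C})\delta}=\epsilon$. Applying the correctability criterion then completes the argument.

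There is no serious obstacle here: essentially all of the analytic work has been front-loaded into the estimates~\eqref{eq:bound-B-B} and~\eqref{eq:B-bound-r}, and the only conceptual step is recognizing that the entrywise assumption~\eqref{eq:entrywise-bound} is nothing but a max-norm bound $\|B_{ab}\|_{\mathrm{max}}\le\delta$ on the residual operators. The one point deserving a word of care is that the relevant norm feeding into~\eqref{eq:B-bound-r} is indeed the entrywise max norm rather than the operator norm, so the factor $\dim\mathcal{C}$ (from the inequality $\|B_{ab}\|\le(\dim\mathcal{C})\|B_{ab}\|_{\mathrm{max}}$) must be tracked faithfully to land on the stated $\epsilon$; once that is done, the result follows immediately.
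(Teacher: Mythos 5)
Your proposal is correct and follows essentially the same route as the paper's own proof: identifying the hypothesis~\eqref{eq:entrywise-bound} as the max-norm bound $\|B_{ab}\|_{\mathrm{max}}\le\delta$ on the residual operators in the approximate Knill--Laflamme decomposition, feeding it through the pre-established estimates~\eqref{eq:bound-B-B} and~\eqref{eq:B-bound-r}, and invoking the B\'eny--Oreshkov correctability criterion to obtain $\epsilon = r\sqrt{\tfrac12(\dim\mathcal{C})\,\delta}$. Nothing is missing; the argument is complete as stated.
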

\begin{proof}
According to the B\'eny-Oreshkov criterion \cite{BenyOreshkov10}, the channel is $\epsilon$-correctable if $\mathfrak{B}^2(\Lambda + \mathcal{B},\Lambda)\le \epsilon^2$, and from equations \eqref{eq:bound-B-B} and \eqref{eq:B-bound-r} we have 
\begin{equation}
\mathfrak{B}^2(\Lambda + \mathcal{B},\Lambda)\le \frac{1}{2} r^2 (\dim\mathcal{C})\max_{a,b}\|B_{ab}\|_{\mathrm{max}} \le 
\frac{1}{2} r^2 (\dim\mathcal{C})\delta,
\end{equation}
where we derived the last inequality from the definition of the $\|\cdot\|_{\mathrm{max}}$ norm and equation \eqref{eq:entrywise-bound}. This proves the Lemma.
\end{proof}

We will use the following Lemma in the proof of Theorem \ref{thm:controlled_unitary}, as well as in the proof of Theorem \ref{thm:ghost_converse} below.
\begin{lemma}\label{lem:ghostcommute}
Let $\mathcal{C}$ be a code subspace with code projector $P$. Let $T$ be an $\delta$-approximate ghost operator for the channel $\mathcal{E}$ and the set of Kraus operators $K$. Then
\begin{align}\label{eq:[TE]P-bound}
    \|[T,E]P\| \le 2\delta \|\tilde{T}\|
\end{align}
for all $E \in K$.
\end{lemma}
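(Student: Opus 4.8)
The plan is to reduce the bound on $\|[T,E]P\|$ directly to the two defining inequalities of a $\delta$-approximate ghost operator. Since $V$ is an isometry whose image is the code subspace $\mathcal{C}$, the code projector factorizes as $P = VV^\dagger$ with $\|V^\dagger\| = 1$. Hence $\|[T,E]P\| = \|[T,E]VV^\dagger\| \le \|[T,E]V\|$, and it suffices to control $\|(TE - ET)V\|$. So I would forget about $P$ after this first observation and concentrate entirely on the action of the commutator on the encoded states.

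First I would invoke the ghost inequality twice. Applied with the Kraus operator $E$ itself it gives $\|TEV - EV\tilde{T}\| \le \|\tilde{T}\|\delta$, and applied with the identity (which is legitimate, since the defining inequality ranges over $E_a \in K \cup \{I\}$) it gives $\|TV - V\tilde{T}\| \le \|\tilde{T}\|\delta$. I would then insert the intermediate operator $EV\tilde{T}$ and apply the triangle inequality:
\begin{align*}
\|(TE - ET)V\| &\le \|TEV - EV\tilde{T}\| + \|EV\tilde{T} - ETV\| \\
&\le \|\tilde{T}\|\,\delta + \|E\|\,\|V\tilde{T} - TV\|.
\end{align*}
The one quantitative input that matters is that $\|E\| \le 1$: because $E$ is a Kraus operator of a channel, the normalization $\sum_a E_a^\dagger E_a = I$ forces $E^\dagger E \le I$ and hence $\|E\| \le 1$. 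Combining this with the identity-ghost bound controls the second term by $\|\tilde{T}\|\delta$ as well, giving $\|(TE - ET)V\| \le 2\delta\|\tilde{T}\|$ and therefore the claimed $\|[T,E]P\| \le 2\delta\|\tilde{T}\|$.

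There is no real obstacle here; the argument is a direct application of the triangle inequality together with the two ghost inequalities. The only points demanding care are (i) recognizing that the definition of an approximate ghost operator already supplies the identity case $E_a = I$, so that $TV$ is close to $V\tilde{T}$ without any extra work, and (ii) using the Kraus normalization to guarantee $\|E\| \le 1$, which is precisely what keeps the second term from picking up an uncontrolled prefactor. I would note in passing that the bound holds for every Kraus operator appearing in $K$, so when $T$ is a \emph{universal} ghost operator the estimate is independent of the chosen Kraus representation of $\mathcal{E}$.
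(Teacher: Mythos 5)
Your proof is correct and follows essentially the same route as the paper's: the same triangle-inequality decomposition through the intermediate operator $EV\tilde{T}$, the same two invocations of the ghost inequality (with $E_a = E$ and $E_a = I$), the bound $\|E\|\le 1$ from Kraus normalization, and the reduction $\|[T,E]P\|\le\|[T,E]V\|$ via $P=VV^\dagger$ (which the paper performs at the end rather than the beginning, an immaterial reordering).
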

\begin{proof}
Let $V$ be the code embedding. By definition of the ghost operator, we have
\begin{align}
    \|TEV - EV\tilde{T}\| \le \delta \|\tilde{T}\|,
\end{align}
for all $E \in K \cup \{I\}$. Taking $E=I$ gives 
\begin{align}\label{eq:TEV-bound}
    \|TV - V\tilde{T}\| \le \delta \|\tilde{T}\|.
\end{align}
Then we have
\begin{equation}\label{[TE]V-bound}
\begin{aligned}
    \|[T,E]V\|=\|TEV-ETV\| &= \|TEV-EV\tilde{T}+EV\tilde{T}-ETV\|\\
    &\le \|TEV-EV\tilde{T}\|+\|EV\tilde{T}-ETV\|\\
    &\le 2\delta \|\tilde{T}\| + \|E\|\cdot \|V\tilde{T}-TV\|\\
    &\le 2\delta \|\tilde{T}\|,
\end{aligned}
\end{equation}
where in the last line we used equation \eqref{eq:TEV-bound} and the fact that $\|E\|\le 1$ since $E^\dagger E \le I$ implies $\|E^\dagger E\| = \|E\|^2 \le 1$.
We can now obtain equation \eqref{eq:[TE]P-bound} if we can replace $V$ in equation \eqref{[TE]V-bound} by $P$. 
This is justified because, for any operator $A$, we have 
\begin{align}
    \|AP\|=\|AVV^\dagger\|\le \|AV\|\cdot \|V^\dagger\|\le \|AV\|,
\end{align}
where we have used $\|V^\dagger \| \le 1$ in the last line since $V$ is an isometric embedding.
\end{proof}

With these Lemmas in hand, we can proceed to prove:
\begin{theorem}\label{thm:ghost_converse}
Suppose that there exists a complete set of $\delta$-approximate ghost logical operators for the channel $\mathcal{E}$ and its set of Kraus operators $K=\{E_a\}$. Then $\mathcal{E}_\mathcal{I}$ is $\epsilon$-correctable for the code $\mathcal{C}$, where
\begin{align}
    \epsilon = (|K|+1)\sqrt{2(\dim\mathcal{C})\,\delta}.
\end{align}
\end{theorem}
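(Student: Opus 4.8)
The plan is to reduce the statement to the approximate Knill--Laflamme criterion of Lemma~\ref{lem:knill-laflamme}, applied to the channel $\mathcal{E}_\mathcal{I}$ whose Kraus set is $\{E_a/\sqrt{2}\}\cup\{I/\sqrt{2}\}$. Writing $G_a,G_b\in K\cup\{I\}$ for the (unnormalized) Kraus operators, it suffices to exhibit a density operator $\mu_{ab}$ and an orthonormal code basis $\{|i\rangle\}$ for which $\left|\langle i|G_a^\dagger G_b|j\rangle - \delta_{ij}\mu_{ab}\right| = O(\delta)$. Indeed, setting $\lambda_{cd}=\tfrac{1}{2}\mu_{ab}$ supplies the density operator required by \eqref{eq:entrywise-bound} for $\mathcal{E}_\mathcal{I}$: it is positive as a Gram matrix of the vectors $\{F_c|1\rangle\}$, and its trace is $\tfrac{1}{2}\left(\langle 1|\sum_a E_a^\dagger E_a|1\rangle + \langle 1|I|1\rangle\right)=1$ using $\sum_a E_a^\dagger E_a = I$. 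Lemma~\ref{lem:knill-laflamme} with $r=|K|+1$ then delivers the advertised $\epsilon$, the $\tfrac{1}{2}$ normalization of the Kraus operators of $\mathcal{E}_\mathcal{I}$ converting the bound on $G_a^\dagger G_b$ into the bound on $F_c^\dagger F_d$.

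The heart of the argument is an approximate version of the converse computation in Theorem~\ref{thm:ghost_exact2}. I would first fix orthonormal code states $|i\rangle=V|\tilde{i}\rangle$ and $|j\rangle=V|\tilde{j}\rangle$ with $i\neq j$, and take $T_1$ to be the complete-set ghost operator for $\tilde{T}_1=|\tilde{j}\rangle\langle\tilde{j}|-|\tilde{i}\rangle\langle\tilde{i}|$, so that $\|\tilde{T}_1\|=1$ and $T_1=T_1^\dagger$. In the exact case one chains $\langle i|G_a^\dagger G_b|j\rangle = \langle i|G_a^\dagger G_b T_1|j\rangle = \langle i|G_a^\dagger T_1 G_b|j\rangle = -\langle i|G_a^\dagger G_b|j\rangle$, forcing the off-diagonal entry to vanish. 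I would reproduce this chain with each equality replaced by an $O(\delta)$ estimate: the first step costs $\|(I-T_1)|j\rangle\|\le\delta$ from \eqref{eq:ghostapx} with $E_a=I$; the middle step costs $\|[T_1,G_b]|j\rangle\|\le 2\delta$ from Lemma~\ref{lem:ghostcommute}; and the last step uses self-adjointness of $T_1$ together with $\|[T_1,G_a]|i\rangle\|\le 2\delta$ and $\|T_1|i\rangle+|i\rangle\|\le\delta$, throughout bounding $\|G\|\le 1$ (again from $\sum_a E_a^\dagger E_a=I$). Collecting the triangle-inequality terms gives $\left|\langle i|G_a^\dagger G_b|j\rangle\right|=O(\delta)$ for $i\neq j$.

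For the diagonal entries I would run the same estimate with the second ghost operator $T_2$ attached to $\tilde{T}_2=|\widetilde{i+j}\rangle\langle\widetilde{i+j}|-|\widetilde{i-j}\rangle\langle\widetilde{i-j}|$, where $|\widetilde{i\pm j}\rangle=2^{-1/2}(|\tilde{i}\rangle\pm|\tilde{j}\rangle)$, applied to the orthonormal pair $2^{-1/2}(|i\rangle\pm|j\rangle)$. This yields $\left|\langle i-j|G_a^\dagger G_b|i+j\rangle\right|=O(\delta)$; expanding this matrix element and discarding the off-diagonal pieces (already shown to be $O(\delta)$) produces $\left|\langle i|G_a^\dagger G_b|i\rangle - \langle j|G_a^\dagger G_b|j\rangle\right|=O(\delta)$. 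Fixing a reference state $|1\rangle$ and defining $\mu_{ab}=\langle 1|G_a^\dagger G_b|1\rangle$ then establishes \eqref{eq:entrywise-bound} uniformly in $i,j$.

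The main obstacle is purely quantitative. Unlike the exact setting, where correctability of $K\cup\{I\}$ is simply equivalent to the off-diagonal terms vanishing and the diagonal being $i$-independent, here the code is embedded in an exponentially large Hilbert space, so I must ensure that every substitution contributes only an \emph{additive} $O(\delta)$ and never a multiplicative factor growing with $\dim\mathcal{C}$ — this is exactly what the operator-norm commutator bound of Lemma~\ref{lem:ghostcommute} guarantees, and why I work with bra-ket estimates rather than channel distances. The remaining work is to track the accumulated constants through the two ghost-operator chains and the $\tfrac{1}{2}$ rescaling of $\mathcal{E}_\mathcal{I}$'s Kraus operators so that the entrywise bound feeds through Lemma~\ref{lem:knill-laflamme} to precisely $\epsilon=(|K|+1)\sqrt{2(\dim\mathcal{C})\,\delta}$; this is careful bookkeeping but requires no idea beyond the approximate-commutation estimates already available.
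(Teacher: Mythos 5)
Your proposal follows essentially the same route as the paper's own proof: the same reduction to the approximate Knill--Laflamme criterion of Lemma~\ref{lem:knill-laflamme}, the same pair of ghost operators $T_1,T_2$ built from the operators $\tilde{T}_1,\tilde{T}_2$ of Theorem~\ref{thm:ghost_exact2}, the same commutator estimates via Lemma~\ref{lem:ghostcommute}, and the same treatment of the $\tfrac{1}{2}$ normalization of $\mathcal{E}_\mathcal{I}$'s Kraus operators when feeding the entrywise bound into Lemma~\ref{lem:knill-laflamme}. The only substantive difference is that you explicitly charge the $O(\delta)$ cost of $T_1$ being only \emph{approximately} logical (the paper writes $T_1|\phi\rangle=|\phi\rangle$ and $\langle\psi|T_1=-\langle\psi|$ as exact in its opening identity), which is more careful but only shifts the constant factors, not the $O\bigl((|K|+1)\sqrt{(\dim\mathcal{C})\,\delta}\bigr)$ scaling.
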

\begin{proof}
Suppose that there exists a complete set of $\delta$-approximate ghost logical operators for $\mathcal{E}$ with respect to some Kraus decomposition $K=\{E_a\}_{a=1}^r$. We will also define $E_0 = I$. 

Given any two orthogonal code states $|\psi\rangle,|\phi\rangle \in \mathcal{C}$, let us define the operators $\tilde{T}_1$ and $\tilde{T}_2$ as in the proof of Theorem~\ref{thm:ghost_exact2}. Note that $\|\tilde{T}_1\| = \|\tilde{T}_2\| = 1$. Let $T_1$ and $T_2$ be their respective $\delta$-approximate ghost operators. Then, for $0\le a,b\le r$, we get
\begin{equation}
\begin{aligned}
    &\left|2\langle\psi|E_{a}^{\dagger}E_{b}|\phi\rangle\right|=\left|\langle\psi|E_{a}^{\dagger}E_{b}T_1|\phi\rangle-\langle\psi|T_1E_{a}^{\dagger}E_{b}|\phi\rangle\right|\\
    &= \left|\langle\psi|E_{a}^{\dagger}E_{b}T_1|\phi\rangle-\langle\psi|E_{a}^{\dagger}T_1E_{b}|\phi\rangle+\langle\psi|E_{a}^{\dagger}T_1E_{b}|\phi\rangle-\langle\psi|T_1E_{a}^{\dagger}E_{b}|\phi\rangle\right|\\
    &\le\left|\langle\psi|E_{a}^{\dagger}E_{b}T_1|\phi\rangle-\langle\psi|E_{a}^{\dagger}T_1E_{b}|\phi\rangle\right|+\left|\langle\psi|E_{a}^{\dagger}T_1E_{b}|\phi\rangle-\langle\psi|T_1E_{a}^{\dagger}E_{b}|\phi\rangle\right|\\
    &\le\|\left(E_{b}T_1-T_1E_{b}\right)|\phi\rangle\|\|E_{a}|\psi\rangle\|+\|E_{b}|\phi\rangle\|\|\left(T_1E_{a}-E_{a}T_1\right)|\psi\rangle\|\\
    &\le 4\delta,
\end{aligned}
\end{equation}
where in the second-to-last line we used the Schwarz inequality, and in the the last line we used  Lemma~\ref{lem:ghostcommute} and the fact that $\|E_a\|\le 1$. Therefore we have
\begin{align}\label{eq:KL-nondiag}
    \left|\langle\psi|E_{a}^{\dagger}E_{b}|\phi\rangle\right| \le 2\delta.
\end{align}
Repeating the same argument for $\tilde{T}_2$, we likewise get
\begin{align}
    \left|\langle\phi-\psi|E_{a}^{\dagger}E_{b}|\phi+\psi\rangle\right| \le 2\delta.
\end{align}
Then we have
\begin{align}\label{eq:KL-diag}
    &\ \left|\langle\phi|E_{a}^{\dagger}E_{b}|\phi\rangle-\langle\psi|E_{a}^{\dagger}E_{b}|\psi\rangle\right|\nonumber\\
    = &\ \left|\langle\phi|E_{a}^{\dagger}E_{b}|\phi\rangle-\langle\psi|E_{a}^{\dagger}E_{b}|\psi\rangle+\langle\phi|E_{a}^{\dagger}E_{b}|\psi\rangle-\langle\psi|E_{a}^{\dagger}E_{b}|\phi\rangle-\langle\phi|E_{a}^{\dagger}E_{b}|\psi\rangle+\langle\psi|E_{a}^{\dagger}E_{b}|\phi\rangle\right|\nonumber\\
	\le &\ \left|\langle\phi|E_{a}^{\dagger}E_{b}|\phi\rangle-\langle\psi|E_{a}^{\dagger}E_{b}|\psi\rangle +\langle\phi|E_{a}^{\dagger}E_{b}|\psi\rangle-\langle\psi|E_{a}^{\dagger}E_{b}|\phi\rangle\right|+2\left|\langle\phi|E_{a}^{\dagger}E_{b}|\psi\rangle\right|\nonumber\\
	\le &\ 2\left|\langle\phi-\psi|E_{a}^{\dagger}E_{b}|\phi+\psi\rangle\right|+4\delta\nonumber\\
	\le &\ 8\delta.
\end{align}

Now consider an orthonormal basis $\{|i\rangle, i = 0, 1, 2, \dots , \dim\mathcal{C} -1\}$, for the code space and define $\lambda_{ab} = \langle 0 |E_a^\dagger E_b|0\rangle$. Noting that in the equations  \eqref{eq:KL-nondiag} and \eqref{eq:KL-diag}, $|\phi\rangle$ and $|\psi\rangle$ can be any two elements of the orthonormal basis, we see that 
\begin{equation}
    |\langle i|E_a^\dagger E_b|j\rangle| \le 2 \delta
\end{equation}
for $i\ne j$, while
\begin{equation}
    |\langle i|E_a^\dagger E_b |i\rangle - \lambda_{ab} | \le 8 \delta.
\end{equation}
Thus we find that the approximate Knill-Laflamme conditions for $\mathcal{E}_{\mathcal{I}}$ are satisfied:
\begin{align}
    \frac{1}{2}\left|\langle i|E_a^\dagger E_b|j\rangle - \lambda_{ab}\delta_{ij}\right| \le 4\delta.
\end{align}
Note that the factor of $1/2$ comes from the normalization of the Kraus operators for $\mathcal{E}_{\mathcal{I}}$. From Lemma~\ref{lem:knill-laflamme}, this implies that $\mathcal{E}_\mathcal{I}$ is $\epsilon$-correctable for $\mathcal{C}$, where
\begin{align}
    \epsilon = (|K|+1)\sqrt{2(\dim\mathcal{C})\,\delta}.
\end{align}
\end{proof}

\section{Complexity of Controlled Unitary}
\begin{lemma}\label{lem:controlU}
Let $U$ be a unitary of circuit complexity $k$ with respect to some universal $2$-qubit gate set $\mathcal{G}$. Given an ancillary system of $n$ qubits, let $\Lambda_m(U)$ be the operator controlled on the state $|m\rangle$, where $0\le m < 2^n$, i.e.,
\begin{align}
    \Lambda_m(U)(|\ell\rangle\otimes |x\rangle) = |\ell\rangle\otimes U^{\delta_{\ell m}}|x\rangle.
\end{align}
Then given any $\epsilon >0$, the operator $\Lambda_m(U)$ can be implemented with $\epsilon$-precision with circuit complexity $O\left(4^nk\log^4(k/\epsilon)\right)$. 
\end{lemma}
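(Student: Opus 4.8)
The plan is to reduce the controlled unitary to a product of controlled elementary gates, synthesize each one by brute force, and then compile into the finite set $\mathcal{G}$ using the Solovay--Kitaev theorem. Since $U$ has complexity $k$, I would first write $U = g_k g_{k-1}\cdots g_1$ as an exact product of $k$ two-qubit gates $g_i \in \mathcal{G}$. Because all factors share the same control register, controlling commutes with composition:
\begin{equation}
\Lambda_m(U) = \Lambda_m(g_k)\,\Lambda_m(g_{k-1})\cdots \Lambda_m(g_1),
\end{equation}
so it suffices to implement each $\Lambda_m(g_i)$ and multiply the results. This is the key structural step: it keeps the target register of each controlled gate down to the two qubits on which $g_i$ acts, rather than letting it grow with $k$.

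Next I would bound the cost of a single $\Lambda_m(g_i)$. Each such operator is a unitary supported on the $n$ control qubits together with the two target qubits of $g_i$, hence on $n+2$ qubits, and its synthesis is of the full unitary so no special treatment of the particular control pattern $m$ is required. Using the standard exact synthesis result that an arbitrary unitary on $q$ qubits can be realized by $O(4^q)$ CNOT and single-qubit gates, $\Lambda_m(g_i)$ can be built from $O(4^{n+2}) = O(4^n)$ ideal elementary gates. (One can do far better --- an $n$-controlled two-qubit gate needs only $\mathrm{poly}(n)$ ideal gates --- but the crude $O(4^n)$ bound is all that is needed here.) Multiplying over the $k$ factors, $\Lambda_m(U)$ is realized by an ideal circuit of $G = O(4^n k)$ CNOT and single-qubit gates.

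The final step is to replace the ideal single-qubit (and, where needed, two-qubit) gates by words in the universal set $\mathcal{G}$. By the Solovay--Kitaev theorem, each ideal gate can be approximated in operator norm to precision $\delta$ by a product of $O(\log^c(1/\delta))$ gates from $\mathcal{G}$, with $c < 4$. Since the error in a product of unitaries grows at most additively by the triangle inequality, approximating all $G$ gates to precision $\delta = \epsilon/G$ yields an approximation of $\Lambda_m(U)$ to total precision $\epsilon$. The total gate count is then $G\cdot O(\log^c(G/\epsilon)) = O\!\left(4^n k\,\log^4(k/\epsilon)\right)$, which is the claimed bound.

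I expect the main obstacle to be the two-level error bookkeeping: one must verify that a single uniform per-gate Solovay--Kitaev precision $\delta$ can be chosen small enough that the accumulated error over all $O(4^n k)$ gates stays below $\epsilon$, while checking that the resulting $\log^c(1/\delta) = \log^c(G/\epsilon)$ overhead collapses to the stated $\log^4(k/\epsilon)$ --- i.e.\ that the subdominant $n$ appearing inside $\log(4^n k/\epsilon)$ can be absorbed into the $4^n$ prefactor and the exponent $c<4$ bounded by $4$. Everything else is a routine application of textbook circuit-synthesis and Solovay--Kitaev estimates.
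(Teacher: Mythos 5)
Your proposal follows the same skeleton as the paper's proof: write $U = U_k\cdots U_1$ with each $U_i \in \mathcal{G}$, use the fact that controlling distributes over the product, so that $\Lambda_m(U) = \prod_i \Lambda_m(U_i)$ with each factor supported on $n+2$ qubits, and then compile via Solovay--Kitaev with additive error accumulation. Where you diverge is in the final compilation step, and this is where a genuine (though repairable) gap appears. The paper applies the Solovay--Kitaev theorem \emph{directly} to each $(n+2)$-qubit unitary $\Lambda_m(U_i)$ at precision $\epsilon/k$; in that formulation the dimension of $SU(2^{n+2})$ enters only as a multiplicative prefactor $O(4^n)$ (coming from the $\epsilon_0$-net base case of the recursion), while the precision enters only inside the logarithm, giving $O\left(4^n \log^4(k/\epsilon)\right)$ per factor and hence the stated bound. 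You instead first exactly synthesize each $\Lambda_m(U_i)$ into $O(4^n)$ ideal elementary gates and then apply fixed-dimension Solovay--Kitaev to each of the $G = O(4^n k)$ ideal gates at precision $\epsilon/G$. This forces $G$ inside the logarithm: your honest total is
\begin{equation}
O\left(4^n k \log^4\!\left(4^n k/\epsilon\right)\right) = O\left(4^n k \left(n + \log(k/\epsilon)\right)^4\right),
\end{equation}
which is \emph{not} $O\left(4^n k \log^4(k/\epsilon)\right)$ uniformly in the parameters. The absorption you propose in your final paragraph is not legitimate big-$O$ algebra: taking $k$ and $\epsilon$ fixed and $n \to \infty$, your bound grows like $4^n n^4$, and $n^4\, 4^n \neq O(4^n)$ (one could only absorb $n^4$ by enlarging the base of the exponential, e.g.\ into $O\left((4+\delta)^n\right)$).

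Two remarks on severity. First, the gap is immaterial for every use of the lemma in the paper: there $2^n = N+1$ with $N = \mathrm{poly}(|H|)$, so $n = O(\log|H|)$, while the target precision is $2^{-\Omega(|H|)}$, so $n \ll \log(k/\epsilon)$ and your bound coincides with the stated one in that regime. Second, the fix is not a refined error allocation within your route (any splitting of $\epsilon$ across $\Theta(4^n k)$ fixed-size gates puts $4^n$ inside the log); the fix is to invoke Solovay--Kitaev at the level of $SU(2^{n+2})$, as the paper does, so that the dimension dependence stays multiplicative. Alternatively, you could restate the lemma with the weaker bound $O\left(4^n k(n+\log(k/\epsilon))^4\right)$, which your argument does establish and which suffices for the paper's application.
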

\begin{proof}
Let $U = U_k\cdots U_1$ be a decomposition of $U$ into elements of $\mathcal{G}$. To implement $\Lambda_m(U)$ to $\epsilon$-precision, it suffices to implement $\Lambda_m(U_i)$ to $\epsilon/k$-precision for each $1\le i \le k$. Since each $U_i$ is a $2$-qubit gate, it follows that $\Lambda_m(U_i)$ is supported on at most $n+2$ qubits. By the Solovay-Kitaev theorem \cite{kitaev2002classical}, each $\Lambda_m(U_i)$ can be implemented to $\epsilon/k$-precision with $O(4^n\log^4(k/m))$ gates from $\mathcal{G}$. It follows that $U$ itself can be implemented to $\epsilon$-precision with $O(4^nk\log^4(k/\epsilon))$ gates.

\end{proof}

\noindent  The scaling with $n$ can be considerably improved using circuit constructions from \cite{barenco1995elementary}, but Lemma \ref{lem:controlU} will suffice for our purposes. 

\section{What if the radiation is not pseudorandom?}
\label{sec:no_pseudorandomness_consequences}
The central assumption of this paper is that the state of the Hawking radiation $EB$ emitted by a partially evaporated black hole is pseudorandom. Here we ask what happens if this assumption is broken in a particular way.


Suppose $B$ is a single qubit and the pure state of $EBH$ is 
\begin{equation}
|\Psi\rangle_{EBH} = \frac{1}{\sqrt{2}}(|0\rangle_B|\psi_0\rangle_{EH} + |1\rangle_B|\psi_1\rangle_{EH}).
\end{equation}
Consider a Hermitian operator $M_E$ acting on $E$ 
such that $ M_E \otimes Z_B$ can be efficiently measured, where $Z_B$ is the Pauli-$Z$ operator acting on $B$. Suppose that
\begin{equation}
    \langle M_E \otimes Z_B \rangle_{\Psi} - \langle M_E\rangle_{\Psi} \langle Z_B \rangle_{\Psi} = c, \label{eq:connected_correlation}
\end{equation}
where the subscript $\Psi$ indicates that the expectation value is evaluated in the global state $|\Psi\rangle_{EBH}$, or equivalently in the marginal state $\rho_{EB}$. Note that $c=0$ if $\rho_{EB}$ is maximally mixed. Therefore, by definition, if $\rho_{EB}$ is pseudorandom, then $c$ must be exponentially small in $|H|$. It follows that if $c$ is a nonzero constant, independent of $|H|$, then $\rho_{EB}$ is not pseudorandom (though the converse is not necessarily true).

We will now show that, if $c\ne 0$ there cannot be a complete set of logical operators that commute with $M_E$ acting on the code space spanned by $\{|\psi_0\rangle_{EH}, |\psi_1\rangle_{EH}\}$ . Note that because the marginal state $\rho_B$ is maximally mixed, we have $\langle Z_B\rangle_\Psi = 0$, and therefore
\begin{equation}
    2c =  2\langle M_E \otimes Z_B \rangle_{\Psi} = \langle \psi_0| M_E |\psi_0\rangle - \langle \psi_1| M_E |\psi_1\rangle.
\end{equation}
Consider a Hermitian operator $X_L$ on $EH$ that acts on the code basis states $\{|\psi_0\rangle_{EH}, |\psi_1\rangle_{EH}\}$ like the Pauli-$X$ operator:
\begin{equation}
\label{eq:x_l_def}
   X_L |\psi_0\rangle = |\psi_1\rangle,\quad  X_L |\psi_1\rangle = |\psi_0\rangle,
\end{equation}
and notice that
\begin{equation}
  \langle \psi_1|  [X_L,M_E] |\psi_0\rangle = \langle \psi_0| M_E |\psi_0\rangle - \langle \psi_1| M_E |\psi_1\rangle = 2c \ne 0.
\end{equation}
This shows that the commutator $[X_L,M_E]$ is $O(1)$ acting on the code space. Thus no logical Pauli-$X$ operator commutes with $M_E$ acting on the code space, and in particular there can be no complete set of ghost logical operators commuting with $M_E$. 



For this argument we chose the operator acting on $B$ to be $Z_B$, but a similar argument works for any Hermitian operator acting on $B$. Suppose $N_B$ is a Hermitian operator acting on $B$ such that
\begin{equation}
\label{eq:general_correlation}
    \langle M_E \otimes N_B \rangle_{\Psi} - \langle M_E\rangle_{\Psi} \langle N_B \rangle_{\Psi} = c\ne 0.
\end{equation}
Since $N_B$ is Hermitian, we can diagonalize it in a certain basis, and we may assume without loss of generality that $N_B$ is traceless. 
(If $N_B$ is not traceless, we may replace $N_B$ by $N_B' =N_B -\mathrm{Tr}\left(N_B\right) (I/2)$ without modifying equation~\eqref{eq:general_correlation}.) In the basis in which it is diagonal, then, $N_B$ is equal to $Z_B$ up to a nonzero multiplicative constant.

\end{document}